\documentclass[runningheads]{llncs}
 \usepackage{fullpage}
\usepackage{macros}
\usepackage[english]{babel}
\usepackage[utf8]{inputenc} 

\usepackage[T1]{fontenc}    
\renewcommand{\vec}[1]{\boldsymbol{#1}}
\title{Hardness Amplification for (Sparse) LPN}
\author{Divesh Aggarwal \and Rishav Gupta \and Li Zeyong}
\institute{National University of Singapore, \texttt{divesh@comp.nus.edu.sg}
		\and 
		National University of Singapore, \texttt{rishavg@u.nus.edu}
        \and
        National University of Singapore, \texttt{li.zeyong@u.nus.edu}}

\date{}

\begin{document}

\maketitle
\begin{abstract}

  We prove new hardness amplification results for Learning Parity with Noise (\textsf{LPN}) and its sparse variants. In $\mathsf{LPN}_{\eta,n,m}$, the goal is to recover a secret $\vec s\in\mathbb{F}_2^n$ from $m$ noisy linear samples $(\vec a,b)$, where $\vec a\leftarrow \mathbb{F}_2^n$ is uniform and $b=\langle \vec a,\vec s\rangle + e$ with $e\leftarrow \mathrm{Ber}(\eta)$.

Building on the direct-product framework introduced by Hirahara and Shimizu~\cite{HS23}, we show an \emph{instance-fraction amplification} theorem: for any $\varepsilon,\delta>0$, any algorithm that solves $\mathsf{LPN}_{\eta,n,m}$ with success probability $\varepsilon$ can be transformed into an algorithm that succeeds with probability $1-\delta$ on a related \textsf{LPN} distribution with scaled parameters $\mathsf{LPN}_{\eta/k,\;n/k,\;m}$, where
$
k=\Theta\!\left(\frac{1}{\delta}\log\frac{1}{\varepsilon}\right).
$
Equivalently, an algorithm that solves \textsf{LPN} on a {\em small fraction of instances} can be converted into an algorithm that solves \textsf{LPN} on {\em almost all instances}, yielding a self-amplification for a wide range of parameters.

We extend the same amplification approach to \textsf{LPN} over $\mathbb{F}_q$ and to Sparse-$\mathsf{LPN}$, where each query vector $\vec a$ has exactly $\sigma$ nonzero entries. Together, these results establish hardness self-amplification for a broad family of \textsf{LPN}-type problems, strengthening the foundations for assuming the average-case hardness of \textsf{LPN} and its sparse variants.
  
\end{abstract}
\section{Introduction}
\paragraph{Learning Parity with Noise $(\lpn)$.}
Learning Parity with Noise $(\lpn)$ is a noisy linear problem over $\mathbb{F}_2$ and can be viewed as a randomized form of syndrome decoding. In the standard \emph{search} version, one is given a matrix $\vec{A}\in\mathbb{F}_2^{m\times n}$ and a vector
$
\vec{b}=\vec{A}\vec{s}+\vec{e}\in\mathbb{F}_2^m,
$
where the secret $\vec{s}\leftarrow \mathbb{F}_2^n$ is uniform and the noise vector $\vec{e}\leftarrow \operatorname{Ber}(\eta)^m$ has independent Bernoulli coordinates. The goal is to recover $\vec{s}$ from $(\vec{A},\vec{b})$. Since the early 1990s, $\lpn$ and its variants have been used as average-case hardness assumptions in cryptography and learning theory~\cite{BFKL93,BKW03}.

Many constructions rely on $\lpn$ in different parameter regimes. In the constant-noise regime (typically $m=\mathrm{poly}(n)$ and constant $\eta$), it gives light weight shared secret authentication protocols such as~\cite{HB01,JW05}. In lower-noise regimes (often $m=O(n)$ and $\eta=O(n^{-1/2})$), it implies public-key encryption, including dual-Regev-style schemes~\cite{Ale03} and CCA-secure variants~\cite{DMN12,KMP14}. $\lpn$ has also been used for pseudorandom generators and one-way functions~\cite{Pie12}, symmetric-key encryption with adaptive IND-CPA security~\cite{GRS08}, and commitments and zero-knowledge proofs~\cite{JKP12} (see also~\cite{YZW19} for hashing in lower-noise regimes). A practical feature of $\lpn$ is efficiency: operations over $\mathbb{F}_2$ correspond to XOR/AND computations, and Ring-$\lpn$ variants have been studied for efficient authentication and encryption~\cite{HKL12,YGZ18}.

\paragraph{Sparse Learning Parity with Noise.}
Sparse-$\lpn$ is a variant of $\lpn$ where the challenge matrix $\mathbf{A}\in\mathbb{F}_2^{m\times n}$ is \emph{row-sparse}: each row has Hamming weight at most (often exactly) $\sigma=O(1)$. The samples have the form $u=\mathbf{A}\mathbf{s}+\mathbf{e}$ for a uniform secret $\mathbf{s}\in\mathbb{F}_2^n$ and noise $\mathbf{e}\leftarrow \operatorname{Ber}(\eta)^m$ (typically with inverse-polynomial $\eta$). Sparse-$\lpn$ has been used for linear-stretch pseudorandom generators in $\mathrm{NC}^0$~\cite{AIK08}, public-key encryption from related sparse-XOR regimes~\cite{ABW10}, and a number of cryptographic primitives based on noisy local linear constraints, \ ~\cite{IKOS08,BCGI18,BCG+20,BCG+22,CRR21,RRT23,DIJL23,BCCD23}. Sparse $\lpn$ (and the more general Local Functions with Noise model) assumption plays a key role in recent work on indistinguishability obfuscation~\cite{JLS21,JLS22}.

\paragraph{Worst-case to average-case reductions.}
For the closely related Learning with Errors (LWE) problem, worst-case-to-average-case reductions from lattice problems are known and form a main part of its foundations~\cite{Reg05,Pei09,BLPRS13}. For $\lpn$, the known worst-case-to-average-case results are weaker.

The first such reductions for variants of $\lpn$ use \emph{code smoothing}. Brakerski, Lyubashevsky, Vaikuntanathan and Wichs~\cite{BLVW19} gave Fourier-analytic smoothing lemmas for binary linear codes, reducing worst-case instances of a balanced nearest codeword problem at very low error rate (about $\tfrac{\log n}{n}$) to average-case $\lpn$ at very high noise (about $\tfrac{1}{2}-\tfrac{1}{n^c}$). Yu and Zhang~\cite{YZ21} extended this approach and obtained implications for constant-noise $\lpn$ under stronger nearest-codeword assumptions. Debris-Alazard and Resch~\cite{DR22} developed a more general Fourier-analytic framework and clarified limitations of smoothing bounds.

A common feature of these results is that they start from worst-case regimes with extremely low relative error, for example corruption rate $t/m = O(\log n/n)$. In this regime, one can solve the worst-case instance in quasipolynomial time by sampling $r=\Theta(n)$ equations, enumerating the $O(\log n)$ corrupted ones (which holds with high probability), and then applying Gaussian elimination. Thus, these reductions start from worst-case sources that are far from the parameters used in cryptography, and they do not rule out quasipolynomial-time algorithms for average-case $\lpn$ in cryptographic regimes (e.g., constant noise or $\eta=1/n^c$).

These works leave open the following natural question.
\begin{question}[Next best to worst-case-to-average-case for $\lpn$]\em
Can one prove an ``almost'' worst-case-to-average-case statement for \emph{Search-LPN} in cryptographic regimes: namely, that hardness on a small $\delta$ fraction of random instances can be amplified to hardness on almost all random instances (and hence average-case hardness), under a standard parameter transformation?
\end{question}

\paragraph{Sparse-$\lpn$ and worst-case to average-case.}
For Sparse-$\lpn$, the situation is even less developed. Unlike LWE, and unlike the partial progress for $\lpn$ via code smoothing, there are currently no known worst-case-to-average-case reductions for Sparse-$\lpn$ in cryptographic regimes. Moreover, existing approaches based on smoothing and extractor-style ``random hashing'' do not seem to apply in any useful way: smoothing arguments rely on properties of dense random codes, while sparse instances correspond to highly structured, low-density constraints; and applying sparse hashing tends to destroy sparsity or forces parameters into regimes that are algorithmically easy.

This leaves open the following analogue of the question above.
\begin{question}[Next best to worst-case-to-average-case for Sparse-$\lpn$]\em
Can one prove an ``almost'' worst-case-to-average-case statement for \emph{Search Sparse-$\lpn$} in cryptographic regimes: namely, that hardness on a small $\delta$ fraction of random Sparse-$\lpn$ instances can be amplified to hardness on almost all instances (and hence average-case hardness), under a standard parameter transformation that preserves sparsity?
\end{question}

\paragraph{Our perspective: self-amplification.}
We take a different route and work directly with the standard average-case $\lpn$ distribution. Our results are a form of \emph{instance-fraction amplification} for \emph{Search-LPN}: assuming that $\lpn$ is hard on even a small fraction of random instances implies hardness on almost all instances (and hence average-case hardness) at related parameters. Roughly, given an algorithm that recovers the secret on random instances with success probability $\varepsilon$, we build an algorithm that recovers the secret with success probability $1-\delta$ on a $1-\delta$ fraction of instances, for small $\delta$ (with an explicit parameter change). A full worst-case-to-average-case reduction would correspond to the special case $\delta=0$.

Our proof uses the self-amplification framework introduced by Hirahara and Shimizu~\cite{HS23}. We instantiate their framework for $\lpn$ by showing how to combine $k$ independent $\lpn$ instances into a single larger instance with an explicitly transformed noise rate, and then applying the HS23 analysis together with a verification test that succeeds on all but a negligible fraction of instances.

\paragraph{Hirahara and Shimizu's approach.}
Hirahara and Shimizu~\cite{HS23} consider a distributional problem $(f,\mu)$ and its $k$-fold product $(f^{\times k},\mu^{\times k})$. The standard direct-product reduction places an input $x\leftarrow \mu$ into a random coordinate of a $k$-tuple and fills the other coordinates with fresh samples from $\mu$, then queries a solver for the product problem. The question is whether this procedure samples from the solver's easy inputs often enough to give high success on most base inputs. HS23 show that for suitable $k$ (typically $k=\Theta({1}/{\delta} \cdot \log(1/\varepsilon))$), this amplifies success on the product problem into much higher success on the base problem for a $1-\delta$ fraction of inputs.

\subsection{Our contributions and techniques}

\noindent\textbf{From HS23 to $\lpn$.}
We use the hardness self-amplification framework of Hirahara and Shimizu~\cite{HS23}. In our setting, the underlying search problem is Search-$\lpn$: the function $f$ maps an $\lpn$ instance $(A,b)$ to its secret $s$ (and is undefined on the negligible set of non-decodable instances; we handle this using a verification/decodability test).

To apply~\cite{HS23}, we need to relate the $k$-fold product distribution $(f^{\times k},\mu^{\times k})$ to a \emph{single} $\lpn$ distribution. We do this by an explicit product-to-single embedding. Given $k$ independent $\lpn$ instances
$$
(A_i,\; b_i = A_i s_i + e_i)\qquad\text{for }i\in[k]\;,
$$
with $A_i\in \mathbb{F}_2^{m\times (n/k)}$, we form
$$
A \;:=\; [A_1\mid A_2\mid \cdots \mid A_k]\in \mathbb{F}_2^{m\times n},
\qquad
b \;:=\; \bigoplus_{i=1}^k b_i \;=\; A s \oplus e\;,
$$
where $s:=(s_1,\ldots,s_k)\in \mathbb{F}_2^n$ and $e:=\bigoplus_{i=1}^k e_i$. The matrix $A$ is uniform over $\mathbb{F}_2^{m\times n}$, and the noise distribution transforms exactly as
$$
e \sim \operatorname{Ber}(\eta')^m
\qquad\text{with}\qquad
\eta' \;=\; \frac{1-(1-2\eta)^k}{2}\;.
$$
Thus, any solver for a single $\lpn$ instance at parameters $(n,\eta')$ yields a solver for $k$ independent $\lpn$ instances at parameters $(n/k,\eta)$ with the same success probability. Combining this embedding with the sampler analysis of~\cite{HS23} gives our amplification theorems for $\lpn$ (and similarly over $\mathbb{F}_q$).

\vspace{0.5em}
\noindent\textbf{Sparse-$\lpn$.}
For Sparse-$\lpn$, the same embedding suggests that sparsity should scale by a factor of $k$ when we concatenate blocks. The difficulty is that the usual Sparse-$\lpn$ distribution fixes the row weight to be exactly $\sigma$, and this distribution is not closed under concatenation into blocks.

We handle this by introducing an \emph{approximate-sparse} variant in which each entry of $A$ is independently $1$ with probability $\sigma/n$ (so each row has expected weight $\sigma$). This distribution \emph{is} closed under block concatenation: concatenating $k$ independent matrices of dimension $n/k$ with per-entry probability $(\sigma/k)/(n/k)=\sigma/n$ again gives an $n$-column matrix with per-entry probability $\sigma/n$. We prove self-amplification for this approximate-sparse variant using the same product-to-single embedding and~\cite{HS23}.

To return to the standard Sparse-$\lpn$ model (exact row weight $\sigma$), we use a simple filtering reduction: from an approximate-sparse sample, keep only those rows of weight exactly $\sigma$. Conditioned on a row having weight $\sigma$, its distribution is uniform over weight-$\sigma$ vectors, so the retained rows form a valid Sparse-$\lpn$ instance. This incurs only a controlled blowup in the number of samples.

\vspace{0.5em}
\noindent\textbf{$q$-ary extension.}
All steps above extend to $\mathbb{F}_q$ for prime $q$. The embedding is identical, except that the noise sum is over $\mathbb{F}_q$ and the noise parameter becomes
$
\eta' \;=\; 1-(1-\eta)^k.
$

\vspace{0.5em}
\noindent\textbf{Informal summary of results.}
Below, arrows should be read as: “a solver for the bottom problem implies a solver for the top problem,” with an explicit polynomial-time reduction.

\begin{itemize}
\item \textbf{Self-amplification for Search-$\lpn$.}
For suitable $k$ and all relevant parameters, a solver for
$
\lpn_{\eta',\,n,\,m}
$ with success $\varepsilon
$
implies a solver for
$
(\lpn_{\eta,\,n/k,\,m})^{\times k}
\quad\text{with success } \varepsilon,
$
where $\eta'=\frac{1-(1-2\eta)^k}{2}$ (binary) or $\eta'=1-(1-\eta)^k$ ($q$-ary). Using~\cite{HS23} plus verification, this yields amplification:
from success $\varepsilon$ on the product distribution we obtain success at least $1-\delta$ on a $1-\delta$ fraction of base instances (for appropriate $k$).

\item \textbf{Sparse case via approximate-sparse.}
We prove the same amplification statement for the approximate-sparse distribution (per-entry probability $\sigma/n$), and then reduce approximate-sparse to exact-sparse by filtering rows of Hamming weight exactly $\sigma$, with a mild increase in the number of samples. The sparsity rate is preserved (dimension and sparsity scale together).
\end{itemize}

\noindent We expect the same approach to apply to other noisy linear problems whose instance distributions are compatible with direct-product embeddings and admit efficient verification on almost all inputs.

\subsection{Discussion and Comparison to Prior Work}

\paragraph{Why leftover-hash style ``smoothing'' does not help for $\lpn$.}
A standard way to obtain worst-case-to-average-case implications for lattice problems is to apply an extractor-style transformation (see, e.g.,~\cite{GMPW20} in the LWE setting). The basic idea is to left-multiply by a random hash matrix $M\in\F_2^{m\times m}$ whose rows are $\sigma$-sparse, with the goal that for most $A$ the pair $(MA,A)$ is statistically close to $(U,A)$, where $U$ is uniform over $\F_2^{m\times n}$. For such a leftover-hash argument to give small statistical distance, the hash family must have sufficiently large entropy; for $\sigma$-sparse rows this requires roughly
$
\sigma \log\!\Big(\frac{n}{\sigma}\Big)\ \gtrsim\ n
\qquad\Rightarrow\qquad
\sigma\ \gtrsim\ \frac{n}{\log n}.
$
Under this transformation,
$
M(A s + e) \;=\; (MA)s \;+\; Me,
$
so the noise rate changes to
$
\eta' \;=\; \frac{1-(1-2\eta)^{\sigma}}{2}\ \approx\ \sigma\eta
\qquad\text{(for small $\eta$)}.
$
Therefore, if $\sigma \gtrsim n/\log n$ and we want $\eta'$ bounded away from $1/2$, we must have $\eta \lesssim (\log n)/n$. This is a regime where LPN is algorithmically easy when $m=\poly(n)$, since the number of corrupted equations is at most $O(\log n)$ and one can enumerate the error locations and solve by Gaussian elimination. In short, leftover-hash style smoothing pushes the parameters into a regime that is not relevant for cryptography.

For sparse-$\lpn$, the situation is worse: left-multiplication by a dense hash destroys sparsity, while restricting to sparse hashes runs into the same entropy/noise barrier above. We are not aware of a way to obtain useful cryptographic-parameter implications for sparse-$\lpn$ from such smoothing arguments.

\paragraph{Our approach and why it avoids these bottlenecks.}
Our results follow the hardness self-amplification framework of Hirahara and Shimizu~\cite{HS23}. The starting point is that solving $k$ independent instances with probability $\varepsilon$ can, under suitable conditions, be amplified to success close to $1$ on almost all instances. In the HS23 framework, this amplification is obtained by analyzing the standard direct-product reduction as a sampler on the input distribution.

We instantiate this framework for $\lpn$ using a product-to-single embedding: we combine $k$ independent $\lpn$ instances in dimension $n/k$ into one $\lpn$ instance in dimension $n$ by concatenating the challenge matrices and summing the responses. Under this embedding, the noise parameter transforms exactly as
$
\eta'_{\mathrm{bin}} \;=\; \frac{1-(1-2\eta)^k}{2}
\qquad\text{and}\qquad
\eta'_{q} \;=\; 1-(1-\eta)^k,
$
while the dimension scales by $n' = n/k$. Together with the sampler analysis of~\cite{HS23}, this gives quantitative instance-fraction amplification: for appropriate $k$ (e.g., $k=\poly(n)$), a solver with success $\varepsilon$ on the product distribution can be converted into a solver that succeeds with probability at least $1-\delta$ on a $1-\delta$ fraction of instances, with $\delta$ controlled by $k$ (see Section~\ref{sec:hs23} for the precise statement and parameters). Importantly, this amplification holds in the standard cryptographic noise regimes for $\lpn$ (constant noise down to $\eta=1/n^{\alpha}$), without forcing $\eta$ into the easy $(\log n)/n$ range.

For sparse-$\lpn$, we avoid the sparsity blow-up that arises in hashing-based approaches by going through an \emph{approximate-sparse} variant, where each entry of $A$ is $1$ independently with probability $\sigma/n$. This distribution is compatible with the product-to-single embedding. We prove self-amplification for the approximate-sparse variant, and then reduce back to the standard exact-sparse model by filtering rows of weight exactly $\sigma$. This step increases the number of samples by only a controlled factor and loses only negligible success probability, while keeping sparsity at essentially $\sigma$.

\paragraph{Relation to~\cite{HS23}.}
Hirahara and Shimizu~\cite{HS23} illustrate their framework on linear-algebraic tasks such as matrix multiplication, where amplification can be combined with a cancellation identity (e.g., expanding $(A+R)(B+S)$). For $\lpn$, such cancellation identities are not available because the unknown secret $s$ prevents us from generating the analogue of ``random masks'' of the form $R s$ with the correct noise. Our contribution is to show how to apply the HS23 framework to $\lpn$ anyway, by giving a direct-product embedding that stays within the $\lpn$ distribution (including sparse variants) and by using a verification/decodability procedure to support instance-fraction amplification in cryptographic parameter regimes. We leave open the question of obtaining a true worst-case-to-average-case reduction for $\lpn$ with parameters comparable to those achieved in our work.


\subsection{Organization of the Paper} In \cref{sec:prelim}, we recollect some notations and tools useful for our presentation. We also show that for each of the LPN variants considered, a random instance is (essentially) uniquely decodable with all but negligible probability.

In \cref{sec:hs23}, we present a detailed exposition of the self hardness amplification framework introduced in \cite{HS23}.

In \cref{sec:lpn_plain}, we prove self-hardness amplification for the standard 
$\lpn_{q,\eta,n,m}$ problem and its binary analogue $\lpn_{\eta,n,m}$, 
leveraging the direct-product framework of~\cite{HS23}. To extend these 
results to sparse-$\lpn$, we introduce a new variant $\lpn_{\eta,n,m}^{\approx \sigma}$. 
Recall that in the sparse version $\lpn_{\eta,n,m}^{= \sigma}$, each query vector 
$a \in \F_2^n$ has Hamming weight exactly $\sigma$, whereas in the 
approximate-sparse variant $\lpn_{\eta,n,m}^{\approx \sigma}$, each coordinate of $\vec a$ 
is sampled so that the expected Hamming 
weight is $\sigma$. 

In \cref{sec:sparse_section}, we establish hardness self-amplification for this approximate-sparse variant, and then give a reduction 
from $\lpn_{\eta,n,M}^{\approx \sigma}$ to the exact-sparse version 
$\lpn_{\eta,n,m}^{= \sigma}$ by discarding non-$\sigma$-sparse samples. While this 
incurs a mild blowup in sample complexity, the loss in success probability is 
only negligible. Combining these steps, we obtain a reduction from 
$\lpn_{\eta,n,M}^{\approx \sigma}$ with success $1-\delta$ to 
$\lpn_{k\eta,kn,m}^{= k\sigma}$ with success $\varepsilon$, for both the binary 
and $q$-ary variants, thereby strengthening the hardness foundations of 
$\lpn$ as well as Sparse-$\lpn$.


\section{Preliminaries}\label{sec:prelim}

We use the following notations. For a set $S$, we write $\vec{x} \sim S$ to mean that $\vec{x}$ is chosen uniformly at random from $S$. For a distribution $\mathcal{D}$,we  write $\vec{x} \sim \mathcal{D}$ to mean that $\vec{x}$ is drawn according to $\mathcal{D}$. We also use the notation $\binom{[n]}{s}$ to denote the set of all binary vectors in $\F_2^n$ of Hamming weight exactly $s$. We use the notation $\norm{\vec x}_0$ in order to denote the hamming weight of $\vec x$.

\subsection{Information Theory and Probability}
We begin by defining the $q$-ary entropy function, $H_q$.
\begin{definition}[$q$-ary entropy function]
For $\rho\in[0,1]$, the $q$-ary entropy function is defined as follows,
\[
H_q(\rho)\;:=\;\rho\log_q(q-1)\;-\;\rho\log_q\rho\;-\;(1-\rho)\log_q(1-\rho).
\]
\end{definition}
We can use the $q$-ary entropy function in order to estimate the volume of a Hamming ball.
\begin{lemma}[Entropy Bound \cite{guruswami2012essential}]
For integers $n\ge 1$ and $0\le k\le n$, let $B_q(k,n)$ denote the number of vectors in $\F_q^n$ of Hamming weight at most $k$ then,
\[
q^{\,nH_q(\rho) -o(n)}
\;\le\;
B_q( \rho n,n)
\;\le\;
q^{\,nH_q(\rho)},
\qquad\text{for }\rho\in[0,1-1/q].
\]

\end{lemma}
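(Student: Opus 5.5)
Writing $k=\rho n$ with $\rho\le 1-1/q$, we have $B_q(\rho n,n)=\sum_{i=0}^{\rho n}\binom{n}{i}(q-1)^i$, since a vector of weight exactly $i$ is a choice of $i$ support positions together with a nonzero value in each. The upper bound comes from a binomial-weighting argument, and the lower bound from keeping only the single largest term and estimating it with Stirling's formula.

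For the upper bound, expand
\[
1=\bigl(\rho+(1-\rho)\bigr)^n=\sum_{i=0}^{n}\binom{n}{i}(q-1)^i\Bigl(\frac{\rho}{q-1}\Bigr)^i(1-\rho)^{n-i}.
\]
Now drop the terms with $i>\rho n$. The hypothesis $\rho\le 1-1/q$ is equivalent to $\frac{\rho}{(q-1)(1-\rho)}\le 1$. Hence $\bigl(\frac{\rho}{q-1}\bigr)^i(1-\rho)^{n-i}$ is non-increasing in $i$, and for $i\le\rho n$ it is at least $\bigl(\frac{\rho}{q-1}\bigr)^{\rho n}(1-\rho)^{(1-\rho)n}$. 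Taking $\log_q$ of the reciprocal of this quantity gives exactly $nH_q(\rho)$. Therefore
\[
B_q(\rho n,n)\cdot q^{-nH_q(\rho)}\le 1,
\]
which is the claimed upper bound.

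For the lower bound, use $B_q(\rho n,n)\ge\binom{n}{\rho n}(q-1)^{\rho n}$. Stirling's formula gives $\binom{n}{\rho n}\ge \frac{1}{n+1}\,2^{nH_2(\rho)}$; one can also see this directly, since $\binom{n}{\rho n}$ is the largest of the $n+1$ terms in the type-class expansion of $1=(\rho+(1-\rho))^n$. Converting to base $q$, $\binom{n}{\rho n}(q-1)^{\rho n}\ge q^{nH_q(\rho)}/(n+1)$, and $\log_q(n+1)=o(n)$.

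The main obstacle is that $\rho n$ need not be an integer. I would handle this by replacing $\rho n$ with $\lfloor\rho n\rfloor$. The entropy function is continuous and $|\lfloor\rho n\rfloor/n-\rho|\le 1/n$, so the resulting loss in the exponent is $o(n)$ uniformly in $\rho$. The upper bound needs no rounding: it only requires $i\le\rho n$, which holds for every term in the sum.
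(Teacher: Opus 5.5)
Your proof is correct. The paper gives no proof of its own and only cites \cite{guruswami2012essential}; your argument is the standard one from that reference: the weighted expansion of $1=(\rho+(1-\rho))^n$ for the upper bound, and the single term $\binom{n}{\rho n}(q-1)^{\rho n}$ for the lower bound. Your type-class bound $\binom{n}{\rho n}\ge 2^{nH_2(\rho)}/(n+1)$ works in place of Stirling, and rounding to $\lfloor\rho n\rfloor$ with uniform continuity of $H_q$ correctly handles non-integral $\rho n$.
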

The following lemma is the standard Chernoff's Bound.
\begin{lemma}[Chernoff's Bounds]
Let $X = \sum_{i=1}^m X_i$, where $X_i$ are independent Bernoulli random variables, and let $\mu = \mathbb{E}[X]$. Then for any $0 < \delta < 1$,
\[
\Pr\!\left[X \leq (1-\delta)\mu \right] \;\leq\; \exp\!\left(-\frac{\delta^2}{2}\mu\right),
\]
and for any $\delta > 0$,
\[
\Pr\!\left[X \geq (1+\delta)\mu \right] \;\leq\; \exp\!\left(-\frac{\delta^2}{2+\delta}\mu\right).
\]
\end{lemma}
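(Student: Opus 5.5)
The statement is the standard multiplicative Chernoff bound, so the plan is the exponential-moment method: bound the moment generating function of $X$, apply Markov's inequality to $e^{tX}$, and optimise over $t$. Write $p_i=\Pr[X_i=1]$, so $\mu=\sum_i p_i$. For any real $t$, independence gives
\[
\mathbb{E}[e^{tX}]=\prod_{i=1}^m\bigl(1+p_i(e^t-1)\bigr)\le \prod_{i=1}^m \exp\bigl(p_i(e^t-1)\bigr)=\exp\bigl(\mu(e^t-1)\bigr),
\]
using $1+x\le e^x$. This single estimate drives both tails.

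For the upper tail, fix $t>0$ and apply Markov's inequality to $e^{tX}$. This gives $\Pr[X\ge(1+\delta)\mu]\le \exp\bigl(\mu(e^t-1)-t(1+\delta)\mu\bigr)$. Choosing $t=\ln(1+\delta)$ yields the classical bound $\bigl(e^{\delta}/(1+\delta)^{1+\delta}\bigr)^{\mu}$. It then remains to show $(1+\delta)\ln(1+\delta)-\delta\ge \delta^2/(2+\delta)$ for all $\delta>0$. This is the one step that takes care, since it must hold for every $\delta>0$ and not just small $\delta$. I would use the standard inequality $\ln(1+\delta)\ge 2\delta/(2+\delta)$ for $\delta\ge 0$. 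To verify it, note that both sides vanish at $0$ and compare derivatives: $1/(1+\delta)\ge 4/(2+\delta)^2$ is equivalent to $(2+\delta)^2\ge 4(1+\delta)$, i.e.\ $\delta^2\ge0$. Substituting gives $(1+\delta)\cdot\frac{2\delta}{2+\delta}-\delta=\frac{\delta^2}{2+\delta}$, which is the claimed exponent.

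For the lower tail, fix $t>0$ and apply Markov's inequality to $e^{-tX}$. This gives $\Pr[X\le(1-\delta)\mu]\le \exp\bigl(\mu(e^{-t}-1)+t(1-\delta)\mu\bigr)$. Choosing $t=-\ln(1-\delta)$, which is valid for $0<\delta<1$, yields $\bigl(e^{-\delta}/(1-\delta)^{1-\delta}\bigr)^{\mu}$. It remains to show $(1-\delta)\ln(1-\delta)\ge -\delta+\delta^2/2$. The Taylor expansion $(1-\delta)\ln(1-\delta)=-\delta+\sum_{j\ge2}\frac{\delta^j}{j(j-1)}$ has all higher-order terms nonnegative, which gives the inequality immediately. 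Alternatively, set $g(\delta)=(1-\delta)\ln(1-\delta)+\delta-\delta^2/2$ and check that $g(0)=g'(0)=0$ and $g''(\delta)=\frac{1}{1-\delta}-1\ge0$.

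The main obstacle, though a mild one, is the elementary inequality in the upper tail, because it must hold uniformly in $\delta>0$. The lower tail and the moment generating function bound are routine.
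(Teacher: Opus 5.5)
Your proof is correct: the moment-generating-function bound $\mathbb{E}[e^{tX}]\le\exp(\mu(e^t-1))$ and the choices $t=\ln(1+\delta)$ and $t=-\ln(1-\delta)$ are right, and so are the two elementary inequalities $\ln(1+\delta)\ge 2\delta/(2+\delta)$ and $(1-\delta)\ln(1-\delta)\ge -\delta+\delta^2/2$. The paper gives no proof of its own and simply cites this as the standard Chernoff bound, so your argument is exactly the textbook exponential-moment derivation it relies on.
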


We  now define the Bernoulli distribution.
\begin{definition}[Bernoulli Distribution]
A boolean random variable $X$ is said to follow a \emph{Bernoulli distribution} with parameter $\eta \in [0,1]$,  
denoted $X \sim \Ber(\eta)$, if
\[
\Pr[X = 1] = \eta
\quad \text{and} \quad
\Pr[X = 0] = 1 - \eta.
\]
\end{definition}
The following lemma describes the distribution generated by sum of $k$ identical and independent Bernoulli random variables over $\F_2$.
\begin{lemma}

\label{ber_2}
Let $X_1,\dots,X_k$ be independent random variables where $X_i \sim \Ber(\eta)$,
then  the following random variable,
\[
S \;=\; X_1 \oplus X_2 \oplus \cdots \oplus X_k,
\]
follows the distribution $\Ber(\theta_k(\eta))$, where 
$\theta_k(\eta)= \frac{1-(1-2\eta)^k}{2}.
$
\end{lemma}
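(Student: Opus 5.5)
The plan is an induction on $k$. The key tool is the bias, i.e.\ the expectation of the $\pm 1$ encoding of a Bernoulli variable. For a boolean $X$, set $\chi(X) := (-1)^X$. Then $\mathbb{E}[\chi(X)] = (1-\eta) - \eta = 1-2\eta$ when $X\sim\Ber(\eta)$. Conversely, a boolean variable $S$ is distributed as $\Ber(\theta)$ exactly when $\mathbb{E}[(-1)^S] = 1-2\theta$, since $\Pr[S=1] = \tfrac{1-\mathbb{E}[(-1)^S]}{2}$. So it suffices to compute $\mathbb{E}[(-1)^{S}]$.

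Since $\chi$ converts XOR into multiplication, we have $(-1)^{X_1\oplus\cdots\oplus X_k} = \prod_{i=1}^k (-1)^{X_i}$. By independence of the $X_i$, the expectation of the product factors as $\prod_{i=1}^k \mathbb{E}[(-1)^{X_i}] = (1-2\eta)^k$. Therefore $\Pr[S=1] = \frac{1-(1-2\eta)^k}{2} = \theta_k(\eta)$. Since $S$ is $\{0,1\}$-valued, this determines its distribution as $\Ber(\theta_k(\eta))$.

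An alternative route, if a more elementary argument is preferred, is induction on $k$. Write $S_k = S_{k-1}\oplus X_k$, with $S_{k-1}$ independent of $X_k$. Then $\Pr[S_k=1] = \theta_{k-1}(1-\eta) + (1-\theta_{k-1})\eta$. The identity $1-2\theta_k = (1-2\theta_{k-1})(1-2\eta)$ closes the induction, with base case $\theta_1 = \eta$.

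There is no real obstacle here. The only points needing care are that independence is used precisely to factor the expectation of the product, and that a $\{0,1\}$-valued variable is determined by its mean. I would present the character-based argument, since it is one line and extends directly to the $q$-ary analogue used later.
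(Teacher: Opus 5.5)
Your proof is correct, and the character-based argument you choose to present is exactly the paper's: set $Y_i=(-1)^{X_i}$ with $\mathbb{E}[Y_i]=1-2\eta$, use independence to factor $\mathbb{E}[(-1)^S]=\prod_i \mathbb{E}[Y_i]=(1-2\eta)^k$, and read off $\Pr[S=1]=\frac{1-(1-2\eta)^k}{2}$. The induction you sketch as an alternative is also valid, though your opening sentence announcing ``an induction on $k$'' does not match the argument you actually present.
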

\begin{proof}
Let $Y_i := (-1)^{X_i}\in\{\pm1\}$. Then $\E[Y_i] = (1-\eta)\cdot 1 + \eta\cdot(-1) = 1-2\eta$. Notice that 
\[
(-1)^{S} := Y_1 \cdot Y_2 \cdots Y_k \;.
\]
Since $Y_1, \ldots, Y_k$ are independently distributed, we have that 
\[
\E\big[(-1)^S\big] \;=\; \E\Big[\prod_{i=1}^k (-1)^{X_i}\Big]
\;=\; \prod_{i=1}^k \E[(-1)^{X_i}]
\;=\; (1-2\eta)^k.
\]
But $\E[(-1)^S] = \Pr[S=0]-\Pr[S=1] = 1 - 2\,\Pr[S=1]$.  
Hence 
\[
\Pr[S=1] \;=\; \frac{1-(1-2\eta)^k}{2} \;=:\; \theta_k(\eta)\;.
\]
Therefore $S \sim \Ber(\theta_k(\eta))$.
\qed\end{proof}
We now extend the notion of the Bernoulli distribution to finite fields $\F_q$.
\begin{definition}[General Bernoulli Distribution]
    A random variable $X \in \F_q$ is said to follow the distribution $\Ber_q(\eta)$, if it samples a uniform element from $\F_q$ with probability $\eta$, otherwise returns $0$. More formally, 
\[
X \sim {\Ber}_q(\eta) \quad \text{if} \quad
\Pr[X = x] =
\begin{cases}
(1-\eta)+ \dfrac{\eta}{q} & \text{if } x = 0, \\[8pt]
\dfrac{\eta}{q} & \text{if } x \in \F_q \setminus \{0\}.
\end{cases}
\]
\end{definition}
Observe that $\Ber(\eta)$ distribution is same as $\Ber_2(2 \eta)$ distribution. The only reason we don't unify both definitions is to keep both definitions consistent with those in the literature. 

In
the following lemma we describe the distribution generated by sum of $k$ identical and independent general Bernoulli random variables.
\begin{lemma}
\label{ber_q}
Let $X_1,\dots,X_k$ be independent random variables where,

$X_i \sim \Ber_q(\eta)$ 
then the following random variable,
\[
S \;=\; X_1 + X_2 + \cdots + X_k \mod q,
\]
over $\F_q$ is distributed as $\Ber_q(\phi_k(\eta))$, where 
$\phi_k(\eta)= {1-(1-\eta)^k}$.
\end{lemma}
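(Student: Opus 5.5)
The proof will follow the mixture (coupling) interpretation of $\Ber_q(\eta)$ rather than a Fourier computation. I will write each $X_i$ as follows. Let $B_i\sim\Ber(\eta)$ be a Boolean ``noisy'' indicator and $U_i\sim\F_q$ a uniform element, all mutually independent. Set $X_i := B_i\cdot U_i$, meaning $X_i=U_i$ if $B_i=1$ and $X_i=0$ otherwise. By the definition of $\Ber_q(\eta)$, this has exactly the law of $X_i$: it equals $0$ with probability $(1-\eta)+\eta/q$ and each nonzero value with probability $\eta/q$. Since the $X_i$ are independent, we may assume without loss of generality that they are realized this way jointly.

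Next I will condition on the event $E:=\{B_1=\cdots=B_k=0\}$, which has probability $(1-\eta)^k$ by independence. On $E$ we have $S=0$. On the complement $\bar E$, some index $j$ has $B_j=1$. I claim that, conditioned on $\bar E$, $S$ is uniform on $\F_q$. To see this, condition further on the entire vector $(B_1,\dots,B_k)$ with some $B_j=1$; let $j$ be the smallest such index. Then $S=U_j+R$, where $R$ is a function of $\{U_i\}_{i\ne j}$ and the $B$'s, and is therefore independent of $U_j$. Adding a uniform element of the group $\F_q$ to an independent element gives a uniform element, so $S$ is uniform given each such $B$-vector, hence uniform given $\bar E$.

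Combining the two cases, $S$ is $0$ with probability $1-\phi_k(\eta)$ and uniform with probability $\phi_k(\eta)=1-(1-\eta)^k$, which is precisely $\Ber_q(\phi_k(\eta))$. Explicitly, $\Pr[S=0]=(1-\eta)^k+\phi_k(\eta)/q$ and $\Pr[S=x]=\phi_k(\eta)/q$ for $x\neq 0$. The only real subtlety is justifying the uniformity on $\bar E$ via independence of $U_j$ from the remaining terms. Primality of $q$ is not needed; only the group structure of $(\F_q,+)$ is used. As an alternative, I could argue by induction on $k$, showing that the convolution of $\Ber_q(\alpha)$ with $\Ber_q(\beta)$ equals $\Ber_q(1-(1-\alpha)(1-\beta))$ by the same argument with $k=2$.
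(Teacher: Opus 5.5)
Your proof is correct and follows the paper's argument: the same decomposition $X_i = B_i U_i$, the same split on whether all $B_i=0$, and the same observation that otherwise $S=U_j+R$ with $R$ independent of $U_j$, so $S$ is uniform. Conditioning on the full vector $(B_1,\dots,B_k)$ and fixing the smallest $j$ with $B_j=1$ justifies that independence more carefully than the paper, which conditions only on the existence of such an index.
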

\begin{proof}
Use the equivalent sampling rule for $\Ber_q(\eta)$: with probability $1-\eta$ output $0$, and with probability $\eta$ output a uniform element of $\F_q$.  
Write $X_i = B_i U_i$, where $B_i \sim \Ber(\eta)$ and $U_i \sim \mathrm{Unif}(\F_q)$ are independent (and all pairs $(B_i,U_i)$ are independent across $i$). Then
\[
S \;=\; \sum_{i=1}^k B_i U_i.
\]
If all $B_i=0$ (which happens with probability $(1-\eta)^k$), then $S=0$. Otherwise, conditioned on the event that, there exists at least one $i$ such that $B_i = 1$, the random variable $S$ is $U_i + Y$, where $Y = \sum_{j \neq i} B_j U_j$ is independent of $U_i$, and hence $S$ is uniform in $\F_p$. This implies that
\[
\Pr[S=0] \;=\; (1-\eta)^k \cdot 1 \;+\; \big(1-(1-\eta)^k\big)\cdot \frac{1}{q}
\;=\; \big(1-\phi_k(\eta)\big) + \frac{\phi_k(\eta)}{q},
\]
and for each $a\neq 0$,
\[
\Pr[S=a] \;=\; \big(1-(1-\eta)^k\big)\cdot \frac{1}{q}
\;=\; \frac{\phi_k(\eta)}{q}.
\]
This shows that $S$ id distributed as $\Ber_q\!\big(\phi_k(\eta)\big)$.
\qed\end{proof}

Note, both $\theta_k(\eta)$ and $\phi_k(\eta)$ are smaller than $k{\eta}{}$. The following is the standard notion of statistical distance.


\begin{definition}[Statistical Distance]
Let $\mathcal{D}_1$ and $\mathcal{D}_2$ be two probability distributions over the same finite domain $\mathcal X$. Let $\Pr_{\mathcal D}[x]$ denote the probability that the random variable sampled from the distribution $\mathcal D$ takes the value $x$.
 
The \emph{statistical distance} (or total variation distance) between them is defined as
\[
\Delta(\mathcal{D}_1, \mathcal{D}_2) \;=\; 
\frac{1}{2}\sum_{x \in \mathcal X} \left| \Pr_{\mathcal{D}_1}[x] - \Pr_{\mathcal{D}_2}[x] \right|.
\]
\end{definition}

The following is the standard data processing inequality.
\begin{lemma}[Data Processing Inequality]\label{lem:data_processing}
Let $P,Q$ be discrete random variables on a finite set. For any (possibly randomized) function $f$,
\[
\Delta\bigl(f(P),\,f(Q)\bigr)\;\le\;\Delta(P,Q).
\]
\end{lemma}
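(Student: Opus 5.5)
The statement is the data processing inequality. The plan is to reduce the randomized case to the deterministic one, then prove the deterministic case by grouping the terms of the statistical-distance sum according to the fiber of $f$.

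\textbf{Randomized case.} Model a randomized $f$ as a deterministic function $g(x,r)$, where $r$ is drawn from some distribution $R$ independent of the input. Then $f(P)$ has the law of $g(P,R)$ and $f(Q)$ has the law of $g(Q,R)$. Since $P$ and $Q$ live on a finite set, I would either assume $R$ is finitely supported or view $f$ as a Markov kernel $K(y\mid x)$; the computation below goes through verbatim with kernels. The key observation is
\[
\Delta\bigl((P,R),(Q,R)\bigr)=\tfrac12\sum_{x,r}\Pr[R=r]\,\bigl|\Pr[P=x]-\Pr[Q=x]\bigr|=\Delta(P,Q).
\]
So it suffices to show that a deterministic map does not increase statistical distance.

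\textbf{Deterministic case.} For deterministic $g$ and any output $y$, write $\Pr[g(P)=y]=\sum_{x\in g^{-1}(y)}\Pr[P=x]$, and similarly for $Q$. By the triangle inequality,
\[
\bigl|\Pr[g(P)=y]-\Pr[g(Q)=y]\bigr|\le\sum_{x\in g^{-1}(y)}\bigl|\Pr[P=x]-\Pr[Q=x]\bigr|.
\]
Summing over $y$, the fibers $g^{-1}(y)$ partition the domain. The right-hand side therefore totals $\sum_x|\Pr[P=x]-\Pr[Q=x]|$, and halving gives the claim.

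In the kernel formulation the same step reads
\[
\tfrac12\sum_y\Bigl|\sum_x K(y\mid x)\bigl(\Pr[P=x]-\Pr[Q=x]\bigr)\Bigr|\le\tfrac12\sum_x\bigl|\Pr[P=x]-\Pr[Q=x]\bigr|\sum_y K(y\mid x),
\]
and $\sum_y K(y\mid x)=1$.

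\textbf{Main obstacle.} None of these steps is hard. The only point needing care is making the randomness of $f$ precise, which the kernel viewpoint handles cleanly. If $f$ has infinite output space, the sums over $y$ are absolutely convergent, so the rearrangement is justified.
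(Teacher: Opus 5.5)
Your proof is correct. The paper gives no proof of this lemma; it quotes it as the standard data processing inequality, and your argument is the standard one, so there is no competing route to compare.

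One simplification: the Markov-kernel computation on its own already proves the full statement. A deterministic map is the special case $K(y\mid x)=1$ if and only if $y=g(x)$. So the detour through $(P,R)$ and the fiber decomposition is redundant, though harmless. The assumption that actually matters, which you state explicitly, is that $f$ applies the same kernel to $P$ and to $Q$, with its internal randomness independent of the input.
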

\subsection{Stirling's Approximation}
\begin{lemma}[Stirling-Robbins Bound \cite{Robbins1955}]
\label{Rob}
For every integer $n \ge 1$,
\[
\sqrt{2\pi n}\,\Bigl(\tfrac{n}{e}\Bigr)^{\!n}\, e^{\frac{1}{12n+1}}
\;<\;
n!
\;<\;
\sqrt{2\pi n}\,\Bigl(\tfrac{n}{e}\Bigr)^{\!n}\, e^{\frac{1}{12n}}\, .
\]
\end{lemma}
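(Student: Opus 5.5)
The statement is the Stirling--Robbins bound (Lemma~\ref{Rob}). I will prove it by the standard route: control the error term $r_n := \ln n! - \bigl(n+\tfrac12\bigr)\ln n + n - \tfrac12\ln(2\pi)$ and show that $\tfrac{1}{12n+1} < r_n < \tfrac{1}{12n}$. Exponentiating these two inequalities gives exactly the claimed two-sided bound.

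\textbf{Step 1: the increments $d_n := r_n - r_{n+1}$.} A direct computation gives
\[
d_n \;=\; \bigl(n+\tfrac12\bigr)\ln\frac{n+1}{n} - 1 .
\]
Substitute $x = \tfrac{1}{2n+1}$, so that $\tfrac{n+1}{n} = \tfrac{1+x}{1-x}$. Then use the series $\tfrac12\ln\tfrac{1+x}{1-x} = \sum_{j\ge0}\tfrac{x^{2j+1}}{2j+1}$. This yields
\[
d_n \;=\; \sum_{j\ge1}\frac{x^{2j}}{2j+1},
\]
which is positive, so $r_n$ is decreasing.

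\textbf{Step 2: two-sided bounds on $d_n$.}
For the upper bound, bound each coefficient $\tfrac{1}{2j+1}$ by $\tfrac13$ and sum the geometric series:
\[
d_n \;<\; \frac{x^2}{3(1-x^2)} \;=\; \frac{1}{12n} - \frac{1}{12(n+1)}.
\]
For the lower bound, keep only the first term, or compare with a suitable geometric series:
\[
d_n \;>\; \frac{x^2}{3} \;>\; \frac{1}{12n+1} - \frac{1}{12(n+1)+1}.
\]
The last inequality is elementary algebra. Set $x^2 = \tfrac{1}{(2n+1)^2}$, so the left side is $\tfrac{1}{3(2n+1)^2} = \tfrac{1}{12n^2+12n+3}$, while the right side is $\tfrac{12}{(12n+1)(12n+13)} = \tfrac{12}{144n^2+168n+13}$. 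The inequality therefore reduces to $144n^2+168n+13 > 144n^2+144n+36$, i.e.\ $24n > 23$, which holds for all $n \ge 1$.

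\textbf{Step 3: telescoping.} By Step 2, the sequence $r_n - \tfrac{1}{12n}$ is increasing and $r_n - \tfrac{1}{12n+1}$ is decreasing. Hence both sequences converge to a common limit $C := \lim r_n$. Monotone convergence then gives strict inequalities:
\[
C + \frac{1}{12n+1} \;<\; r_n \;<\; C + \frac{1}{12n}.
\]

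\textbf{Step 4: identifying the constant.} The remaining task is to show $C = 0$, i.e.\ that $n! \big/ \bigl(\sqrt{n}\,(n/e)^n\bigr) \to \sqrt{2\pi}$. I expect this to be the main step, though it is classical. I would use the Wallis product
\[
\frac{\pi}{2} \;=\; \lim_{n\to\infty} \frac{2^{4n}(n!)^4}{\bigl((2n)!\bigr)^2(2n+1)}.
\]
Substituting $n! \sim e^{C}\sqrt{2\pi n}\,(n/e)^n$ into the Wallis expression, the powers cancel and the limit evaluates to $e^{2C}\pi/2$. Setting this equal to $\pi/2$ forces $C = 0$. Alternatively, since this is a standard cited fact~\cite{Robbins1955}, one may simply invoke the classical Stirling limit for this step.
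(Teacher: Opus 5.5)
Your proof is correct. The paper does not prove this lemma and only cites Robbins (1955); your argument is essentially Robbins' original proof, which expands $r_n - r_{n+1}$ via the series for $\frac12\ln\frac{1+x}{1-x}$ with $x=\frac{1}{2n+1}$ and telescopes against $\frac{1}{12n}$ and $\frac{1}{12n+1}$, with two harmless differences: you bound $r_n - r_{n+1}$ from below by its first term alone (valid since $24n>23$) instead of a geometric minorant, and you fix $C=0$ via Wallis's product rather than quoting the Stirling limit.
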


\begin{corollary}[Stirling approximation for binomial coefficients] \label{lem:aprox_bion}
For integers $n \geq s \geq 1$, the binomial coefficient satisfies,
\[
\frac{1}{\sqrt{8s(1-\tfrac{s}{n})}}
  \cdot 
  \frac{n^n}{s^s (n-s)^{\,n-s}}
  \;\;\leq\;\;
  {\binom{n}{s}}
  \;\;\leq\;\;
\frac{1}{\sqrt{2\pi s(1-\tfrac{s}{n})}}
  \cdot 
  \frac{n^n}{s^s (n-s)^{\,n-s}}.
\]
\end{corollary}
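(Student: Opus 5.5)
The plan is to derive Corollary~\ref{lem:aprox_bion} directly from the Stirling--Robbins bound (Lemma~\ref{Rob}), by writing $\binom{n}{s} = \frac{n!}{s!\,(n-s)!}$ and applying the upper bound of Lemma~\ref{Rob} to the numerator and the lower bound to the two factorials in the denominator (for the upper estimate), and the reverse for the lower estimate. The edge case $s=n$ should be treated separately: there $1-\frac{s}{n}=0$, so both sides are degenerate. I would either restrict to $1\le s\le n-1$, which is the only meaningful range, or read the expressions with the convention $0^0=1$ and note that the factor $\frac{1}{\sqrt{s(1-s/n)}}$ is then infinite, making the bounds vacuous. So assume $1\le s\le n-1$, so that $n-s\ge 1$ and Lemma~\ref{Rob} applies to all three factorials.

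The main-term computation is routine. The powers of $e$ cancel, since $e^{-n}/(e^{-s}e^{-(n-s)})=1$, leaving $\frac{n^n}{s^s(n-s)^{n-s}}$. The square-root prefactors combine to
\[
\frac{\sqrt{2\pi n}}{\sqrt{2\pi s}\sqrt{2\pi(n-s)}} \;=\; \frac{1}{\sqrt{2\pi s(1-\frac{s}{n})}}.
\]
What remains is to control the exponential error factors.

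For the upper bound, the error factor is
\[
\exp\!\Big(\tfrac{1}{12n} - \tfrac{1}{12s+1} - \tfrac{1}{12(n-s)+1}\Big).
\]
Since $s\le n$, we have $12s+1\le 12n+1$, so $\frac{1}{12s+1}\ge\frac{1}{12n+1}$. I would like this to dominate $\frac{1}{12n}$, but $\frac{1}{12n+1}<\frac{1}{12n}$, so that comparison alone fails. Instead I would use both negative terms: each of $s,n-s$ is at most $n-1$, so each negative term is at least $\frac{1}{12n-11}>\frac{1}{12n}$. Hence the exponent is negative and the factor is at most $1$, which yields the claimed upper bound with constant $\sqrt{2\pi}$.

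For the lower bound, the error factor is
\[
\exp\!\Big(\tfrac{1}{12n+1} - \tfrac{1}{12s} - \tfrac{1}{12(n-s)}\Big)\;\ge\;\exp\!\Big(-\tfrac{1}{12}-\tfrac{1}{12}\Big)\;=\;e^{-1/6},
\]
using $s,n-s\ge1$. This gives
\[
\binom{n}{s}\;\ge\;\frac{e^{-1/6}}{\sqrt{2\pi s(1-\frac{s}{n})}}\cdot\frac{n^n}{s^s(n-s)^{n-s}}.
\]
It then suffices to check $\frac{e^{-1/6}}{\sqrt{2\pi}}\ge\frac{1}{\sqrt8}$, i.e.\ $e^{-1/3}\cdot 8\ge 2\pi$. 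We have $8e^{-1/3}\approx 8\times0.7165\approx5.73$, while $2\pi\approx6.28$, so this fails.

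So the crude estimate is the main obstacle, and the constant needs a sharper error bound. Write $t=n-s$ and $E=\frac{1}{12s}+\frac{1}{12t}-\frac{1}{12n+1}$. The worst case is $s=t=1$, $n=2$, giving $E=\frac16-\frac1{25}\approx0.127$. Then $e^{-2E}\approx0.776$, and $8\times0.776\approx6.2<2\pi$, which is still slightly short. I would therefore check the small cases directly. For $n=2,s=1$: the right-hand side is $\frac{1}{\sqrt{4}}\cdot 4=2=\binom21$, so the inequality holds with equality. For $s=1$ in general the bound is easily verified by hand. Once $\min(s,n-s)\ge2$, we have $E\le\frac1{24}+\frac1{24}=\frac1{12}$, so $e^{-2E}\ge e^{-1/6}\approx0.846$ and $8\times0.846\approx6.77\ge2\pi$, which gives the bound. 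The remaining case $s=1$ (and symmetrically $s=n-1$) reduces to showing
\[
n\;\ge\;\frac{1}{\sqrt{8(1-\frac1n)}}\cdot\frac{n^n}{(n-1)^{n-1}} \;=\; \frac{n\,(1+\frac{1}{n-1})^{n-1}}{\sqrt{8(1-\frac1n)}},
\]
which holds because $(1+\frac1{n-1})^{n-1}\le e$ while $\sqrt{8(1-\frac1n)}\ge2$ for $n\ge2$. Since $e>2$, this is tight only at $n=2$, where $(1+1)^1=2$ and the two sides are equal; for $n\ge3$ we have $(1+\frac1{n-1})^{n-1}\le(\frac32)^2=2.25$ and $\sqrt{8\cdot\frac23}\approx2.31$, and both sides move favorably as $n$ grows. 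I expect this case analysis, forced by the constant $\sqrt8$ being close to $\sqrt{2\pi}$, to be the only delicate part of the argument.
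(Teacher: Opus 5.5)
Your overall route is the paper's: Robbins bounds applied to $n!/(s!\,(n-s)!)$, with the upper bound handled exactly as you argue. Your diagnosis of the lower bound is correct and important. The paper's proof ends with the step $e^{-1/6}/\sqrt{2\pi}\ge 1/\sqrt{8}$, i.e.\ $8e^{-1/3}\ge 2\pi$, which is false ($\approx 5.73<6.28$), so a refinement such as your split on $\min(s,n-s)$ is genuinely required. The following parts of your argument are fine: the case $\min(s,n-s)\ge 2$, where $E\le\frac{1}{12}$ and $8e^{-1/6}\approx 6.77>2\pi$; the check that $(n,s)=(2,1)$ gives equality; and the exclusion of the degenerate case $s=n$.

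The gap is the case $s=1$ (equivalently $s=n-1$) with $n\ge 3$. Reducing to $(1+\frac{1}{n-1})^{n-1}\le\sqrt{8(1-\frac{1}{n})}$ is right, but none of your justifications proves it:
\begin{itemize}
\item ``$\le e$'' set against ``$\ge 2$'' does not compare, since $e>2$.
\item The claim $(1+\frac{1}{n-1})^{n-1}\le(\frac{3}{2})^2$ for $n\ge 3$ is false, because $(1+\frac{1}{m})^m$ is increasing; already at $n=4$ it equals $(\frac{4}{3})^3\approx 2.37$.
\item The left side does not ``move favorably''; it increases toward $e$.
\end{itemize}

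The inequality is nevertheless true, and there are two easy repairs.
\begin{itemize}
\item \emph{Direct proof.} With $m=n-1$ the inequality is equivalent to $(1+\frac{1}{m})^{2m+1}\le 8$. This holds because $(2m+1)\ln(1+\frac{1}{m})$ is decreasing in $m$ and equals $\ln 8$ at $m=1$. The monotonicity follows since its derivative is negative, by $\ln(1+\frac{1}{x})<\frac{1}{2}(\frac{1}{x}+\frac{1}{x+1})$, which is the trapezoid bound for the convex function $1/t$.
\item \emph{No hand check needed.} For $s=1$ and $n\ge 3$ the Robbins error is $E=\frac{1}{12}+\frac{13}{12(n-1)(12n+1)}\le\frac{1}{12}+\frac{13}{888}<0.1$. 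Hence $8e^{-2E}>8e^{-0.2}\approx 6.55>2\pi$, and your main-case argument already applies.
\end{itemize}
So the only instance where the Robbins estimate is too weak is $(n,s)=(2,1)$, which you have already verified by hand.
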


We include a standard proof in \cref{appendix} for completeness.

\subsection{Variants of \texorpdfstring{$\lpn$}{LPN}}

$\lpn$ is a distributional problem, equipped with distributions $\mathcal S, \mathcal D_a, \mathcal D_e$. For a secret vector $\vec s$ sampled from distribution $\mathcal S$, we are given oracle access to samples of $(\vec a,b)$, where $\vec a \sim \mathcal D_a$ and 
$b = \langle \vec a, \vec s \rangle + e$ with $e \sim \mathcal D_e$. We can collectively view these samples as $\paren{\vec A, \vec b=\vec A\cdot\vec s + \vec e}$. The goal of the Search-$\lpn$ problem is to recover the secret $\vec s$ given $(\vec A, \vec b)$. Now by changing the distributions associated with the $\lpn$ problem we define the following variants,
\begin{enumerate}

\item This is the classical variant defined over the binary field, \begin{definition}[$\lpn_{\eta, n,m}$]
    For a uniformly random secret $\vec s \sim \F_2^n$, we are given oracle access to $m$ samples of the form $(\vec a,b)$, where 
$\vec a$ is uniform in $\F_2^n$ and 
$b = \langle \vec a, \vec s \rangle + e$ with 
$e \sim \text{Ber}(\eta)$. The goal of the problem is to recover the secret $\vec s$ using these $m$ samples.
\end{definition}

\item The following is the large field variant which generalizes this problem to $\F_q$, \begin{definition}[$\lpn_{q,\eta,n,m}$]\label{large_lpn}

For a prime $q$, and a secret $\vec s \sim \F_{q}^{n}$, we are given access to $m$ samples $(\vec a,b)$, where 
$\vec a \sim \paren{\F_q}^{n}$ uniformly and 
$b = \langle \vec a, \vec s \rangle + e$ with 
$e \sim {\Ber}_q(\eta)$. The goal of the problem is to recover the secret $\vec s$ using these $m$ samples.

\end{definition}

    

    



\item The following is the sparse version of the classical variant as the vectors $\vec a$'s have Hamming weight exactly $\sigma$. This is the variant that is most often used in the literature, in cryptographic constructions. 

\begin{definition}[$\lpn_{\eta,n,m}^{= \sigma}$]
For a secret $\vec s \sim \F_2^n$, we are given oracle access to $m$ samples $(\vec a,b)$, where 
$\vec a \sim {\binom{[n]}{\sigma}}$ uniformly and 
$b = \langle \vec a, \vec s \rangle \oplus e$ with 
$e \sim \text{Ber}(\eta)$.  The goal of the problem is to recover the secret $\vec s$ using these $m$ samples.
 \end{definition}
 \item The following is very similar, but is much more convenient to work with for our reductions. In this variant, the vector $\vec a$ is a boolean vector where each co-ordinate is chosen independently such that the expected hamming weight is $\sigma$,\begin{definition}[$\lpn_{\eta,n,m}^{\approx \sigma}$]
  For a secret $\vec s \sim \F_2^n$, we are given oracle access to $m$ samples $(\vec a,b)$, where 
$\vec a \sim (\text{Ber}(\sigma/n))^n$  and 
$b = \langle \vec a, \vec s \rangle \oplus e$ with 
$e \sim \text{Ber}(\eta)$.  The goal of the problem is to recover the secret $\vec s$ using these $m$ samples.

    
\end{definition}
\item All our results will extend for general prime field $\F_q$, and so we define these variants below. We could have completed avoided talking about LPN for $\F_2$ and introduced all definitions and proofs for $\F_p$. The only reason we do this separately is that most work in the cryptographic literature considers LPN modulo $2$ and a more general statement might be less appealing.
\begin{definition}[$\lpn_{q,\eta,n,m}^{= \sigma}$]
Let $X_{q,n,\sigma}$ denote the set of all vectors in $\F_q^n$, which have hamming weight exactly $\sigma$. For a secret $\vec s \sim \F_q^n$, we are given oracle access to $m$ samples $(\vec a,b)$, where 
$\vec a \sim X_{q,n,\sigma}$ uniformly and 
$b = \langle \vec a, \vec s \rangle + e$ with 
$e \sim \text{Ber}_q(\eta)$.  The goal of the problem is to recover the secret $\vec s$ using these $m$ samples.

 \end{definition}
\begin{definition}[$\lpn_{q,\eta,n,m}^{\approx \sigma}$]
  For a secret $\vec s \sim \F_q^n$, we are given oracle access to $m$ samples $(\vec a,b)$, where 
$\vec a \sim \paren{\Ber_q \paren{\frac{q \cdot \sigma}{(q-1) \cdot n}}}^n$  and 
$b = \langle \vec a, \vec s \rangle + e$ with 
$e \sim \text{Ber}_q(\eta)$. The goal of the problem is to recover the secret $\vec s$ using these $m$ samples.
\end{definition}
Again, here, the distribution of each co-ordinate of $\vec{a}$ is chosen to get the expected hamming weight of $\vec a$ to be $\sigma$.
 \end{enumerate}

 An algorithm $\mathcal{A}$ solves \textup{Search}-$\lpn$ with success probability $\alpha$ if,
\[\Pr_{\vec s,\vec A,\vec e}[\mathcal{ A}\paren{\vec A, \vec A\cdot\vec s + \vec e}=\vec s]\geq \alpha.\]
 

\label{sec:decode}


\subsection{Decodability of \texorpdfstring{$\lpn$}{LPN}}
In this section we prove that as long as $\eta$ is small enough, for a sample $(\vec A, \vec b = \vec A \vec s + \vec e) \sim \lpn_{q,\eta,n,m}$, the only vector $\vec v$ that satisfies $\norm{\vec b - \vec A \vec v}_0 \leq 2\eta m$ is the true secret $\vec v = \vec s$. Our approach is to first establish that a random code $\vec A \sim \F_q^{m \times n}$ has large distance, and then show that the noise vector $\vec e$ has small Hamming weight with high probability. Combining these facts via a union bound yields the desired conclusion. The following lemma establishes the large-distance property of a random code.

\begin{lemma}[GV-Bound]\label{lemma:GV} \cite{guruswami2012essential} Let $\vec A \sim \F_q^{m \times n}$, and let $\operatorname{dist}(\vec A)$ denote the minimum Hamming weight of a nonzero codeword in the linear code generated by $\vec A$. If $\eta <\dfrac{1}{4}(1-1/q)$ and $m>\paren{\frac{2}{1-H_q(4\eta)}}\cdot n$, then $\operatorname{dist }(\vec A)\leq 4{\eta}\cdot m$ with probability at most $q^{-n}$.  \end{lemma}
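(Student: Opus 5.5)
The plan is a first-moment (union bound) argument over nonzero messages $\vec x\in\F_q^n\setminus\{\vec 0\}$. The code generated by $\vec A$ is $\{\vec A\vec x : \vec x\in\F_q^n\}$. The event $\operatorname{dist}(\vec A)\le 4\eta m$ is the same as the existence of a nonzero $\vec x$ with $\norm{\vec A\vec x}_0\le 4\eta m$. We also count the degenerate case in which $\vec A\vec x=\vec 0$ for some $\vec x\neq \vec 0$ as a "small distance" event, so the rank-deficient case is covered automatically.

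\textbf{Step 1: uniformity of a single codeword.} Fix any nonzero $\vec x$. Since $\vec A$ is uniform and $q$ is prime, each row $\vec a_i$ satisfies that $\langle \vec a_i,\vec x\rangle$ is uniform on $\F_q$. To see this, pick a coordinate $j$ with $x_j\neq 0$; then $a_{ij}x_j$ is uniform and independent of the remaining terms. The rows are independent, so $\vec A\vec x$ is uniform on $\F_q^m$. Consequently,
\[
\Pr_{\vec A}\bigl[\norm{\vec A\vec x}_0\le 4\eta m\bigr] \;=\; \frac{B_q(4\eta m,m)}{q^m}\;\le\; q^{-m(1-H_q(4\eta))},
\]
where the inequality is the upper half of the Entropy Bound lemma with $\rho=4\eta$. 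The hypothesis $\eta<\tfrac14(1-1/q)$ gives $\rho<1-1/q$, which is exactly the range where that lemma applies. It also gives $H_q(4\eta)<1$, so the condition on $m$ is meaningful.

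\textbf{Step 2: union bound.} Summing the Step 1 bound over the fewer than $q^n$ nonzero vectors $\vec x$ gives
\[
\Pr\bigl[\operatorname{dist}(\vec A)\le 4\eta m\bigr]\;\le\; q^{\,n-m(1-H_q(4\eta))}.
\]
The assumption $m>\frac{2n}{1-H_q(4\eta)}$ gives $m(1-H_q(4\eta))>2n$, so the exponent is less than $-n$. This yields the claimed bound $q^{-n}$.

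\textbf{Expected obstacle.} The main care point is applying the entropy bound when $4\eta m$ is not an integer. Here $B_q(\rho m,m)$ counts vectors of weight at most $\lfloor \rho m\rfloor$. Since the function $H_q$ is increasing on $[0,1-1/q]$, the bound $q^{mH_q(\rho)}$ still holds, so rounding down causes no problem.
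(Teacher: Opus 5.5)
Your proof is correct and is essentially the paper's argument: both are a double union bound over pairs (nonzero message $\vec x$, low-weight word $\vec u$) giving $|B_q(4\eta m,m)|\cdot q^{n-m}\le q^{-n}$ via the entropy bound. The only differences are that the paper indexes the outer sum by $\vec u\in\operatorname{Img}(\vec A)$ while you index it by $\vec x$ and use uniformity of $\vec A\vec x$, and that your version keeps $\vec u=\vec 0$ in the union, so it also rules out a nontrivial kernel of $\vec A$, which is what Corollary~\ref{corollary:decode} implicitly needs for the secret to be unique.
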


\begin{proof}
Since the code generated by $\vec A$ is simply $\operatorname{Img}(\vec A)$, i.e., the image of $\vec A$ when treated as a linear map from $\F_q^n$ to $\F_q^m$. We bound the probability that the image of $\vec A$ contains a non-zero vector of Hamming weight at most $4\eta m$.  
Let $B_q(r,m)$ denote the Hamming ball of radius $r$ in $\F_q^m$. Then,
\begin{align*}
    \Pr_{\vec A}\!\left[\operatorname{dist}(\vec A)\leq 4\eta \cdot m\right]
    &\leq \sum_{\vec{u} \in B_q\big(4\eta m,m\big) \setminus \{\vec 0\}}
           \Pr_{\vec A}\!\left[\vec{u} \in \operatorname{Img}(\vec A)\right] ,\\[6pt]
    &\leq \frac{\big|B_q\big(4\eta m,m\big)\big|}{q^{m-n}}, \\[6pt]
    &\leq q^{-\big((1-H_q(4\eta)) \cdot m - n\big)}, \\[6pt]
    &\leq q^{-n}.
\end{align*}
\qed\end{proof}
The following lemma is a straightforward application of Chernoff bounds, showing that the noise vector $\vec e$ has small Hamming weight with overwhelming probability.
\begin{lemma}\label{lemma:error}
    For $\vec e \sim \big(\text{Ber}(\eta)\big)^m$, the event 
    $\norm{\vec e}_0 > 2\eta m$ occurs with probability at most $e^{-\eta m /3}$. Also, for $\vec e \sim \big(\text{Ber}_q(\eta)\big)^m$,  the event 
    $\norm{\vec e}_0 > 2\eta m$ occurs with probability at most $e^{-\eta m /3}$.
\end{lemma}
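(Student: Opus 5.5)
The idea is to write $\norm{\vec e}_0$ as a sum of independent indicators and apply the upper-tail Chernoff bound with $\delta = 1$.

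\emph{Binary case.} Take $\vec e \sim (\Ber(\eta))^m$ and set $X_i := \mathbf{1}[e_i \neq 0]$. Then $X_i = e_i$, so the $X_i$ are independent Bernoulli variables with parameter $\eta$. Hence $X = \sum_{i=1}^m X_i = \norm{\vec e}_0$ and $\mu = \E[X] = \eta m$. The upper-tail bound with $\delta = 1$ gives
\[
\Pr\big[\norm{\vec e}_0 \ge 2\eta m\big] \le \exp\!\Big(-\tfrac{1}{3}\eta m\Big).
\]
Since the event $\norm{\vec e}_0 > 2\eta m$ is contained in $\norm{\vec e}_0 \ge 2\eta m$, the first claim follows.

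\emph{$q$-ary case.} Take $e_i \sim \Ber_q(\eta)$. By the definition of $\Ber_q(\eta)$, $\Pr[e_i \neq 0] = (q-1)\eta/q =: \eta'$, which is at most $\eta$. The indicators $X_i = \mathbf{1}[e_i \neq 0]$ are independent $\Ber(\eta')$ variables, and $\norm{\vec e}_0 = \sum_i X_i$ has mean $\mu' = \eta' m \le \eta m$.

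Here the threshold $2\eta m$ is not exactly $2\mu'$, so we cannot apply Chernoff at $\delta = 1$ directly. Instead we use monotonicity: $\sum_i X_i$ is stochastically dominated by $Y \sim \mathrm{Bin}(m,\eta)$. One way to see this is to couple the two, sampling $X_i = 1$ only when $Y_i = 1$ and then with conditional probability $\eta'/\eta$. This coupling gives
\[
\Pr\Big[\sum_i X_i > 2\eta m\Big] \le \Pr[Y > 2\eta m] \le e^{-\eta m/3},
\]
where the last step is the binary case just proved.

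The only point needing care is this domination step: the $q$-ary noise is nonzero with probability $\eta(1-1/q)$ rather than $\eta$. The coupling handles it cleanly. Alternatively, one can apply the upper-tail Chernoff bound with $\mu \le \eta m$ used as an upper bound on the mean, which is valid for that form of the bound.
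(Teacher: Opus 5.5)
Your proposal is correct and follows essentially the paper's route: in the binary case, Chernoff with $\delta=1$ on $\norm{\vec e}_0$, whose mean is $\eta m$; in the $q$-ary case, a reduction to the binary case via $\Pr[\Ber_q(\eta)=0]\ge\Pr[\Ber(\eta)=0]$. The only difference is that you spell out the stochastic-domination step with an explicit coupling, where the paper simply asserts that the second claim follows from this inequality.
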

\begin{proof}
It suffices to prove this statement only for $\vec e \sim \big(\text{Ber}(\eta)\big)^m$. The second statement follows from the fact that 
$$Pr[Ber_q(\eta) = 0] \geq Pr[Ber(\eta) = 0]\;. $$
Let $X_i$ be the indicator random variable for the event $\vec{e}_i \neq 0$, and set $X = \sum_{i=1}^m X_i = \|\vec{e}\|_0$.  
Then $\mathbb{E}[X] = \eta m$, and the $X_i$’s are independent.  
Applying the Chernoff bound with $\delta=1$, we obtain
\[
\Pr\!\left[ X \ge 2\eta m \right]
= \Pr\!\left[ X \ge (1+\delta)\eta m \right]
\;\leq\; \exp\!\left(-\tfrac{\delta^2}{2+\delta}\cdot \eta m\right)
= \exp\!\left(-\tfrac{1}{3}\eta m\right).
\]
\qed\end{proof}
Combining the two statements, we get the following conclusion.
\begin{corollary}\label{corollary:decode}
    For $\eta <\dfrac{1}{4}(1-1/q)$ and  $m>\paren{\frac{2}{1-H_q(4\eta)}}\cdot n$, the $\lpn_{q,\eta,n,m}$ sample $\paren{\vec A,\vec b= \vec A\cdot \vec s + \vec e}$ is such that, only the secret vector $\vec v=\vec s$ satisfies $\norm{\vec b-\vec A\vec v}_0 \leq {2\eta}\cdot m$, with probability $1 - q^{-n} - e^{-\eta m /3}$. 
\end{corollary}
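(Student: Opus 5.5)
We obtain the corollary by a union bound over the two bad events controlled by \cref{lemma:GV} and \cref{lemma:error}, followed by a triangle-inequality argument for the Hamming metric.

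First, define the bad event $E_1$ as $\operatorname{dist}(\vec A)\le 4\eta m$. Under the stated hypotheses $\eta<\tfrac14(1-1/q)$ and $m>\frac{2}{1-H_q(4\eta)}\, n$, \cref{lemma:GV} applies directly and gives $\Pr[E_1]\le q^{-n}$. Second, define the bad event $E_2$ as $\norm{\vec e}_0>2\eta m$. Here $\vec e\sim(\Ber_q(\eta))^m$, and the second part of \cref{lemma:error} gives $\Pr[E_2]\le e^{-\eta m/3}$. A union bound then shows that with probability at least $1-q^{-n}-e^{-\eta m/3}$ neither event occurs.

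It remains to argue deterministically on the complement of $E_1\cup E_2$. Existence of a solution is immediate: $\vec v=\vec s$ satisfies $\norm{\vec b-\vec A\vec s}_0=\norm{\vec e}_0\le 2\eta m$. For uniqueness, suppose some $\vec v$ also satisfies $\norm{\vec b-\vec A\vec v}_0\le 2\eta m$. Since the Hamming weight is subadditive,
\[
\norm{\vec A(\vec v-\vec s)}_0 \le \norm{\vec b-\vec A\vec v}_0+\norm{\vec b-\vec A\vec s}_0\le 4\eta m .
\]
Because $\operatorname{dist}(\vec A)>4\eta m$, the codeword $\vec A(\vec v-\vec s)$ must be zero.

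The only subtle point is passing from a zero codeword to equality of secrets, which requires $\vec A$ to be injective. This follows from the same event: if $\vec A\vec u=\vec 0$ for some $\vec u\neq \vec 0$, then the proof of \cref{lemma:GV}, which bounds the probability that any nonzero low-weight vector lies in the image, must be read as bounding the minimum weight over nonzero message vectors, including the zero codeword. So I would state $\operatorname{dist}$ as $\min_{\vec u\neq 0}\norm{\vec A\vec u}_0$. The union-bound calculation in \cref{lemma:GV} still covers this case: for fixed $\vec u\ne0$, the vector $\vec A\vec u$ is uniform, so $\vec A\vec u$ landing in the ball (including $\vec 0$) has probability $|B_q|/q^m$, and a union bound over the $q^n$ choices of $\vec u$ gives the same $q^{-n}$ bound. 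With this reading, $\vec A(\vec v-\vec s)=\vec 0$ forces $\vec v=\vec s$, which completes the argument.
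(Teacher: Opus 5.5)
Your proof is correct and follows the paper's argument: a union bound over the low-distance event of \cref{lemma:GV} and the heavy-noise event of \cref{lemma:error}, then the triangle inequality for Hamming weight. Your extra step on injectivity is a real refinement. The paper's final claim that $\vec A\vec v$ is far from $\vec b$ for every $\vec v\neq\vec s$ tacitly needs $\vec A(\vec v-\vec s)\neq\vec 0$, which the literal statement of \cref{lemma:GV} (about nonzero codewords) does not provide; your union bound over nonzero messages $\vec u$, using that $\vec A\vec u$ is uniform, gives the same $q^{-n}$ bound and also rules out a nontrivial kernel.
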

\begin{proof}
    Using \cref{lemma:GV} we say can say that the distance of the code generated by the matrix $\vec A$ is at least $4 \eta m$ with probability at least $(1-q^{-n})$ and using  \cref{lemma:error}, $\norm{\vec b- \vec A \cdot \vec s}_0< 2\eta \cdot m$ with probability at least $(1-e^{-\eta m/3}) $. Hence by applying union bound we can conclude that with probability at least $1 - q^{-n} - e^{-\eta m /3}$, the distance of the code generated by the matrix $\vec A$ is at least $4 \eta m$, and $\norm{\vec b- \vec A \cdot \vec s}_0< 2\eta \cdot m$. Thus, with probability $1 - q^{-n} - e^{-\eta m /3}$, for any $\vec{v}\neq \vec s$, the Hamming distance of $\vec{A} \vec{v}$ from $\vec{A} \vec{s} + \vec{e}$ is more than $4 \eta m - 2 \eta m = 2\eta m$. The result follows.
 \qed\end{proof}

\subsection{Decodability of Sparse-\texorpdfstring{$\lpn$}{LPN}}
In this section we prove that for a sample $(\vec A, \vec b = \vec A \vec s + \vec e) \sim \lpn^{\approx \sigma}_{q,\eta,n,m}$, the only candidate vector $\vec v$ whose encoding $\vec A \vec v$ lies close to $\vec b$ is the true secret $\vec v = \vec s$. Our proof follows the same high-level strategy as in the standard $\lpn$ setting: first, we establish that a random code $\vec A$ has large distance; next, we argue that the noise vector $\vec e$ has small Hamming weight with overwhelming probability; finally, we combine these ingredients via a union bound to obtain the desired uniqueness guarantee.

Since we are no longer in the uniform sampling regime, a key technical step is to lower bound the probability $\Pr_{\vec a}[\langle \vec a, \vec u \rangle \neq 0]$ for nonzero $\vec u$, both in the binary and in the $q$-ary setting. The following lemmas establish these lower bounds, which in turn will help us conclude that the code generated by $\vec A$ has large distance, with high probability.We note that similar proofs appear in the literature; see, for example, \cite{sparse_lpn_algo}.
\begin{lemma}\label{lem:ber_inner_product}
    For $\vec a \sim \paren{\Ber\paren{\sigma/n}}^n$ and a non-zero vector $\vec u\in \F_2^n$ such that $\norm{\vec u}_0=\ell$, we have 
    \[\Pr_{\vec a}\left[\langle \vec a, \vec u\rangle \neq 0\right] =\dfrac{1-\paren{1- 2\sigma/n}^\ell}{2}\geq ~\dfrac{\sigma}{n}.\]
\end{lemma}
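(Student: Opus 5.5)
The inner product $\langle \vec a,\vec u\rangle$ only involves the $\ell$ coordinates in the support of $\vec u$. Over $\F_2$ those coordinates of $\vec u$ all equal $1$, so
\[
\langle \vec a,\vec u\rangle = \bigoplus_{i\in\operatorname{supp}(\vec u)} a_i .
\]
The $a_i$ are independent $\Ber(\sigma/n)$ variables, so we can apply \cref{ber_2} with $k=\ell$ and $\eta=\sigma/n$. It says this XOR is distributed as $\Ber(\theta_\ell(\sigma/n))$, which gives the claimed equality
\[
\Pr_{\vec a}\left[\langle \vec a,\vec u\rangle\neq 0\right]=\frac{1-(1-2\sigma/n)^\ell}{2}.
\]

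For the lower bound, write $x=1-2\sigma/n$ and assume the natural regime $\sigma\le n/2$, so that $x\in[0,1)$. Then $x^\ell\le x$ for every $\ell\ge1$, hence $1-x^\ell\ge 1-x=2\sigma/n$. Dividing by $2$ gives the bound $\sigma/n$.

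The only subtle point is the parameter range. If $\sigma/n>1/2$, then $x<0$ and even powers of $x$ are positive, so monotonicity fails. The bound can then break for even $\ell$: at $\sigma=n$ and $\ell=2$ the probability is $0$. I would therefore state explicitly the standing assumption $\sigma\le n/2$, which holds in every sparse regime considered (typically $\sigma=O(1)$ or $\sigma\ll n$). I expect no other obstacle.
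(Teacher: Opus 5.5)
Your argument is correct and essentially matches the paper's. The paper writes the probability as $\sum_{i \text{ odd}}\binom{\ell}{i}(\sigma/n)^i(1-\sigma/n)^{\ell-i}$ and evaluates it to the same closed form, which is exactly what \cref{ber_2} computes via $\pm 1$ characters; it then gets the lower bound from the remark that this probability increases with $\ell$, which is your $x^\ell \le x$.

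Your caveat is also well taken. That monotonicity remark silently needs $1-2\sigma/n \ge 0$, and without $\sigma \le n/2$ the stated inequality is false: your example $\sigma=n$, $\ell=2$ gives probability $0<1$. The assumption should be stated explicitly, as you propose.
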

\begin{proof}
    If $\langle \vec u , \vec a \rangle=0$, this implies that $\vec u $ and $\vec a$ are simultaneously $1$ only at even many coordinates. Hence given $ \norm{\vec u}_0= \ell$, \begin{align*}
        \Pr_{\vec a}\left[\langle \vec u, \vec a\rangle \neq 0\right] &= 
        \sum_{i \text{ is odd}}{\binom{\ell}{i}} \paren{\frac{\sigma}{n}}^{i}  \paren{1-\frac{\sigma}{n}}^{\ell-i}, \\
        & = \dfrac{1-\paren{1- 2\sigma/n}^\ell}{2}.
    \end{align*}
\qed\end{proof}
\begin{lemma}

\label{general_support}
    For $\vec a \sim \paren{\Ber_q\paren{\frac{q\sigma}{(q-1)n}}}^n$ and a non-zero vector $\vec u\in \F_q^n$ such that $\norm{\vec u}_0=\ell$, we have 
    \[\Pr_{\vec a}\left[\langle \vec a, \vec u\rangle \neq 0\right]\;=\; \left(1 - \frac{1}{q}\right)\left(1 - \left(1 - \frac{q\sigma}{(q-1)n}\right)^{\ell}\right)\geq ~\dfrac{\sigma}{n}.
\] 
\end{lemma}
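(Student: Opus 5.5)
The plan mirrors the proof of \cref{general_support}'s binary analogue, \cref{lem:ber_inner_product}, but uses the sampling decomposition from the proof of \cref{ber_q}. Set $p := \frac{q\sigma}{(q-1)n}$, and assume $p \le 1$ so that the distribution is well defined. Each coordinate $a_j \sim \Ber_q(p)$ can be written as $a_j = B_j U_j$, where $B_j \sim \Ber(p)$ and $U_j$ is uniform in $\F_q$, all independent.

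\textbf{Step 1: reduce to the support of $\vec u$.} Let $T := \mathrm{supp}(\vec u)$, so $|T| = \ell$. Then
\[
\langle \vec a, \vec u\rangle = \sum_{j\in T} B_j U_j u_j .
\]
If $B_j = 0$ for all $j\in T$, which happens with probability $(1-p)^\ell$, the inner product is $0$. Otherwise fix some $j\in T$ with $B_j=1$. Since $q$ is prime and $u_j\neq 0$, the term $U_j u_j$ is uniform on $\F_q$ and independent of the remaining terms. Hence, conditioned on this event, the sum is uniform, exactly as in the proof of \cref{ber_q}. This gives
\[
\Pr\bigl[\langle \vec a,\vec u\rangle\neq 0\bigr] = \Bigl(1-\tfrac1q\Bigr)\bigl(1-(1-p)^\ell\bigr).
\]

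\textbf{Step 2: the lower bound.} The right-hand side is nondecreasing in $\ell$, so it suffices to treat $\ell = 1$. There it equals
\[
\Bigl(1-\tfrac1q\Bigr)p = \Bigl(1-\tfrac1q\Bigr)\frac{q\sigma}{(q-1)n} = \frac{\sigma}{n},
\]
and the claimed bound follows for every $\ell \ge 1$.

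Neither step is a real obstacle. The only care needed is the conditioning argument in Step 1, which must establish independence of $U_j u_j$ from the other summands and uniformity of $U_j u_j$. Uniformity uses that multiplication by a nonzero $u_j$ is a bijection of $\F_q$, which holds because $q$ is prime.
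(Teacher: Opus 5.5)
Your proof is correct, but it reaches the closed form by a different route than the paper. The paper sets $p_\ell := \Pr[\langle \vec a,\vec u\rangle = 0]$ and adds one support coordinate at a time. This gives the recurrence $p_\ell = (1-\alpha+\frac{\alpha}{q})p_{\ell-1} + \frac{\alpha}{q}(1-p_{\ell-1})$, equivalently $p_\ell - \frac1q = (1-\alpha)(p_{\ell-1}-\frac1q)$, which the paper solves from $p_0=1$ without showing the steps.

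You instead use the $B_jU_j$ decomposition and a single conditioning step. In effect, you observe that $\langle \vec a,\vec u\rangle$ is a sum of $\ell$ independent variables $u_j a_j \sim \Ber_q(p)$, so the formula is just \cref{ber_q} with $k=\ell$. Your route reuses an existing lemma and avoids solving or verifying a recurrence. The paper's route is self-contained and needs only the one-step transition probabilities. The lower-bound step (monotonicity in $\ell$, then evaluation at $\ell=1$) is the same in both.

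To make your conditioning fully rigorous, condition on the entire pattern $(B_j)_{j\in T}$. For each pattern with at least one $B_j=1$, take $j$ to be, say, the smallest such index. Then $U_j$ is independent of the other summands, so the sum is uniform for that pattern, and averaging over patterns gives the claim. You also correctly make explicit the requirement $p\le 1$, which the paper leaves implicit.
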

\begin{proof}
    Let $p_\ell$ be the probability that  $\Pr_{\vec a}\left[\langle \vec a, \vec u\rangle= 0\right]$, when $\norm{\vec u}_0= \ell$. Notice that $p_\ell$ satisfies the following recurrence relation where $\alpha = \frac{q\sigma}{(q-1)n}$,
    \[p_\ell = \paren{1 -\alpha + \frac{\alpha}{q}}\cdot p_{\ell-1} + \paren{\frac{\alpha}{q}} \cdot(1- p_{\ell -1}). \] Solving this recurrence with $p_0=1$ gives, 
    \[p_\ell = (1 - \alpha)^\ell + \frac{1}{q} \cdot \paren{1- (1-\alpha)^\ell} .\]
    Hence we get the required claim.
\qed\end{proof}
Observe that as $\ell $ increases, $\Pr[\langle\vec u, \vec a\rangle\neq 0]$, increases as well. Hence $\Pr[\langle \vec u , \vec a \rangle \neq 0] \geq (\sigma/n)$, for all non-zero vectors $\vec u$, in both cases. 
\begin{lemma} \label{decode_sparse}
    Let $m \geq 48 \cdot \left({n^3}/{\sigma^2}\right)$, with $\eta < \tfrac{1}{8}$, and $c = (0.25 - \eta)$.
Then, with probability at least $\paren{1-\frac{1}{2^{(2n-1)}}}$, for an instance $(\vec A, \vec b=\vec A \vec s +\vec e)$ of $\lpn^{\approx \sigma}_{\eta,n,m}$, the secret vector $\vec v = \vec s$ is the unique vector such that,
 \[\norm{\vec b -\vec A \vec v}_0<(\eta + c \cdot \frac{\sigma}{n}) \cdot m.\]
     
 \end{lemma}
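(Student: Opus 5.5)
Write $\tau := \eta + c\sigma/n$. The plan follows the template of \cref{corollary:decode}: show that the noise has weight below $\tau m$, and that every nonzero codeword $\vec A\vec u$ has weight at least $2\tau m$. Then for any $\vec v\neq \vec s$ the triangle inequality gives
\[
\norm{\vec b-\vec A\vec v}_0 \ \ge\ \norm{\vec A(\vec s-\vec v)}_0-\norm{\vec e}_0 \ >\ 2\tau m - \tau m = \tau m ,
\]
so $\vec s$ is the unique vector within distance $\tau m$. The factor-two margin is wasteful, though, since the codeword bound is only about $\sigma/n$ per row. I will therefore use a sharper per-difference argument instead of a pure distance argument. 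For a fixed nonzero $\vec u=\vec s-\vec v$, the $i$-th coordinate of $\vec b-\vec A\vec v=\vec A\vec u+\vec e$ is $\langle \vec a_i,\vec u\rangle\oplus e_i$. Its rows are independent, and each is $1$ with probability
\[
p_u(1-\eta)+(1-p_u)\eta=\eta+p_u(1-2\eta),
\]
where $p_u=\Pr[\langle \vec a,\vec u\rangle\neq 0]\ge \sigma/n$ by \cref{lem:ber_inner_product}. Since $\eta<1/8$, we have $1-2\eta>3/4$, so this mean is at least $\eta+\tfrac34\cdot\tfrac{\sigma}{n}$. That is a gap of at least $(\tfrac34-c)\sigma/n\ge \tfrac12\cdot\tfrac{\sigma}{n}$ above $\tau$, because $c\le 1/4$.

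Step 1 handles the true secret: $\norm{\vec e}_0$ is a sum of $m$ independent $\Ber(\eta)$ variables. I will apply the upper Chernoff bound with deviation $c\sigma/n$, using $c>1/8$, to get
\[
\Pr[\norm{\vec e}_0\ge \tau m]\le \exp\!\big(-\Omega((c\sigma/n)^2 m)\big).
\]
I will use the additive (Hoeffding-type) form $\exp(-2t^2m)$, which is cleanest here. With $t=c\sigma/n\ge \sigma/(8n)$ and $m\ge 48n^3/\sigma^2$, this gives $\exp(-2\cdot 48 n/64)=\exp(-1.5n)\le 2^{-2n}$, up to checking constants.

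Step 2 handles each wrong candidate. For each fixed nonzero $\vec u$, the lower-tail bound with additive deviation $t'\ge \sigma/(2n)$ gives failure probability at most $\exp(-2t'^2 m)\le \exp(-24n)$. A union bound over the fewer than $2^n$ nonzero $\vec u$ then costs at most $2^n e^{-24n}$, far below $2^{-2n}$. Adding the two failure probabilities gives total failure at most about $2\cdot 2^{-2n}=2^{-(2n-1)}$, which matches the claimed bound. I should take care to pick the Chernoff/Hoeffding forms so that both terms are at most $2^{-2n}$.

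The main obstacle is getting the constants to land exactly. The multiplicative Chernoff bound stated in the preliminaries involves the relative deviation $\delta=c\sigma/(n\eta)$, which behaves badly when $\eta$ is tiny or comparable to $\sigma/n$. So I would first try the additive Hoeffding bound, which is independent of $\eta$ and is where the $n^3/\sigma^2$ sample size naturally comes from: a deviation of order $\sigma/n$ squared, times $m$, needs to be $\Omega(n)$ to beat the $2^n$ union bound. If only the stated multiplicative form may be used, I would split into the cases $\eta\ge\sigma/n$ and $\eta<\sigma/n$, and in each case bound the exponent by $\Omega(m\sigma^2/n^2)$.
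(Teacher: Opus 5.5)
Your proposal is correct and follows essentially the same route as the paper: a per-candidate lower-tail bound on $\norm{\vec A\vec u+\vec e}_0$, whose coordinates are $1$ with probability $\eta+(1-2\eta)p_u$ where $p_u\ge\sigma/n$. This is combined with a union bound over the nonzero $\vec u$ and a separate upper-tail bound on $\norm{\vec e}_0$, each failure term being at most $4^{-n}$. The only difference is that you use the additive Hoeffding form in place of the paper's multiplicative Chernoff bound. This makes both exponents independent of $\eta$ (your constants $e^{-1.5n}\le 2^{-2n}$ and $2^n e^{-24n}\le 2^{-2n}$ check out). It also avoids the case split on whether $c\sigma/(n\eta)\le 1$, which the paper's noise-tail estimate implicitly assumes.
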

 \begin{proof}

 If we sample $\vec A \sim (\Ber(\sigma/n))^{m \times n}$ and $\e \sim (\Ber(\eta))^m$, then for a given $\vec s \in \F_2^n$ and $\vec v \in \F_2^n$ such that $\norm{\vec s + \vec v}_0 = \ell>0$, we have
 \[\mathop{\E}[\norm{(\vec A  \vec s +\vec e) -(\vec A  \vec s)}_0] =\mathop{\mathbb E}[\norm{  \vec e}_0]= \eta m,\]
 Similarly, if we compute the expected distance between $\vec b = \paren{\vec A \vec s+ \vec e}$ and $\vec A \vec v$ we get,
 \begin{align*}
    & \mathop{\E}[\norm{(\vec A  \vec s +\vec e) - (\vec A  \vec v)}_0] \\&= \sum_{i=1}^{m}\Pr \left[ b_i \neq \langle \vec a_i, \vec v\rangle \right],\\
     &= m \cdot \paren{\Pr [\langle \vec a _i , \vec s + \vec v \rangle \neq e_i]},\\
     &= m \cdot \paren{ \eta \cdot \Pr[\langle \vec a_i, \vec s + \vec v \rangle = 0] +(1-\eta) \cdot \Pr[\langle \vec a_i, \vec s + \vec v \rangle \neq  0]},\\
     &=  m \cdot \paren{\eta + (1-2 \eta)\cdot \Pr[\langle \vec a_i, \vec s + \vec v \rangle \neq  0]}.
 \end{align*}
 
 Since, $c<(0.5-2\eta)$ and for $p:= \Pr[\langle \vec a_i, \vec s + \vec v \rangle \neq  0]$. If $\norm{\vec s + \vec v}_0>0$ then $p\geq \sigma/n$ using \cref{lem:ber_inner_product}, which implies $\paren{(1-2\eta)\cdot p -c \cdot \frac{\sigma}{n}}\geq p/2$. Hence we bound the following probability using Chernoff, \begin{align*}& \Pr_{\vec e, \vec A}[\norm{(\vec A\vec s + \vec e) -(\vec A \vec v)}_0 <(\eta + c \cdot \frac{\sigma}{n}) \cdot m ]\\ &\leq \exp\paren {-\paren{1-\frac{\eta + c\cdot \frac{\sigma}{n}}{\paren{\eta +(1-2\eta)\cdot p} } }^2 \cdot \frac{{\paren{\eta +(1-2\eta)\cdot p}}\cdot m }{2} },
 \\&\leq \exp \paren{-\frac{\paren{(1-2 \eta)\cdot p- c\cdot \frac{\sigma}{n}}^2}{\eta +(1-2\eta)\cdot p }\cdot \frac{m}{2}},\\
 & \leq \exp \paren{-\frac{{p}^2}{\eta +(1-2\eta)\cdot p }\cdot \frac{m}{8}},\\
 &\leq  \exp \paren{-{{p}^2}\cdot \frac{m}{8}}.
 \end{align*}

 If $\norm{\vec s + \vec v}_0>0$, then $p\geq 
 \sigma/n$ this implies,
 \[\Pr_{\vec e, \vec A}\left[\norm{(\vec A\vec s + \vec e) -(\vec A \vec v)}_0<(\eta + c \cdot \frac{\sigma}{n}) \cdot m\right] \leq \exp \paren{- \dfrac{\sigma^2 m}{8n^2}}.\]
 Now if $m \geq 48 \cdot \paren{{ n^3}/{\sigma^2} }$, then by union bound,
 \begin{align*}&\Pr\left[ \exists ~ \vec v \neq \vec s ~{\mid} ~ \norm{(\vec A\vec s + \vec e) -(\vec A \vec v)}_0<(\eta + c \cdot \frac{\sigma}{n}) \cdot m\right]\\& \leq 2^{n} \cdot \exp \paren{- \dfrac{\sigma^2 m}{8n^2}}\leq 4^{-n}. \end{align*}
 We will again use Chernoff's bound and the fact that $\eta<1/8, c>1/8$ in order to get the following bound,
 \begin{align*}
     \Pr[\norm{\vec e}_0 > (\eta +c \cdot \frac{\sigma}{n})\cdot m]< \exp(-\frac{c^2 \sigma^2}{3\eta n^2}\cdot m)\leq 4^{-n}.
 \end{align*}
 Hence, by union bound only the secret vector $\vec v=\vec s, $ satisfies $\norm{\vec b +\vec A \vec v}_0<(\eta + c \cdot \frac{\sigma}{n}) \cdot m$ with probability $\paren{1-\frac{1}{2^{(2n-1)}}}$.
 
  \qed\end{proof}
  Following exactly the same proof as \cref{decode_sparse} along with \cref{general_support} we can obtain a similar conclusion for $\lpn_{q,\eta,n,m}^{\approx \sigma}$.
\begin{lemma} \label{decode_sparse_q}
    Let $m \geq 48 \cdot \left({n^3}/{\sigma^2}\right) \cdot \log q$, with $\eta < \tfrac{1}{8}$ and $c = (0.25 - \eta)$.
Then, with probability at least $\paren{1-\frac{1}{2^{(2n-1)}}}$, for an instance $(\vec A, \vec b= \vec A \vec s +\vec e)$ of $\lpn^{\approx \sigma}_{q,\eta,n,m}$, the secret vector $\vec v = \vec s$ is the unique vector such that,
 \[\norm{\vec b -\vec A \vec v}_0<\paren{\paren{\paren{1- \frac{1}{q}}\cdot\eta }+ c \cdot \frac{\sigma}{n}} \cdot m.\]
     
 \end{lemma}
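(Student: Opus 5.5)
We mirror the proof of \cref{decode_sparse}, substituting \cref{general_support} for \cref{lem:ber_inner_product}. The paper's $\Ber_q(\eta)$ puts mass $\eta(1-1/q)$ on nonzero values, so it is natural to write $\eta_0 := (1-\tfrac1q)\eta$. Then $\E\norm{\vec e}_0 = \eta_0 m$, which explains the threshold $\tau := (\eta_0 + c\,\sigma/n)\, m$ in the statement.

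\emph{Step 1 (wrong candidates are far).} Fix $\vec s$ and $\vec v\neq \vec s$, and set $\vec u = \vec s - \vec v \neq 0$ and $p := \Pr[\langle \vec a,\vec u\rangle\neq 0]$. By \cref{general_support}, $p\ge \sigma/n$. Coordinate $i$ of $\vec b - \vec A\vec v$ equals $\langle \vec a_i,\vec u\rangle + e_i$. This is zero in two cases: $\langle \vec a_i,\vec u\rangle=0$ and $e_i=0$, or $\langle \vec a_i,\vec u\rangle\neq0$ and $e_i$ hits the single value $-\langle\vec a_i,\vec u\rangle$ (probability $\eta/q$). Hence
\[
\Pr[\text{coord}\neq 0] = (1-p)\eta_0 + p\Bigl(1-\frac{\eta}{q}\Bigr) = \eta_0 + p\Bigl(1-\eta_0-\frac{\eta}{q}\Bigr) = \eta_0 + p(1-\eta).
\]
Since $\eta<1/8$ and $c<1/4$, we get $(1-\eta)p - c\sigma/n \ge p/2$, the analogue of the binary step. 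The coordinates are independent across the $m$ rows, so the lower-tail Chernoff bound gives
\[
\Pr[\norm{\vec b-\vec A\vec v}_0<\tau] \le \exp\!\Bigl(-\frac{(p/2)^2}{2(\eta_0+p(1-\eta))}\, m\Bigr) \le \exp(-p^2 m/8) \le \exp\!\Bigl(-\frac{\sigma^2 m}{8n^2}\Bigr),
\]
where the middle inequality uses $\eta_0 + p(1-\eta) \le 1$.

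\emph{Step 2 (union bound).} There are $q^n$ candidates $\vec v$. With $m\ge 48 (n^3/\sigma^2)\log q$, the exponent is at least $6n\log q$, so
\[
q^n e^{-6n\log q} \le 4^{-n}
\]
(taking $\log$ natural, so this amounts to $q^{-5n}\le 4^{-n}$). The extra factor $\log q$ in the hypothesis exists exactly to absorb the larger candidate count.

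\emph{Step 3 (true secret is close).} Bound $\Pr[\norm{\vec e}_0\ge \tau]$ by the upper-tail Chernoff bound with mean $\mu=\eta_0 m$ and deviation $\delta = c\sigma/(n\eta_0)$. This yields a bound of the form
\[
\exp\!\Bigl(-\frac{(c\sigma m/n)^2}{2\eta_0 m + c\sigma m/n}\Bigr) \le \exp\!\bigl(-\Omega(\sigma^2 m/n^2)\bigr).
\]
Using $c>1/8$ and $m\ge 48n^3/\sigma^2$, this is at most $4^{-n}$. The degenerate case $\eta_0=0$ is trivial, since then $\vec e=0$.

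A final union bound over Steps 2 and 3 gives failure probability at most $2\cdot 4^{-n} = 2^{-(2n-1)}$. The main obstacle is Step 1: one must compute the per-coordinate disagreement probability correctly, because a nonzero noise value can cancel $\langle\vec a_i,\vec u\rangle$, and then verify that the margin $(1-\eta)p - c\sigma/n \ge p/2$ survives despite that cancellation. The remaining work is tracking constants, which is routine given the hypotheses.
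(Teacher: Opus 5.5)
Your proposal is correct and follows essentially the same route as the paper: the per-coordinate disagreement probability with $p \ge \sigma/n$ from \cref{general_support}, a lower-tail Chernoff bound with margin $p/2$, a union bound over the $q^n$ candidates (which is what the extra $\log q$ factor pays for), and an upper-tail bound on $\norm{\vec e}_0$. Your exact value $\eta_0 + p(1-\eta)$ is a bit cleaner than the paper's lower bound $\eta_0 + (1-2\eta_0)p$.

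One bookkeeping remark on Step 3. Having fixed $\log$ as natural in Step 2, the hypothesis does not give $m \ge 48 n^3/\sigma^2$ when $q=2$. Step 3 still goes through, because $2\eta_0 + c\sigma/n \le \tfrac14 + \eta < \tfrac38$, so the exponent is at least $\sigma^2 m/(24 n^2) \ge 2n\ln q \ge n\ln 4$.
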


\begin{proof}
     If we sample $\vec A \sim \Ber_q\paren{\frac{q\sigma}{(q-1)n}}^{m \times n}$ and $\e \sim (\Ber_q(\eta))^m$, then for a given $\vec s \in \F_2^n$ and $\vec v \in \F_q^n$ such that $\norm{\vec v-\vec s}_0 = \ell>0$, we have
 \[\mathop{\E}[\norm{(\vec A  \vec s +\vec e) -(\vec A  \vec s)}_0] =\mathop{\mathbb E}[\norm{  \vec e}_0]= \paren{1- \frac{1}{q}}\cdot \eta m,\]
 Let $\vec u = (\vec v - \vec s) \neq 0$, now if we compute the expected distance between $\vec b = \paren{\vec A \vec s+ \vec e}$ and $\vec A \vec v$ we get,
 \begin{align*}
    & \mathop{\E}[\norm{(\vec A  \vec s +\vec e) - (\vec A  \vec v)}_0],\\ &= \sum_{i=1}^{m}\Pr \left[ b_i \neq \langle \vec a_i, \vec v\rangle \right],\\
     &= m \cdot \paren{\Pr [\langle \vec a _i , \vec u \rangle \neq e_i]},\\
     &= m \cdot \paren{\paren{1- \frac{1}{q}}\eta  \cdot \Pr[\langle \vec a_i, \vec u\rangle = 0] +\paren{1- \frac{\eta}{q}} \cdot \Pr[\langle \vec a_i, \vec u \rangle \neq  0]},\\
     &\geq  m \cdot \paren{\paren{1- \frac{1}{q}}\cdot\eta + \left(1-2 \paren{1- \frac{1}{q}}\eta\right)\cdot \Pr[\langle \vec a_i, \vec u \rangle \neq  0]}.
 \end{align*}

 Using \cref{general_support} we get,
 \[\mathop{\E}[\norm{(\vec A  \vec s +\vec e) - (\vec A  \vec v)}_0] \geq  m \cdot \paren{\paren{1- \frac{1}{q}}\cdot\eta + \left(1-2 \paren{1- \frac{1}{q}}\eta\right)\cdot \dfrac{s}{n}}.\]

 From this point, the proof proceeds analogously to \cref{decode_sparse}. The only difference is that, when applying the union bound over all vectors, we now sum over $q^n$ possibilities instead of $2^n$, introducing an additional factor of $\log q$ in the lower bound for $m$.
\qed\end{proof}

\section{Hardness self-amplification from \texorpdfstring{\cite{HS23}}{HS23}}
\label{sec:hs23}

One important conclusion from \cite{HS23} is that, the simple \emph{direct product reduction} is a powerful tool for hardness amplification. In this section, we provide a combinatorial exposition of some of their results. We emphasise that we do not claim any originality in the results.

\subsection{Direct Product Reduction}

Let $(f,\mu)$ be a distributional problem where $f:X \to \bit^*$ is a function defined on domain $X = \mathrm{supp}(\mu)$, $\mu$ is the input distribution. The \emph{$k$-wise direct product} of $(f,\mu)$ is the distributional problem $(f^{\times k},\mu^{\times k})$ defined as follows, for any $x_1,\ldots,x_k \in X$:
\begin{align*}
    &f^{\times k}(x_1, \ldots, x_k) = (f(x_1), \ldots, f(x_k))\; , \\
    &\mu^{\times k}(x_1,\ldots,x_k) = \prod_{i=1}^k \mu(x_i)\; .
\end{align*}
In other words, $\mu^{\times k}$ is simply the product distribution of $k$ independent copies of $\mu$, and input space of $(f^{\times k},\mu^{\times k})$ is $Y := X^k$.

Suppose we are given oracle access to a solver $\mathcal{O}$ for $(f^{\times k},\mu^{\times k})$. We define the direct product reduction $\mathcal{R}^\mathcal{O}$, which is a randomized reduction that uses $\mathcal{O}$ to solve $(f,\mu)$ as follows:

\begin{tcolorbox}[title=Reduction $\mathcal{R}^\mathcal{O}$, colback=white]
On input $x \in X$ and internal randomness $r$:
\begin{enumerate}
    \item For all $j \in [k]$, sample $x_j \sim \mu$ independently.
    \item Sample $i \in [k]$ uniformly at random. 
    \item Replace the sample $x_i$, with our input $x$ and form the tuple,
    \[
    {y} = (x_1,\ldots,x_{i-1},x,x_{i+1},\ldots,x_k) \in Y.
    \]
    \item Query the oracle $\mathcal{O}$ on input $y$.
    \item Let $(z_1,z_2, \cdots, z_i, \cdots  z_k)$ be the output of the $\mathcal{O}$, return $z_i$ as the solution for $x$.
\end{enumerate}
\end{tcolorbox}

\subsection{Hardness Amplification from Direct Product Reduction}
\begin{lemma}\label{lem: one_run}
    Let $(f, \mu)$ be a distributional problem. For $\eps, \delta, c \in [0,1]$ and positive integer $k$ satisfying that $2\exp(-kc^2\delta / 8)\leq c \eps$. Let $\mathcal{O}$ be an oracle that solves its $k$-wise direct product $(f^{\times k}, \mu^{\times k})$ with probability $\eps$ over its input distribution internal randomness. We have

    \[ \Pr_{x \sim \mu}\left[\Pr_{r}\left[\mathcal{R^O}(x,r) = f(x)\right] \geq (1-c)\eps \right] \geq 1-\delta \; .  \]
    
\end{lemma}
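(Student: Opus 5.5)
The plan is a counting (sampler) argument by contradiction. Call an input \emph{bad} if the reduction rarely solves it. Since $\mathcal{O}$ succeeds on a uniformly random $k$-tuple, and a random tuple contains about $k\,\mu(B)$ bad coordinates, the oracle's total success probability $\eps$ must be paid for partly on bad coordinates, where $\mathcal{R}^{\mathcal O}$ does poorly. This will force $\mu(B)<\delta$.

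\textbf{Setup.} Define the bad set
\[
B := \{x : \Pr_r[\mathcal{R^O}(x,r)=f(x)] < (1-c)\eps\},
\]
and suppose for contradiction that $\beta:=\mu(B) > \delta$. Let $S$ be the event, over $y\sim\mu^{\times k}$ and the oracle's randomness, that $\mathcal{O}(y)=f^{\times k}(y)$; by hypothesis $\Pr[S]=\eps$. Let $N(y)=|\{j : y_j\in B\}|$, so that $N\sim \mathrm{Bin}(k,\beta)$. Set the threshold $t:=(1-c/2)k\beta$ and split
\[
\Pr[S] \le \Pr[N< t] + \Pr[S \wedge N\ge t].
\]

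\textbf{First term.} By the lower-tail Chernoff bound with deviation $c/2$,
\[
\Pr[N<t]\le \exp(-kc^2\beta/8)\le \exp(-kc^2\delta/8)\le c\eps/2,
\]
using $\beta>\delta$ and the hypothesis $2\exp(-kc^2\delta/8)\le c\eps$.

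\textbf{Second term (the key step).} I would use a double-counting identity. On the event $N\ge t$ we have $N/t\ge 1$, so
\[
\Pr[S\wedge N\ge t]\le \E\!\left[\mathbf 1_S \cdot \tfrac{N}{t}\right] = \tfrac{k}{t}\Pr_{y,i}[S\wedge y_i\in B],
\]
where $i\sim[k]$ is uniform. Now observe that the joint distribution of $(y,i)$ is exactly the distribution generated by $\mathcal{R}$ on input $x=y_i\sim\mu$: the input is planted at a uniform position among fresh samples. Moreover, $S$ implies that the $i$-th output equals $f(x)$. Hence
\[
\Pr_{y,i}[S\wedge y_i\in B]\le \E_{x\sim\mu}\bigl[\mathbf 1_B(x)\Pr_r[\mathcal{R^O}(x,r)=f(x)]\bigr] < \beta(1-c)\eps .
\]
Substituting $k/t = 1/((1-c/2)\beta)$, the second term is strictly less than $\frac{1-c}{1-c/2}\,\eps\le (1-c/2)\eps$.

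\textbf{Conclusion.} Summing the two bounds gives $\Pr[S] < c\eps/2+(1-c/2)\eps=\eps$, contradicting $\Pr[S]=\eps$. Hence $\mu(B)\le\delta$, which is the claim.

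The main obstacle is the second term: identifying $(y,i)$ with the reduction's query distribution, and choosing the threshold $t$ so that both the Chernoff loss and the ratio $k/t$ fit within the $c\eps$ budget. The strict inequality in the definition of $B$ is what supplies the strict contradiction at the end. Degenerate cases ($\beta=0$, or $c=0$) need only a line each. When $c=0$ the hypothesis forces $2\le 0$, so there is nothing to prove.
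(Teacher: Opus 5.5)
Your proof is correct and is essentially the paper's argument. It uses the same Chernoff bound on the number of bad coordinates of $y\sim\mu^{\times k}$, with threshold $(1-c/2)k\beta$ and loss $c\varepsilon/2$, and the same double counting via the fact that $(y,i)$ is exactly the reduction's query distribution; the same inequality $(1-c/2)^2>1-c$ then closes the argument. The only differences are organizational: you upper-bound the oracle's success probability rather than lower-bounding $\Pr[x\in H_X,\ \mathcal{R}^{\mathcal{O}}(x,r)=f(x)]$, and your step $\Pr[S\wedge y_i\in B]\le\Pr[x\in B,\ \mathcal{R}^{\mathcal{O}}(x,r)=f(x)]$ correctly states as an inequality what the paper writes as an equality.
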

\begin{proof}
    Set $c' := c / 2, \eps' := c\eps / 2$. We use $G(x,r)$ to denote the instance $y \in Y$ generated in the reduction $\mathcal{R^O}(x,r)$.

    Define $H_X:= \{x:\Pr_{r}\left[\mathcal{R^O}(x,r) = f(x)\right] \leq (1-c)\eps \} $ to be the set of hard instances from $X$. Set 
    \[ \delta_1 := \Pr_{x \sim \mu}[x \in H_X] \;. \] 
    Note that it suffices to prove that $\delta_1 < \delta$ and let us assume towards contradiction that $\delta_1 \geq \delta$.
    
    We start with following claim, with the intuition that knowing $y$ does not reveal too much information about whether $x \in H_X$, since the reduction is ``mixing'' especially when $k$ is large:
    \begin{claim}
        \[ \Pr_{y \sim \mu^{\times k}}\left[ \Pr_{x \sim \mu, r}[x \in H_X | G(x,r) = y] \leq (1 - c') \delta_1 \right] \leq \eps' \; . \]
    \end{claim}
  
    \begin{proof}
        For any fixed $y = (x_1, \ldots, x_k)$, notice that 
        \begin{align*} 
            \Pr_{x \sim \mu, r}[x \in H_X | G(x,r) = y]  &= \sum_{j=1}^k \Pr_{i\sim [k]}[i = j] \cdot \mathbb{1}_{x_j \in H_X} \\  
            &= \frac{1}{k} \sum_{j=1}^k  \mathbb{1}_{x_j \in H_X} \; .
        \end{align*}
        The claim follows from applying Chernoff-Hoeffding's inequality by treating $\mathbb{1}_{x_j \in H_X}$ as a random variable about $y$ with expectation $\delta_1$.
        $\hfill \diamond$
    \qed\end{proof}        

    Next, set $w(y) := \Pr_{\mathcal{O}}[\mathcal{O}(y) = f(y)]$. By our assumption on $\mathcal{O}$, we have 
    \[ \Pr_{y \sim \mu^{\times k}, \mathcal{O}}[\mathcal{O}(y) = f(y)] = \sum_{y \in Y} \mu^{\times k}(y) w(y) \geq \eps \; . \]
    Set $S_Y:= \{y\in Y: \Pr_{x \sim \mu, r}[x \in H_X | G(x,r) = y] \leq (1 - c') \delta_1 \}$ be the set of instances from $Y$ that has relatively small contributions towards solving instances in $H_X$.

    Now consider two ways of evaluating the probability $$\Pr_{x\sim \mu, r} [x \in H_X, \mathcal{R^O}(x,r) = f(x)]$$.
    \begin{align*}
             & \Pr_{x\sim \mu, r} [x \in H_X, \mathcal{R^O}(x,r) = f(x)] \\
           = &\sum_{y \in Y} \mu^{\times k}(y) \Pr_{x \sim \mu, r}[x \in H_X | G(x,r) = y] \Pr_{\mathcal{O}}[\mathcal{O}(y) = f(y)] \\
        \geq & \sum_{y \in Y \setminus S_Y} \mu^{\times k}(y) \cdot \Pr_{x \sim \mu, r}[x \in H_X | G(x,r) = y] \cdot w(y) \\
        \geq & (1 - c')\delta_1 \sum_{y \in Y \setminus S_Y} \mu^{\times k}(y) \cdot w(y) \\
        \geq & (1 - c')\delta_1 \left(\sum_{y \in Y} \mu^{\times k}(y) w(y) - \sum_{y \in S_Y} \mu^{\times k}(y) w(y) \right) \\
        \geq & (1 - c')\delta_1 \left(\eps - \eps' \right) = (1 - c')(1 - \eps'/\eps)\eps \delta_1 \; .
    \end{align*}

    On the other hand, we have 

    \begin{align*}
             & \Pr_{x\sim \mu, r} [x \in H_X, \mathcal{R^O}(x,r) = f(x)] \\
        \leq & (1-c)\eps \Pr[x \in H_X] \\
        \leq & (1-c)\eps \delta_1 \; .
    \end{align*}    
    Combining the two inequalities we have $(1 - c')(1 - \eps'/\eps) \leq (1 - c)$ which is a contradiction by our choices of $c'$ and $\eps'$. \qed



\end{proof}
\begin{theorem}\label{product_rules}

    For every $\delta, \varepsilon$, $\gamma>0$ and $k\geq \frac{32}{\delta } \log{\paren{\frac{4}{\eps}}}$ if there exists,
    \begin{itemize}\item An algorithm $\mathcal A$ that takes input $y=(x_1,x_2, \cdots , x_k)$, runs in time $T(|y|)$ and solves the distributional problem $(f^{\times k},\mu ^k)$ with success probability $\varepsilon$.  
    \item An algorithm $\mathcal B$, which takes $(x,z)$ as input, runs in time $t(|x|)$ and verifies if $z=f(x)$, for all but $\vartheta$ fraction of $x$. \end{itemize}Then there exists an algorithm $\mathcal A'$, running in time $\mathcal{O}\paren{\log\paren{\tfrac{1}{\gamma}}\cdot \paren{\tfrac{T(kn)+ t(n)}{\varepsilon}}}$, for solving the distributional problem $(f,\mu)$, with success probability $(1-\delta -\gamma- \vartheta)$.
\end{theorem}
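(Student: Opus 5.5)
The plan is to combine \cref{lem: one_run} with repetition and verification. First I fix the parameters in \cref{lem: one_run}. Take $c = 1/2$. Then the hypothesis $2\exp(-kc^2\delta/8)\le c\eps$ becomes $2\exp(-k\delta/32)\le \eps/2$, which is equivalent to $k \ge \frac{32}{\delta}\ln\frac{4}{\eps}$. This is exactly the assumption on $k$ in \cref{product_rules}, reading $\log$ as the natural logarithm; with any other base the bound only gets stronger, and constants can be absorbed. Applying \cref{lem: one_run} with $\mathcal O := \mathcal A$, we get a set $G\subseteq X$ of $\mu$-mass at least $1-\delta$ such that for every $x\in G$,
\[
\Pr_r\!\left[\mathcal R^{\mathcal A}(x,r)=f(x)\right]\ge \eps/2 .
\]
One run of $\mathcal R^{\mathcal A}$ costs the time to sample $k-1$ fresh instances from $\mu$ (assumed efficiently samplable and dominated by $T(kn)$), plus one call to $\mathcal A$ on an input of size $kn$. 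So one run takes $O(T(kn))$.

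Next I define $\mathcal A'$. On input $x$, it repeats the following for $N := \lceil \frac{2}{\eps}\ln\frac{1}{\gamma}\rceil$ iterations:
\begin{itemize}
\item draw fresh randomness $r$ and compute $z=\mathcal R^{\mathcal A}(x,r)$;
\item run $\mathcal B(x,z)$;
\item if $\mathcal B$ accepts, output $z$ and halt.
\end{itemize}
If no iteration accepts, $\mathcal A'$ outputs $\bot$. The running time is $N\cdot O(T(kn)+t(n)) = O\!\left(\log\frac1\gamma\cdot\frac{T(kn)+t(n)}{\eps}\right)$, as required.

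For correctness, let $V\subseteq X$ be the set on which $\mathcal B$ is a correct verifier, meaning it accepts $(x,z)$ if and only if $z=f(x)$; by assumption $\mu(V)\ge 1-\vartheta$. Fix $x\in G\cap V$. Because $\mathcal B$ is exact on $x$, $\mathcal A'$ never outputs a wrong answer on $x$. It fails only if all $N$ independent runs miss $f(x)$, which happens with probability at most
\[
(1-\eps/2)^N\le e^{-\eps N/2}\le \gamma .
\]
A union bound gives $\mu(G\cap V)\ge 1-\delta-\vartheta$. Hence the overall success probability is
\[
\Pr_{x\sim\mu,\text{coins}}[\mathcal A'(x)=f(x)]\ \ge\ (1-\delta-\vartheta)(1-\gamma)\ \ge\ 1-\delta-\gamma-\vartheta .
\]

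There is no real obstacle here; most of the work is already done in \cref{lem: one_run}. The step needing most care is matching the constant $32$ and the $\log(4/\eps)$ in the statement to the hypothesis of \cref{lem: one_run} through the choice $c=1/2$. The other point is that the verifier may be arbitrary off $V$, so those inputs have to be discarded entirely rather than analysed. One could also state the guarantee in the stronger form of the introduction: probability at least $1-\gamma$ on a $1-\delta-\vartheta$ fraction of inputs.
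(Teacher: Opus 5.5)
Your proposal is correct and is essentially the paper's proof: apply \cref{lem: one_run} with $c=1/2$ (your check that $2\exp(-k\delta/32)\le \varepsilon/2$ is equivalent to the assumed bound on $k$ is right), repeat the reduction $\Theta\bigl(\varepsilon^{-1}\log(1/\gamma)\bigr)$ times with fresh randomness, accept an answer only when $\mathcal B$ accepts it, and union-bound over the $\delta$- and $\vartheta$-exceptional sets of inputs. The only slip is cosmetic: reading $\log$ in another base strengthens the hypothesis on $k$ only for bases at most $e$ (such as base $2$, presumably the intended one), not for every base.
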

\begin{proof}
We will use the following algorithm $\mathcal A'$ in order to solve $(f,\mu)$,
\begin{tcolorbox}[
title= {An Algorithm $\mathcal A'$ for $(f,\mu)$}, colback=white]
Given an input $x$, we perform the following steps, for  $\ell=\left\lceil \frac{2 \log(\gamma^{-1})}{\varepsilon} \right\rceil$ iterations:
\begin{enumerate}
    \item Run the reduction $\mathcal{R^O}$ on $x$, where the oracle $\mathcal{O}$ is instantiated by the algorithm $\mathcal{A}$.
    \item Use $\mathcal{B}$ to verify whether $\mathcal{R}^{\mathcal{A}}(x) = f(x)$, if it is same as $f(x)$, then we output it and break out of the loop.
\end{enumerate}

\end{tcolorbox}
   Using \cref{lem: one_run} with $c=0.5$ we get, if $k\geq \frac{32}{\delta } \log{\paren{\frac{4}{\eps}}}$ then in one single iteration, for $(1-\delta)$ fraction of $x \sim \mu$, we can obtain $f(x)$ with probability $\geq 0.5\eps$. Moreover, for a $(1-\vartheta)$ fraction of $x\sim \mu$, the verifier succeeds in checking a proposed solution. Hence using union bound, for $(1-\delta -\vartheta)$ fraction of $x$, with probability $1 - (1 - 0.5\eps)^{2 \log(\gamma^{-1})/\eps} \geq 1 - \gamma$, we can obtain the solution for $x$ from the $\ell$ iterations. Using union bound, the overall success probability  of the above algorithm is at least $(1-\delta-\gamma-\vartheta)$.  Since in every iteration we make only one call to $\mathcal A$, the running time is clearly $\mathcal O(\ell \cdot (T(kn)+t(n))) = \mathcal O(\log(\gamma^{-1}) \eps^{-1}(T(kn)+t(n)))$. 
\qed\end{proof}

\section{Hardness Self Amplification for Search-\texorpdfstring{$\lpn$}{LPN}}
\label{sec:lpn_plain}

 In this section, we prove that direct-product self-reductions amplify average-case hardness for the standard $\lpn$.
Concretely, for any $\eps, \delta > 0$ and $k \in \Omega\!\left(\frac{1}{\delta} \log \frac{1}{\eps}\right)$, an algorithm that solves $\lpn_{q}$ in $n'$ dimensions with noise rate $\eta'$ and success probability $\eps$ can be turned into an algorithm that solves $\lpn_{q}$ in $n = n'/k$ dimensions with base noise rate $\eta > \eta'/k$, achieving success probability at least $1 - 2^{-\Omega(n)}$ on a $1-\delta$ fraction of instances.

\begin{theorem}\label{thm: main_gen_lpn}
    Let $m, n, k$ be positive integers with $m> \paren{\frac{2}{(1-H_q(4 \eta))}}\cdot n$. Let $\eps> 0$, $\delta > q^{-\Omega(n)}$, $\eta \in (0,1/8)$, and let $ \frac{192}{\delta} \cdot \log\!\left(\frac{8}{\eps}\right)< k <\frac{\epsilon}{8}\cdot \min \paren{q^{n},e^{\eta m /3}}$. There is a $\text{poly}(n, k, 1/\delta, 1/\eps)$-time reduction that, given an oracle that succeeds with probability $\eps$ for $\lpn_{q,\paren{{1-(1-\eta)^k}},kn,m}$, solves $\lpn_{q,\eta, n, m}$ with probability $1 - \delta$.
\end{theorem}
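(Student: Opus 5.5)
We will instantiate \cref{product_rules} with the base distributional problem $(f,\mu)$ being $\lpn_{q,\eta,n,m}$, where $\mu$ is the distribution of $(\vec A,\vec b=\vec A\vec s+\vec e)$ and $f(\vec A,\vec b)=\vec s$. For instances that are not uniquely decodable, we fix $f$ to output the unique vector within distance $2\eta m$ if it exists, and otherwise an arbitrary value. The plan has three ingredients: a solver for the $k$-wise product built from the given oracle, a verifier, and a parameter check.

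\textbf{Step 1 (product-to-single embedding).} Given $k$ instances $(\vec A_i,\vec b_i)$ with $\vec A_i\in\F_q^{m\times n}$, we form $\vec A=[\vec A_1\mid\cdots\mid\vec A_k]\in\F_q^{m\times kn}$ and $\vec b=\sum_i\vec b_i=\vec A\vec s+\sum_i\vec e_i$, where $\vec s=(\vec s_1,\ldots,\vec s_k)$. Then $\vec A$ is uniform, $\vec s$ is uniform in $\F_q^{kn}$, and by \cref{ber_q} the noise is distributed as $\Ber_q(1-(1-\eta)^k)^m$, independent of $(\vec A,\vec s)$. Hence $(\vec A,\vec b)$ is exactly distributed as $\lpn_{q,1-(1-\eta)^k,kn,m}$. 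We feed it to the oracle and split its output into $k$ blocks. Whenever the oracle recovers $\vec s$, every block is correct, so we obtain a solver $\mathcal A$ for $(f^{\times k},\mu^{\times k})$ that succeeds with probability $\eps$.

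The one subtlety is that $f$ on non-decodable tuples may disagree with the true $\vec s_i$. We handle this by measuring the oracle's success against the true secrets and charging the bad event to the failure budget. By \cref{corollary:decode}, each coordinate is non-decodable with probability at most $q^{-n}+e^{-\eta m/3}$, so a union bound over $k$ coordinates loses at most $k(q^{-n}+e^{-\eta m/3})\le \eps/4$, using the upper bound $k<\frac{\eps}{8}\min(q^n,e^{\eta m/3})$. The product solver therefore still succeeds with probability at least $\eps/2$ with respect to $f^{\times k}$.

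\textbf{Step 2 (verifier).} The verifier $\mathcal B$ on input $((\vec A,\vec b),\vec z)$ accepts if and only if $\norm{\vec b-\vec A\vec z}_0\le 2\eta m$. By \cref{corollary:decode}, on all but a $\vartheta=q^{-n}+e^{-\eta m/3}$ fraction of instances, this accepts exactly $\vec z=\vec s$. It runs in $\mathrm{poly}(n,m)$ time.

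\textbf{Step 3 (parameters).} We apply \cref{product_rules} with success $\eps/2$ and target $\delta/3$. This requires $k\ge\frac{96}{\delta}\log\frac{8}{\eps}$, which is implied by the hypothesis $k>\frac{192}{\delta}\log\frac{8}{\eps}$. We take $\gamma=\delta/3$. The resulting success probability is $1-\delta/3-\gamma-\vartheta$. Because $\delta>q^{-\Omega(n)}$ and $e^{-\eta m/3}$ is negligible in this regime (with $m=\Omega(n)$), we have $\vartheta\le\delta/3$, so the success probability is at least $1-\delta$. The running time is $O(\log(1/\delta)\cdot\eps^{-1}\cdot(T+\mathrm{poly}))$ oracle calls plus polynomial overhead.

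\textbf{Main obstacle.} The most delicate point is Step 1: ensuring that ``solving the product'' is well defined with respect to $f$ despite non-unique decoding, and absorbing the loss $k\vartheta$ into $\eps$. This is exactly why the upper bound on $k$ appears in the statement, and getting the constants consistent with the stated $192/\delta$ requires some care.
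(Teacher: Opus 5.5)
Your proposal is correct and follows the paper's proof essentially step for step. It uses the same concatenate-and-sum embedding, the same decodability-based $f$ with the distance-$2\eta m$ verifier, the same use of $k<\frac{\varepsilon}{8}\min(q^n,e^{\eta m/3})$ to absorb the $k\vartheta$ loss, and the same choices $\delta/3$ and $\gamma=\delta/3$ in \cref{product_rules}.

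The one point to tighten is your justification of $\vartheta\le\delta/3$. It follows from the two-sided constraint on $k$: both $q^{-n}$ and $e^{-\eta m/3}$ are below $\varepsilon/(8k)<\varepsilon\delta/(1536\log(8/\varepsilon))$, which is far below $\delta/6$. The assumption $m=\Omega(n)$ alone does not make $e^{-\eta m/3}$ small when $\eta$ is small. This slack also covers the extra $\vartheta$ needed to pass from computing $f$ to recovering the true $\vec s$, which the paper subtracts explicitly.
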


Let $\phi_k(\eta)=\paren{{1-(1-\eta)^k}}$.

We first show that a $k$-fold product of small $\lpn_q$ instances can be combined into a single larger $\lpn_q$ instance by concatenating the challenge matrices and summing the noisy responses, thereby matching the transformed noise rate $\phi_k(\eta)$. After establishing this reduction, we will apply the algorithm in \cref{product_rules} in order to conclude the proof of \cref{thm: main_gen_lpn}.

\begin{lemma}
\label{gen_lpn_lemma}
The product distributional problem $(\text{Search-}\lpn_{q,\eta,n,m})^k$ consists of $k$ independent samples from $\lpn_{q,\eta,n,m}$,
$
\big((\vec A_{1}, \vec{b}_1=\vec A_{1} \cdot \vec s_{1} + \vec e_{1}), \ldots, (\vec A_{k}, \vec{b}_k=\vec A_{k} \cdot \vec s_{k} + \vec e_{k})\big),
$
and the goal is to recover $\paren{\vec s_1, \cdots ,\vec s_k}$. Suppose there exists a $T(kn)$-time algorithm $\mathcal{A}$ that solves Search-$\lpn_{q,\phi_k(\eta),kn, m}$ with success probability $\eps$. Then there exists an algorithm $\Tilde{\mathcal A}$, running in time $T(kn)$, which solves $(\text{Search-}\lpn_{q,\eta, n,m})^k$ with success probability $\varepsilon$.
\end{lemma}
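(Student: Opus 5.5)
The plan is to build $\Tilde{\mathcal A}$ directly from $\mathcal A$ through the product-to-single embedding sketched in the introduction. On input the $k$ instances $(\vec A_1,\vec b_1),\ldots,(\vec A_k,\vec b_k)$, where each $\vec A_j\in\F_q^{m\times n}$, $\Tilde{\mathcal A}$ forms
\[
\vec A := [\vec A_1\mid \vec A_2\mid\cdots\mid \vec A_k]\in \F_q^{m\times kn},
\qquad
\vec b := \sum_{j=1}^k \vec b_j \in \F_q^m .
\]
It runs $\mathcal A$ on $(\vec A,\vec b)$ and receives a candidate $\vec z\in\F_q^{kn}$. It then splits $\vec z$ into consecutive blocks $(\vec z_1,\ldots,\vec z_k)$ with $\vec z_j\in\F_q^n$ and outputs these blocks. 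The preprocessing and postprocessing are linear-time additions and concatenations, which are dominated by the call to $\mathcal A$, so the running time is $T(kn)$ up to lower-order terms.

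The central step is to show that $(\vec A,\vec b)$ is distributed exactly as an instance of $\lpn_{q,\phi_k(\eta),kn,m}$ whose secret is $\vec s:=(\vec s_1,\ldots,\vec s_k)$. First, $\vec s$ is uniform on $\F_q^{kn}$, because the $\vec s_j$ are independent and uniform. Second, $\vec A$ is uniform on $\F_q^{m\times kn}$, because its blocks are independent and uniform. Third, by block multiplication,
\[
\vec b=\sum_j(\vec A_j\vec s_j+\vec e_j)=\vec A\vec s+\vec e,
\qquad\text{where } \vec e:=\sum_j \vec e_j .
\]
The coordinates of $\vec e$ are independent across rows, since the $\vec e_j$ are independent with i.i.d.\ coordinates. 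Each coordinate is a sum of $k$ independent $\Ber_q(\eta)$ variables, so by \cref{ber_q} it is distributed as $\Ber_q(\phi_k(\eta))$. Finally, the triple $(\vec s,\vec A,\vec e)$ is mutually independent, because the original secrets, matrices and noises were mutually independent and we apply disjoint functions of them. Together these facts show that the joint law of $(\vec A,\vec b,\vec s)$ matches the law for $\lpn_{q,\phi_k(\eta),kn,m}$.

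It follows that $\Pr[\mathcal A(\vec A,\vec b)=\vec s]\ge\eps$. Whenever $\mathcal A$ succeeds, $\vec z=\vec s$, so $\vec z_j=\vec s_j$ for every $j$. Hence $\Tilde{\mathcal A}$ solves $(\text{Search-}\lpn_{q,\eta,n,m})^k$ with success probability at least $\eps$.

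I do not expect a serious obstacle. The only point needing care is that the distributional identity must hold for the joint law, including the secret, and not just for the pair $(\vec A,\vec b)$. This is what guarantees that $\mathcal A$'s success probability, which is averaged over $\vec s,\vec A,\vec e$, carries over unchanged to the product problem. The noise calculation is supplied by \cref{ber_q}.
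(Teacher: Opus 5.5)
Your proposal is correct and follows essentially the same route as the paper: you concatenate the $\vec A_j$, sum the $\vec b_j$, invoke \cref{ber_q} for the noise law $\Ber_q(\phi_k(\eta))$, and read off $(\vec s_1,\ldots,\vec s_k)$ from $\mathcal A$'s output. Your explicit check that the joint law of $(\vec s,\vec A,\vec e)$ matches the target distribution is a point the paper uses only implicitly.
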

\begin{proof}
    Given an instance
    $
    y = \big((\vec A_{1}, \vec b_1=\vec A_{1} \cdot \vec s_{1} + \vec e_{1}), \ldots, (\vec A_{k}, \vec b_k=\vec A_{k} \cdot \vec s_{k} + \vec e_{k})\big)
    $
    of $(\text{Search-}\lpn_{q,\eta, n,m})^k$, we define $\paren{\vec A^*, \vec b^*}$ as follows,
    \[ 
    \vec{A}^* = \begin{pmatrix} \vec{A}_1 & \vec{A}_2 & \ldots& \vec{A}_k \end{pmatrix}, \qquad \vec b^*=\sum_{i=1}^{k}\vec b_i.
    \]
    Denote $\vec{s}^* = (\vec{s}_1, \vec{s}_{2},\ldots, \vec{s}_k)$ and $d=nk$. Notice that $\vec{A}^*$ is uniformly distributed over $\F_q^{m \times d}$ and $\vec{s}^*$ is uniformly distributed over $\F_q^{d}$.
    Moreover, let $\vec{e}^* =  \sum_{i\in [k] } \vec{e}_i$, then we have $\vec{b}^* = \vec{A}^*\vec{s}^* + \vec{e}^*$. Since $\vec{e}^*$ is the sum of $k$ independent random variables with distribution $(\Ber_q(\eta))^m$, by \cref{ber_q} it is exactly distributed as $\paren{\Ber_q\paren{{1-(1-\eta)^k}}}^m = \left(\Ber_q(\phi_k(\eta))\right)^m$. So we get
    $
    \vec A^* \sim \F_q^{m \times d}, \quad \vec{s}^*\sim \F_q^d, \quad \vec e^* \sim \left(\Ber_q(\phi_k(\eta))\right)^m, \quad \vec b^*= \vec A^*\cdot \vec s^* + \vec e^*.
    $
    Thus, we get an instance $(\vec A^*, \vec b^*)$ of $\text{Search-}\lpn_{q,\phi_k(\eta),kn,m}$. Running $\mathcal A$ on this instance yields $\vec s^*$ with success probability $\varepsilon$. Finally, note that $\vec s^* = (\vec s_1, \ldots, \vec s_k)$ is precisely the solution to the original instance $y$ as well. Therefore, we obtain an algorithm $\widetilde{\mathcal A}$ that solves $(\text{Search-}\lpn_{q,\eta,n,m})^k$ with success probability $\varepsilon$ in time $T(kn)$.
\qed\end{proof}

\begin{proof}[Proof of \cref{thm: main_gen_lpn}]
By \cref{gen_lpn_lemma}, the assumed $T(kn)$-time solver $\mathcal{A}$ for 
$\lpn_{q,\phi_k(\eta),kn,m}$ can be transformed into a $T(kn)$-time algorithm 
$\Tilde{\mathcal{A}}$ that solves the product problem 
$(\text{Search-}\lpn_{q,\eta,n,m})^k$ with success probability $\varepsilon$.

Let $\mu$ be the input distribution of $\lpn_{q,\eta,n,m}$. In order to apply \cref{product_rules}, we define the function $f:\F_q^{m\times n}\times \F_q^{m} \rightarrow \F_q^n \cup \{\perp \}$ as follows:
$
f(\vec A, \vec b) = \begin{cases}
    \vec{s} & \text{there exists a unique $\vec s$ such that } \|\vec b - \vec{As}\|_0 \leq 2\eta \cdot m \\
    \perp & \text{otherwise}
\end{cases} \; .
$

Notice that $\Tilde{\mathcal{A}}$ also solves $(f^{\times k}, \mu^{\times k})$ with success probability at least $\varepsilon - k\big(q^{-n} + e^{-\eta m /3}\big) \ge \varepsilon/2$. In particular, the extra error comes from the (non-decodable) event stipulated by \cref{corollary:decode}, which happens with probability at most $q^{-n} + e^{-\eta m /3}$ for each small instance.

In particular, here is a simple verification algorithm (required for \cref{product_rules}) for $(f, \mu)$: accept if and only if $\|\vec b - \vec{As}\|_0 \leq 2\eta \cdot m$.

Now applying \cref{product_rules} with the following parameters $(\delta' = \delta/3, \, \gamma' = \delta /3, \, \vartheta'= q^{-n} + e^{-\eta m /3})$, any $T(kn)$-time solver for 
$(f^{\times k}, \mu^{\times k})$ with success probability $\eps/2$ 
can be converted into an algorithm $\mathcal{A}'$ for $(f,\mu)$ with success probability $1 - 2\delta/3 - (q^{-n} + e^{-\eta m /3})$. Lastly, $\mathcal{A'}$ also solves $\lpn_{q,\eta,n,m}$ with the same success probability, except for a $(q^{-n} + e^{-\eta m /3})$ fraction of (not uniquely decodable) instances. The resulting algorithm runs in time $O((1/\varepsilon)\log(1/\delta)\cdot T(kn))$ and succeeds with probability at least $1-2\delta/3 - 2\big(q^{-n} + e^{-\eta m /3}\big) > 1 - \delta$ over the input distribution and its internal randomness as desired.
\qed\end{proof}


Using the above theorem with $q=2$, we obtain the following result for the binary case $\lpn_{\eta,n,m}$.
\begin{corollary}\label{thm: main_lpn}
    Let $m, n, k$ be positive integers with $m> \paren{\frac{2}{(1-H_2(4 \eta))}}\cdot n$. Let $\eps> 0$, $\delta > 2^{-\Omega(n)}$, $\eta \in (0,1/8)$, and let $ \frac{192}{\delta} \cdot \log\!\left(\frac{8}{\eps}\right)< k <\frac{\epsilon}{8}\cdot \min \paren{2^{n},e^{\eta m /3}}$. There is a $\text{poly}(n, k, 1/\delta, 1/\eps)$-time reduction that, given an oracle that solves $\lpn_{\frac{1-(1-2\eta)^k}{2}, nk, m}$  with success probability $\eps$, solves $\lpn_{\eta,n,m}$ with probability $1 - \delta$.
\end{corollary}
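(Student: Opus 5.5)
The plan is to specialize \cref{thm: main_gen_lpn} to $q=2$ and translate the noise parameterization. The only difference between the binary and the $q$-ary settings is the noise distribution. As observed after the definition of $\Ber_q$, the distribution $\Ber(\eta)$ coincides with $\Ber_2(2\eta)$. So an instance of $\lpn_{\eta,n,m}$ is exactly an instance of $\lpn_{2,2\eta,n,m}$. Applying \cref{thm: main_gen_lpn} naively with parameter $2\eta$ would change the hypotheses, e.g.\ it would require $2\eta<1/8$ and $H_2(8\eta)$. I therefore prefer to rerun the proof of \cref{thm: main_gen_lpn} verbatim in the binary setting rather than formally invoking it.

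\textbf{Step 1 (embedding).} Redo \cref{gen_lpn_lemma} over $\F_2$. Given $k$ instances $(\vec A_i,\vec b_i)$ of $\lpn_{\eta,n,m}$, form $\vec A^*=(\vec A_1\mid\cdots\mid\vec A_k)$ and $\vec b^*=\bigoplus_i \vec b_i$. Then $\vec A^*$ is uniform in $\F_2^{m\times kn}$, the secret $\vec s^*=(\vec s_1,\dots,\vec s_k)$ is uniform, and by \cref{ber_2} the noise $\vec e^*=\bigoplus_i\vec e_i\sim\Ber(\theta_k(\eta))^m$ with $\theta_k(\eta)=\frac{1-(1-2\eta)^k}{2}$. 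Hence an $\eps$-solver for $\lpn_{\theta_k(\eta),nk,m}$ solves the $k$-fold product with probability $\eps$ in the same time.

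\textbf{Step 2 (decodability and verification).} Define $f(\vec A,\vec b)$ as the unique $\vec s$ with $\|\vec b-\vec A\vec s\|_0\le 2\eta m$, and $\perp$ otherwise. Use \cref{corollary:decode} with $q=2$; this is valid since $\eta<1/8=\frac14(1-\frac12)$ and $m>\frac{2}{1-H_2(4\eta)}n$. It says each small instance is uniquely decodable except with probability $2^{-n}+e^{-\eta m/3}$. Lemma~\ref{lemma:error} is stated directly for $\Ber(\eta)$, so no conversion is needed there. The verifier accepts iff $\|\vec b-\vec A\vec s\|_0\le 2\eta m$.

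\textbf{Step 3 (amplify).} The upper bound $k<\frac{\eps}{8}\min(2^n,e^{\eta m/3})$ gives $k(2^{-n}+e^{-\eta m/3})\le\eps/2$. So the product solver also solves $(f^{\times k},\mu^{\times k})$ with probability at least $\eps/2$. Apply \cref{product_rules} with success $\eps/2$, $\delta'=\gamma'=\delta/3$ and $\vartheta'=2^{-n}+e^{-\eta m/3}$. The lower bound on $k$ gives $k\ge\frac{32}{\delta/3}\log\frac{4}{\eps/2}$, which holds since $192/\delta\cdot\log(8/\eps)$ exceeds $96/\delta\cdot\log(8/\eps)$. The resulting success probability is $1-2\delta/3-2(2^{-n}+e^{-\eta m/3})$. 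This exceeds $1-\delta$ because $\delta>2^{-\Omega(n)}$, together with $e^{-\eta m/3}$ being small relative to $\delta$. That last point is the same implicit assumption as in \cref{thm: main_gen_lpn}, and it is the only step I expect to need care. The running time is $O(\eps^{-1}\log(1/\delta)\,T(kn))$, which is polynomial.
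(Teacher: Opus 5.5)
Your proposal is correct, but it takes a different route from the paper. The paper proves the corollary in one line. It identifies $\mathrm{Ber}(\eta)$ with $\mathrm{Ber}_2(2\eta)$ and substitutes into \cref{thm: main_gen_lpn}. As you observe, that substitution taken literally yields the hypotheses $2\eta<1/8$, $m>\frac{2}{1-H_2(8\eta)}\,n$ and $k<\frac{\varepsilon}{8}\min(2^n,e^{2\eta m/3})$, which are not the ones in the statement. Rerunning the argument directly over $\mathbb{F}_2$ avoids this. You use the embedding via \cref{ber_2}, the decoding radius $2\eta m$ obtained from the GV bound plus the $\mathrm{Ber}(\eta)$ half of \cref{lemma:error}, and then \cref{product_rules}. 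Note that you are using the proof of \cref{corollary:decode}, not its statement, since the statement concerns $\mathrm{Ber}_2(\eta)$ noise. This delivers exactly the stated conditions $\eta<1/8$, $H_2(4\eta)$ and $e^{\eta m/3}$. The paper's route is shorter, but yours is the one that actually matches the parameters as written.

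The final step you flagged as an ``implicit assumption'' in fact follows from the two-sided bound on $k$. Chaining $\frac{192}{\delta}\log\frac{8}{\varepsilon}<k<\frac{\varepsilon}{8}\min(2^n,e^{\eta m/3})$ gives
\[
2^{-n},\ e^{-\eta m/3}\;<\;\frac{\varepsilon\,\delta}{1536\,\log(8/\varepsilon)}\;\le\;\frac{\delta}{1536}.
\]
Hence $2(2^{-n}+e^{-\eta m/3})<\delta/3$, and the success probability $1-2\delta/3-2(2^{-n}+e^{-\eta m/3})$ exceeds $1-\delta$ with no further assumption. The vague hypothesis $\delta>2^{-\Omega(n)}$ is not even needed for this step.
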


\begin{proof}
    Since $\lpn_{2,\eta',n,m}$ is identical to $\lpn_{\eta'/2,n,m}$, substituting $\eta'=2\eta$ in the base instance and $\eta'=1-(1-2\eta)^k$ in the scaled instance gives the claimed binary statement.
\end{proof}
\section{Hardness of Sparse Search-\texorpdfstring{$\lpn$}{LPN}}\label{sec:sparse_section}
\subsection{Hardness Self Amplification of Search-\texorpdfstring{$\lpn_{\eta,n,m}^{\approx \sigma}$}{Approx Sparse LPN}}
We will again follow the same template as \cref{sec:lpn_plain} in this section in order to prove hardness self amplification result about $\lpn^{\approx \sigma}_{q,\eta,n,m}$. 
\begin{theorem}\label{thm: main_sparse_lpn_q}
    Let $m, n, k, \sigma$ be positive integers with $m\geq 48\cdot (n^3\log q/\sigma^2)$. Let $\eps > 0, \delta > 6(2^{-(2n-1)}), \eta \in (0, 1/8)$, and let $\paren{\frac{192}{\delta} \cdot \log\!\left(\frac{8}{\epsilon}\right)}<k< 2^{(2n-2)} \cdot \varepsilon$. There is a $\poly(n, k, 1/\delta, 1/\epsilon)$-time reduction that, given an oracle which succeeds with probability $\eps$ for
    $\lpn_{q,\paren{{1-(1-\eta)^k}},kn, m}^{\approx k\sigma}$, solves $\lpn_{q,\eta,n,m}^{\approx \sigma}$ with success probability $1 - \delta$.
\end{theorem}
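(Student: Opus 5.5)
The proof follows the template of \cref{thm: main_gen_lpn}. There are three steps: a product-to-single embedding lemma for the approximate-sparse distribution, a definition of $f$ together with a verifier based on \cref{decode_sparse_q}, and an application of \cref{product_rules}.

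\textbf{Step 1 (embedding).} Given $k$ independent instances $(\vec A_i,\vec b_i=\vec A_i\vec s_i+\vec e_i)$ of $\lpn^{\approx\sigma}_{q,\eta,n,m}$, form $\vec A^*=(\vec A_1\mid\cdots\mid\vec A_k)\in\F_q^{m\times kn}$ and $\vec b^*=\sum_i\vec b_i$, and set $\vec s^*=(\vec s_1,\ldots,\vec s_k)$. Three facts are needed.
\begin{itemize}
\item Each entry of $\vec A_i$ is i.i.d.\ $\Ber_q\bigl(\frac{q\sigma}{(q-1)n}\bigr)$. Since $\frac{q\sigma}{(q-1)n}=\frac{q\cdot k\sigma}{(q-1)\cdot kn}$, the entries of $\vec A^*$ are i.i.d.\ with exactly the parameter defining $\lpn^{\approx k\sigma}_{q,\cdot,kn,m}$. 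This is the point where the approximate-sparse distribution is closed under concatenation.
\item The secret $\vec s^*$ is uniform on $\F_q^{kn}$.
\item The noise $\vec e^*=\sum_i\vec e_i$ is distributed as $(\Ber_q(\phi_k(\eta)))^m$ by \cref{ber_q}, independently of $\vec A^*$ and $\vec s^*$.
\end{itemize}
Hence $(\vec A^*,\vec b^*)$ is distributed exactly as $\lpn^{\approx k\sigma}_{q,\phi_k(\eta),kn,m}$. The oracle therefore solves the $k$-fold product with probability $\eps$ and the same running time, as in \cref{gen_lpn_lemma}.

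\textbf{Step 2 (function and verifier).} Let $\tau=\bigl((1-\tfrac1q)\eta+c\,\tfrac{\sigma}{n}\bigr)m$ with $c=0.25-\eta$. Define $f(\vec A,\vec b)=\vec s$ if $\vec s$ is the unique vector with $\norm{\vec b-\vec A\vec s}_0<\tau$, and $f(\vec A,\vec b)=\perp$ otherwise. The verifier $\mathcal B$ on input $(\vec A,\vec b,\vec z)$ accepts iff $\norm{\vec b-\vec A\vec z}_0<\tau$. This takes time $\poly(n,m)$.

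By \cref{decode_sparse_q}, which applies because $m\ge 48(n^3\log q/\sigma^2)$ and $\eta<1/8$, every instance outside a set of measure $\vartheta\le 2^{-(2n-1)}$ satisfies two things: $f$ equals the true secret, and $\mathcal B$ verifies correctly. So the oracle's success on $(f^{\times k},\mu^{\times k})$ is at least $\eps-k\cdot 2^{-(2n-1)}$. The hypothesis $k<2^{2n-2}\eps$ gives $k\cdot 2^{-(2n-1)}<\eps/2$, so this success is at least $\eps/2$.

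\textbf{Step 3 (amplification).} Apply \cref{product_rules} with success parameter $\eps/2$ and with $\delta'=\gamma'=\delta/3$. Its requirement is
\[
k\ge\frac{32}{\delta/3}\log\frac{4}{\eps/2}=\frac{96}{\delta}\log\frac{8}{\eps},
\]
which the lower bound $k>\frac{192}{\delta}\log\frac8\eps$ covers. The result is an algorithm for $(f,\mu)$ with success probability $1-2\delta/3-\vartheta$, running in time $O(\eps^{-1}\log(1/\delta)(T(kn)+\poly(n,m)))$.

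Converting back to the actual search problem costs at most another $\vartheta$, because $f$ agrees with the secret except on a $\vartheta$-fraction of instances. The total loss is then $2\delta/3+2\cdot 2^{-(2n-1)}$, and this is less than $\delta$ because $\delta>6\cdot 2^{-(2n-1)}$.

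\textbf{Main obstacle.} The delicate point is Step 2: checking that \cref{decode_sparse_q} gives a verification threshold $\tau$ that is simultaneously met by the true secret and rejects every other candidate. This needs the true noise weight to stay below $\tau$ with overwhelming probability, even though the noise term is stated with a $(1-\tfrac1q)\eta$ expectation. I would rely on the lemma as stated for this. Everything else is bookkeeping that mirrors \cref{thm: main_gen_lpn}.
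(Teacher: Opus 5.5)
Your proposal is correct and follows the paper's proof essentially step for step: the concatenation embedding of \cref{sparse_lpn_lemma}, the decodability-based $f$ and distance-threshold verifier from \cref{decode_sparse_q}, and \cref{product_rules} applied with success parameter $\varepsilon/2$, $\delta'=\gamma'=\delta/3$ and $\vartheta=2^{-(2n-1)}$ (your strict threshold $<\tau$ matches the lemma, where the paper writes $\le$). Your explicit checks that $k\cdot 2^{-(2n-1)}<\varepsilon/2$, and that the hypothesis $k>\frac{192}{\delta}\log\frac{8}{\varepsilon}$ covers the required $\frac{96}{\delta}\log\frac{8}{\varepsilon}$, fill in bookkeeping the paper leaves implicit.
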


    

Let $\phi_k(\eta)=\paren{{{1-(1-\eta)^k}}{}}$.
We will now reduce $(\text{Search-}\lpn_{q,\eta,n,m}^{\approx \sigma})^k$ to  $\text{Search-}\lpn_{q,\phi_k(\eta),kn,m}^{\approx k\sigma}$. After establishing this reduction we will use the algorithm in \cref{product_rules} in order to get  \cref{thm: main_sparse_lpn_q}, 
\begin{lemma}
\label{sparse_lpn_lemma}
The product distributional problem $(\text{Search-}\lpn_{q,\eta,n,m}^{\approx \sigma})^k$ consists of $k$ independent samples from $\lpn_{q,\eta,n,m}^{\approx \sigma}$,
\[
\big((\vec A_{1}, \vec{b}_1=\vec A_{1} \cdot \vec s_{1} + \vec e_{1}), \ldots, (\vec A_{k}, \vec{b}_k=\vec A_{k} \cdot \vec s_{k} + \vec e_{k})\big),
\] and the goal is to recover $\paren{\vec s_1, \cdots ,\vec s_k}$. Suppose there exists a $T(kn)$-time algorithm $\mathcal{A}$ that solves Search-$\lpn_{q,\phi_k(\eta),kn, m}^{\approx k\sigma}$ with success probability $\eps$.

Then there exists an algorithm $\Tilde{\mathcal A}$, running in time $T(kn)$ which solves $(\text{Search-}\lpn_{q,\eta, n,m}^{\approx \sigma})^k$ with success probability $\varepsilon$.
\end{lemma}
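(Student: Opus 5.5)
We mirror the proof of \cref{gen_lpn_lemma}. Given an instance $y=\big((\vec A_1,\vec b_1),\ldots,(\vec A_k,\vec b_k)\big)$ of $(\text{Search-}\lpn^{\approx\sigma}_{q,\eta,n,m})^k$ with $\vec b_i=\vec A_i\vec s_i+\vec e_i$, we form
\[
\vec A^*=\begin{pmatrix}\vec A_1 & \vec A_2 & \cdots & \vec A_k\end{pmatrix}\in\F_q^{m\times kn},\qquad \vec b^*=\sum_{i=1}^k\vec b_i .
\]
With $\vec s^*=(\vec s_1,\ldots,\vec s_k)$ and $\vec e^*=\sum_i\vec e_i$, block multiplication gives $\vec b^*=\vec A^*\vec s^*+\vec e^*$. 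We then run $\mathcal A$ on $(\vec A^*,\vec b^*)$ and return its output, parsed into $k$ blocks of length $n$. The running time is $T(kn)$ plus the linear cost of forming $(\vec A^*,\vec b^*)$.

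It remains to check that $(\vec A^*,\vec b^*)$ has exactly the distribution of $\lpn^{\approx k\sigma}_{q,\phi_k(\eta),kn,m}$. Once that holds, success probability $\eps$ transfers verbatim, since $\vec s^*$ determines every $\vec s_i$.

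\textbf{Secret and noise.} The $\vec s_i$ are independent and uniform on $\F_q^n$, so $\vec s^*$ is uniform on $\F_q^{kn}$. The coordinates of $\vec e^*$ are independent across rows, and each is a sum of $k$ independent $\Ber_q(\eta)$ variables. By \cref{ber_q}, $\vec e^*\sim(\Ber_q(\phi_k(\eta)))^m$.

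\textbf{Matrix.} Each entry of $\vec A_i$ is i.i.d.\ $\Ber_q\big(\tfrac{q\sigma}{(q-1)n}\big)$. The target distribution for dimension $kn$ and sparsity $k\sigma$ has entries i.i.d.\ $\Ber_q\big(\tfrac{q\cdot k\sigma}{(q-1)\cdot kn}\big)=\Ber_q\big(\tfrac{q\sigma}{(q-1)n}\big)$, which is the same per-entry law. Concatenating independent blocks therefore gives i.i.d.\ entries with the correct parameter; this scale-invariance is exactly why the approximate-sparse model was introduced.

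\textbf{Independence.} Finally we check that $\vec A^*$, $\vec s^*$ and $\vec e^*$ are mutually independent. This holds because each is a function of a disjoint group of the mutually independent collections $\{\vec A_i\}$, $\{\vec s_i\}$ and $\{\vec e_i\}$.

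The main (mild) obstacle is bookkeeping the parameter match in the $q$-ary sparse definition, namely confirming that the Bernoulli rate depends only on the ratio $\sigma/n$. Note also that the definitions require $\tfrac{q\sigma}{(q-1)n}\le 1$; this constraint is likewise invariant under scaling $(n,\sigma)\mapsto(kn,k\sigma)$, so the target instance is well defined whenever the base one is.
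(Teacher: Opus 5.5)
Your proposal is correct and follows the paper's proof essentially verbatim: concatenate the $\vec A_i$, sum the $\vec b_i$, observe that the per-entry rate $\tfrac{q\sigma}{(q-1)n}$ is invariant under $(n,\sigma)\mapsto(kn,k\sigma)$, and invoke \cref{ber_q} for the noise. Your explicit independence check and your note on the linear cost of forming $(\vec A^*,\vec b^*)$ are small additions that the paper leaves implicit.
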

\begin{proof}
    Given an instance, \[y = \big((\vec A_{1}, \vec b_1=\vec A_{1} \cdot \vec s_{1} + \vec e_{1}), \ldots, (\vec A_{k}, \vec b_k=\vec A_{k} \cdot \vec s_{k} + \vec e_{k})\big),\] of  $(\text{Search-}\lpn_{q,\eta, n,m}^{\approx \sigma})^k$, we define $\paren{\vec A^*, \vec b^*}$ as follows,
    \[\vec{A}^* = \begin{pmatrix} \vec{A}_1 & \vec{A}_2 & \ldots& \vec{A}_k \end{pmatrix}, \quad \vec b^*=\sum_{i=1}^{k}\vec b_i.\]
   Denote $\vec{s}^* = (\vec{s}_1, \vec{s}_{2},\ldots, \vec{s}_k)$ and $d=nk$. Notice that $\vec{A}^*$ is exactly distributed as $(\text{Ber}_q(q\sigma/(q-1)n))^{m \times d}=(\text{Ber}_q(kq\sigma/(q-1)d))^{m \times d}$ and $\vec{s}^*$ is uniformly distributed over $\F_q^{d}$.
    Moreover, let $\vec{e}^* =  \sum_{i\in [k] } \vec{e}_i$ , then we have $\vec{b}^* = \vec{A}^*\vec{s}^* + \vec{e}^*$. 
Since $\vec{e}^*$ is the sum of $k$ independent random variables with distribution $(\text{Ber}_q(\eta))^m$, by \cref{ber_q} it is exactly distributed as $\paren{\text{Ber}_q\paren{{1-(1-\eta)^k}{}}}^m = \left(\text{Ber}_q(\phi_k(\eta))\right)^m$. So we get,
 \[ \vec A^* \sim \text{Ber}_q\paren{\frac{k q\sigma}{(q-1)d}}^{m \times d}, ~ \vec{s}^*\sim \F_2^d, ~ \vec e^* \sim \left(\text{Ber}_q(\phi_k(\eta))\right)^m, ~ \vec b^*= \vec A^*\cdot \vec s^* + \vec e^*.\]
Thus, we have constructed an instance $(\vec A^*, \vec b^*)$ of $\text{Search-}\lpn_{q,\phi_k(\eta),kn,m}^{\approx k\sigma}$. Running $\mathcal A$ on this instance yields $\vec s^*$ with success probability $\varepsilon$. Finally, note that $\vec s^* = (\vec s_1, \ldots, \vec s_k)$ is precisely the solution to the original instance $y$ as well. Therefore, we obtain an algorithm $\widetilde{\mathcal A}$ that solves $(\text{Search-}\lpn_{q,\eta,n,m}^{\approx \sigma})^k$ with success probability $\varepsilon$ in time $T(kn)$.
    
\qed \end{proof}
\begin{proof}[Proof of \cref{thm: main_sparse_lpn_q}]
By \cref{sparse_lpn_lemma}, the assumed $T(kn)$-time solver $\mathcal{A}$ for 
$\lpn_{q,\phi_k(\eta),kn,m}^{\approx k\sigma}$ can be transformed into a $T(kn)$-time algorithm 
$\Tilde{\mathcal{A}}$ that solves the product problem 
$(\text{Search-}\lpn_{q,\eta,n,m}^{\approx \sigma})^k$ with success probability $\varepsilon$.

Let $\mu$ be the input distribution of $\lpn_{q,\eta,n,m}^{\approx \sigma}$. In order to apply \cref{product_rules}, we define the function $f:\F_q^{m\times n}\times \F_q^{m} \rightarrow \F_q^n \cup \{\perp \}$ as follows:
\[ f(\vec A, \vec b) = \begin{cases}
    \vec{s} & \text{$\exists$ unique $\vec s$ with } \|\vec b - \vec{As}\|_0 \leq(\eta(1-1/q) + (0.25 - \eta)\frac{\sigma}{n}) \cdot m \\
    \perp & \text{otherwise}
\end{cases}.\]

Notice that $\Tilde{\mathcal{A}}$ also solves $(f^{\times k}, \mu^{\times k})$ with success probability at least $\varepsilon - k(2^{-(2n-1)})$. In particular, the extra error comes from the (non-decodable) event stipulated by \cref{decode_sparse_q}, which happens with probability at most $2^{-(2n-1)}$ for each small instance.

In particular, there is a very simple verification algorithm (as required for \cref{product_rules}) for $(f, \mu)$: accept if and only if $\|\vec b - \vec{As}\|_0 \leq (\eta(1-1/q) + (0.25 - \eta)\frac{\sigma}{n}) \cdot m$.

Now applying \cref{product_rules} with the following parameters $(\delta' = \delta/3, \, \gamma' = \delta /3, \, \vartheta'= 2^{-(2n-1)})$, any $T(kn)$-time solver for 
$(f^{\times k}, \mu^{\times k}))$ with success probability $\varepsilon - k(2^{-(2n-1)}) > \eps/2$ 
can be converted into an algorithm $\mathcal{A}'$ for $(f,\mu)$ with success probability $1 - 2\delta/3 - (2^{-(2n-1)})$. Lastly, $\mathcal{A'}$ also solves $\lpn_{q,\eta,n,m}^{\approx \sigma}$ with the same success probability, except for a $(2^{-(2n-1)})$ fraction of (not uniquely decodable) instances. The resulting algorithm runs in time $O((1/\varepsilon)\log(1/\delta)\cdot T(kn))$ and succeeds with probability at least $1-2\delta/3 - 2(2^{-(2n-1)}) > 1 - \delta$ over the input distribution and its internal randomness as desired.


\qed \end{proof}

Using the above theorem with $q=2$ we get the following corollary.
\begin{corollary}\label{thm: main_sparse_lpn}
    Let $m, n, k, \sigma$ be positive integers with $m\geq 48\cdot (n^3/\sigma^2)$. Let $\eps > 0, \delta > 6(2^{-(2n-1)}), \eta \in (0, 1/8)$, and let $\paren{\frac{192}{\delta} \cdot \log\!\left(\frac{8}{\epsilon}\right)}<k< 2^{(2n-2)} \cdot \varepsilon$. There is a $\poly(n, k, 1/\delta, 1/\epsilon)$-time reduction that, given an oracle which succeeds with probability $\eps$ for $\lpn_{\paren{\frac{{1-(1-2\eta)^k}}{2}},kn, m}^{\approx k\sigma}$ , solves $\lpn_{\eta,n,m}^{\approx \sigma}$ with success probability $1 - \delta$.
\end{corollary}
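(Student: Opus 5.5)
The plan is to derive this directly from \cref{thm: main_sparse_lpn_q} at $q=2$, translating the binary Bernoulli parametrization into the $\Ber_q$ one. This is the same pattern as the proof of \cref{thm: main_lpn} from \cref{thm: main_gen_lpn}.

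\emph{Step 1: match the query distribution.} The $q$-ary approximate-sparse model draws each entry of $\vec a$ from $\Ber_q\paren{\frac{q\sigma}{(q-1)n}}$. At $q=2$ this is $\Ber_2(2\sigma/n)$, which by the remark after the definition of $\Ber_q$ equals $\Ber(\sigma/n)$. So the query distribution of $\lpn^{\approx\sigma}_{2,\cdot,n,m}$ coincides exactly with that of $\lpn^{\approx\sigma}_{\cdot,n,m}$, and the same holds for dimension $kn$ with sparsity $k\sigma$.

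\emph{Step 2: match the noise.} $\Ber(\eta)=\Ber_2(2\eta)$, so $\lpn^{\approx\sigma}_{\eta,n,m}$ is literally $\lpn^{\approx\sigma}_{2,2\eta,n,m}$. Applying the theorem with base noise $\eta_2:=2\eta$, the large instance has $\Ber_2$-noise $1-(1-2\eta)^k$. This corresponds to $\Ber$-noise $\frac{1-(1-2\eta)^k}{2}=\theta_k(\eta)$, consistent with \cref{ber_2}. The oracle in the corollary is therefore exactly an oracle for $\lpn^{\approx k\sigma}_{2,\phi_k(2\eta),kn,m}$, and the reduction of the theorem applies verbatim.

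\emph{Step 3: check the hypotheses; this is the main obstacle.} The conditions on $m$ ($\log 2=1$), on $\delta$, and on $k$ transfer unchanged. The delicate point is the noise range: the theorem asks $\eta_2=2\eta<1/8$, while the corollary only assumes $\eta<1/8$.

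To handle this, I would not invoke the theorem as a black box. Instead I would rerun its proof with the binary decodability lemma \cref{decode_sparse}, which is stated directly for $\lpn^{\approx\sigma}_{\eta,n,m}$ with $\eta<1/8$, $m\ge 48n^3/\sigma^2$, and failure probability $2^{-(2n-1)}$. Concretely:
\begin{itemize}
\item Combine \cref{sparse_lpn_lemma} at $q=2$, where the noise combination uses \cref{ber_2} instead of \cref{ber_q}.
\item Use the decoding function $f$ with radius $(\eta+(1/4-\eta)\sigma/n)m$, together with its verifier.
\item Apply \cref{product_rules} with $\delta'=\gamma'=\delta/3$ and $\vartheta'=2^{-(2n-1)}$.
\end{itemize}
Then the condition $k<2^{2n-2}\eps$ guarantees that $k\cdot 2^{-(2n-1)}\le\eps/2$, and the final error is at most $2\delta/3+2\cdot 2^{-(2n-1)}<\delta$, as required.
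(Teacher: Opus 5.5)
Your proposal is correct, and it differs from the paper's proof in a way that matters.

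The paper proves this corollary by a one-line black-box substitution into \cref{thm: main_sparse_lpn_q} at $q=2$, using $\Ber(\eta)=\Ber_2(2\eta)$. As you point out, that substitution only covers $2\eta<1/8$, i.e.\ $\eta<1/16$. Two things force this:
\begin{itemize}
\item the range $\eta\in(0,1/8)$ in the theorem refers to the $\Ber_q$ parameter, which here is $2\eta$;
\item the theorem's proof relies on \cref{decode_sparse_q}. At $q=2$ with $\Ber_2$-noise $2\eta$, its radius becomes $(\eta+(1/4-2\eta)\sigma/n)m$, and its noise-tail estimate needs $c=1/4-2\eta>1/8$.
\end{itemize}

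You instead rerun the theorem's proof natively in the binary model. You use \cref{sparse_lpn_lemma} together with \cref{ber_2} for the embedding. You use the binary lemma \cref{decode_sparse} to define $f$ and its verifier on the small instances; that lemma is stated for $\Ber(\eta)$ with $\eta<1/8$, $m\ge 48n^3/\sigma^2$ and failure probability $2^{-(2n-1)}$. Finally you apply \cref{product_rules} with $\delta'=\gamma'=\delta/3$ and $\vartheta'=2^{-(2n-1)}$.

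Your bookkeeping checks out:
\begin{itemize}
\item the upper bound on $k$ gives $k\cdot 2^{-(2n-1)}<\eps/2$;
\item the lower bound on $k$ exceeds the required $\frac{96}{\delta}\log\frac{8}{\eps}$;
\item the hypothesis $\delta>6\cdot 2^{-(2n-1)}$ absorbs the two $2^{-(2n-1)}$ losses.
\end{itemize}

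The paper's route buys brevity. Yours establishes the corollary in its full stated range $\eta\in(0,1/8)$, at the cost of repeating a short argument.
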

\begin{proof}
    Since $\lpn_{2,\eta',n,m}^{\approx \sigma}$ is identical to $\lpn_{\eta'/2,n,m}^{\approx \sigma}$, substituting $\eta'=2\eta$ in the base instance and $\eta'=1-(1-2\eta)^k$ in the scaled instance gives the claimed binary statement.
\qed\end{proof}
\subsection{Reduction from Search-\texorpdfstring{$\lpn_{q,\eta,n,3 \sqrt{\sigma} \cdot m}^{\approx \sigma}$}{Approx Sparse LPN} to Search-\texorpdfstring{$\lpn_{q,\eta,n,m}^{=\sigma}$}{Normal Sparse LPN}}
In this section we will give a reduction from $\lpn^{\approx \sigma}_{q,\eta,n,3\sqrt{\sigma} \cdot m}$ to $\lpn^{= \sigma}_{q,\eta,n,m}$. We set $M := 3\sqrt{\sigma} \cdot m$. 

In particular, given an instance $(\vec A, \vec b= \vec A \cdot \vec s+ \vec e)$ of $\lpn_{q,\eta,n,M}^{\approx \sigma}$, we will output $\mathcal{R}\paren{\vec A, \vec b} \in \F^{m \times (n+1)} \cup \{\bot\}$, such that distribution of $\mathcal{R}\paren{\vec A, \vec b}$ is statistically close to the input distribution of $\lpn_{q,\eta,n,m}^{=\sigma}$. 

If $\mathcal{R}\paren{\vec A, \vec b} \neq \bot$, then let $\mathcal{R}\paren{\vec A, \vec b}$ be $(\vec A', \vec b')$

Moreover, the secret $\vec s$ for the original $\lpn_{q,\eta,n,M}^{\approx \sigma}$ instance will be the same as the secret $\vec s'$ of $(\vec A', \vec b')$.

Now, if an algorithm $\mathcal{A}$ can solve $\lpn_{q,\eta,n,m}^{=\sigma}$ with success probability $\alpha$ then given the above reduction it can solve $\lpn_{q,\eta,n,M}^{\approx \sigma}$ with success probability $\alpha - \kappa$, where $\kappa$ is the distance between the distribution induced by the reduction, and the input distribution of $\lpn_{q,\eta,n,m}^{=\sigma}$.
\subsubsection{The Reduction
}

Given an instance of $\lpn_{q,\eta,n,M}^{\approx \sigma}$ as,

$$\paren{\vec A = \begin{bmatrix}
\vec a_1 \\
\vec a_2 \\
\vdots \\
\vec a_{M}
\end{bmatrix}, ~ \vec b=\begin{bmatrix}
b_1 \\
b_2 \\
\vdots \\
b_{M}
\end{bmatrix}
  },$$
If the number of rows $\vec a$ with hamming weight exactly $\sigma$ is strictly less than $m$ then we output $\perp$. Otherwise, let $[\vec a_ {i_1}, \vec{a}_{i_2} \cdots \vec{a}_{i_m}]$ be the  first $m$ rows of $\vec{A}$ with hamming weight exactly $\sigma$. The reduction outputs  
$$\paren{\vec A' = \begin{bmatrix}
\vec a_{i_1} \\
\vec a_{i_2} \\
\vdots \\
\vec a_{i_m}
\end{bmatrix}, ~ \vec b'=\begin{bmatrix}
b_{i_1} \\
b_{i_2} \\
\vdots \\
b_{i_m}
\end{bmatrix}
  }.$$
   To establish that the distribution of $(\vec A', \vec b')$ is statistically close 
to that of the instances of $\lpn_{q,\eta,n,m}^{=\sigma}$, 
we first analyze the distribution of $\vec A'$.
We claim that the distribution of $\vec A'$ is statistically close to the 
uniform distribution over ${\binom{[n]}{\sigma}}^m$. 
Conditioned on the event that there are at least $m$ rows 
with Hamming weight exactly $\sigma$ (i.e., that the reduction doesn't output $\bot$), 
the distribution of $\vec A', \vec{b}'$ is distributed as $\lpn_{q,\eta,n,m}^{=\sigma}$ with the same secret vector $\vec{s}$.  
Let $\beta$ be the probability that $\vec A$ has strictly less than $m$ rows with Hamming weight $\sigma$, i.e., that $\mathcal{R}\paren{\vec A, \vec b} = \bot$. Thus, if the $\lpn_{q,\eta,n,m}^{=\sigma}$ oracle succeeds in finding the secret vector $\vec{s}$ with probability $\eps$, then the reduction is successful with probability 
\begin{align*}
&\Pr[\mathcal{R}\paren{\vec A, \vec b} \neq \bot] \cdot \Pr[ \text{ Reduction outputs } \vec s | \mathcal{R}\paren{\vec A, \vec b} \neq \bot] \\
&= \Pr[\mathcal{R}\paren{\vec A, \vec b} \neq \bot] \cdot \Pr[\lpn_{q,\eta,n,m}^{=\sigma} \text{ oracle outputs } \vec s] \\ &
= (1-\beta) \cdot \eps \\ &\ge \eps - \beta \;.
\end{align*}

Thus, to prove correctness of our reduction, we give an upper bound on $\beta$.

 \begin{lemma}
 \label{lem:bound_sparse_q}
Let $q$ be a prime power, and let $\sigma,n,m \in \mathbb{N}$ with
$M=\lceil 3\sqrt{\sigma}\,m \rceil$ and $m\geq 1200n$. Let
$\vec a_1,\ldots,\vec a_M$ be sampled independently as
\[
    \vec a_i \sim
    \left(\Ber_q\left(\frac{q\sigma}{(q-1)n}\right)\right)^n .
\]
Then, with probability at least $1-\frac{1}{2^n}$, there exist at least
$m$ rows among the $\vec a_i$'s whose Hamming weight is exactly $\sigma$.
\end{lemma}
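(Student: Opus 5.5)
Each coordinate of $\vec a_i$ is nonzero independently with probability $p := (1-1/q)\cdot\frac{q\sigma}{(q-1)n} = \sigma/n$, since a $\Ber_q(\alpha)$ variable is nonzero with probability $\alpha(1-1/q)$. So $\norm{\vec a_i}_0 \sim \mathrm{Bin}(n,\sigma/n)$, independently across $i$. Let $X_i$ indicate $\norm{\vec a_i}_0=\sigma$ and $X=\sum_{i=1}^M X_i$. The plan has two steps. First, lower bound $\rho := \Pr[X_i=1] = \binom{n}{\sigma}(\sigma/n)^\sigma(1-\sigma/n)^{n-\sigma}$ by roughly $1/\sqrt{\sigma}$. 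Second, apply the lower-tail Chernoff bound to $X$.

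For the first step, plug in the lower bound of \cref{lem:aprox_bion}. This gives
\[
\binom{n}{\sigma} \ge \frac{1}{\sqrt{8\sigma(1-\sigma/n)}}\cdot \frac{n^n}{\sigma^\sigma (n-\sigma)^{n-\sigma}}.
\]
Multiplying by $(\sigma/n)^\sigma((n-\sigma)/n)^{n-\sigma}$ cancels the entropy factor exactly, so $\rho \ge 1/\sqrt{8\sigma(1-\sigma/n)} \ge 1/\sqrt{8\sigma}$. The corollary as stated requires $1\le \sigma\le n$. Strictly, it needs $\sigma<n$ for the $(1-\sigma/n)$ factor to be meaningful. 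The edge case $\sigma=n$ is trivial, since then every row has weight $n$ with probability $1$. Hence $\E[X] = M\rho \ge 3\sqrt{\sigma}m/\sqrt{8\sigma} = \frac{3}{2\sqrt2}m \approx 1.06\,m$.

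For the second step, we want $\Pr[X<m]$. Set $\mu=\E[X]\ge \frac{3}{2\sqrt2}m$ and write $m=(1-\delta)\mu$ with $\delta \ge 1-\frac{2\sqrt2}{3}\approx 0.057$. The lower-tail Chernoff bound gives $\Pr[X\le (1-\delta)\mu]\le \exp(-\delta^2\mu/2)$. Since $\delta^2\mu/2 \ge (0.057)^2\cdot 1.06\, m/2 \approx 0.0017\,m$, the hypothesis $m\ge 1200n$ yields an exponent of about $2.07n > n\ln 2$. So $\Pr[X<m]\le 2^{-n}$. Monotonicity matters here: if $\mu$ is larger than the worst case, $\delta$ grows and $\delta^2\mu$ also grows. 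To avoid fuss, one can instead apply the bound to a sum of independent Bernoulli$(\rho')$ variables with $\rho' = 1/\sqrt{8\sigma}$ exactly, via a stochastic domination or coupling argument, so that $\mu$ is fixed.

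The main obstacle is purely the constant bookkeeping. The margin $3/\sqrt8 \approx 1.06$ is thin, so the Chernoff exponent constant is tiny, and it must still beat $\ln 2$ given $m\ge1200n$. The computation above suggests it does, with a small cushion. I would double-check that this uses the sharper $\sqrt{8\sigma(1-\sigma/n)}$ form of the lower bound without losing further constants.
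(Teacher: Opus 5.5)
Your proposal is correct and follows the paper's proof essentially step for step. Each row's weight is $\mathrm{Bin}(n,\sigma/n)$, \cref{lem:aprox_bion} gives $\Pr[X_i=1]\ge 1/\sqrt{8\sigma}$, and the lower-tail Chernoff bound gives an exponent of about $0.0017\,m \ge 2.08\,n$. The monotonicity and coupling concerns are unnecessary: the paper simply fixes $\delta = 1-\sqrt{8/9}$, notes that $m \le (1-\delta)\,\mathbb{E}[X]$, and uses $\mathbb{E}[X]\ge M/\sqrt{8\sigma}$ only to lower-bound the exponent $\delta^2\mathbb{E}[X]/2$.
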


\begin{proof}
Let $X_i$ be the indicator random variable for the event that
$\vec a_i$ has Hamming weight exactly $\sigma$. Put
\[
    \alpha := \frac{q\sigma}{(q-1)n}.
\]
For a single coordinate $Y\sim \Ber_q(\alpha)$, we have
\[
    \Pr[Y=0] = (1-\alpha)+\frac{\alpha}{q},
    \qquad
    \Pr[Y=v] = \frac{\alpha}{q}
    \quad \text{for every } v\in \F_q\setminus\{0\}.
\]
Hence
\[
    \Pr[Y\neq 0]
    = \frac{q-1}{q}\alpha
    = \frac{\sigma}{n}.
\]
Therefore the Hamming weight of $\vec a_i$ is distributed as
$\mathsf{Bin}(n,\sigma/n)$, and so
\[
    \Pr[X_i=1]
    =
    \binom n\sigma
    \left(\frac{\sigma}{n}\right)^\sigma
    \left(\frac{n-\sigma}{n}\right)^{n-\sigma}.
\]
Using \cref{lem:aprox_bion}, we get
\begin{align*}
    \Pr[X_i=1]
    &=
    \binom n\sigma
    \left(\frac{\sigma}{n}\right)^\sigma
    \left(\frac{n-\sigma}{n}\right)^{n-\sigma} \\
    &\geq
    \frac{1}{\sqrt{8\sigma(1-\tfrac{\sigma}{n})}}
    \cdot
    \frac{n^n}{\sigma^\sigma (n-\sigma)^{n-\sigma}}
    \cdot
    \left(\frac{\sigma}{n}\right)^\sigma
    \left(\frac{n-\sigma}{n}\right)^{n-\sigma} \\
    &=
    \frac{1}{\sqrt{8\sigma(1-\tfrac{\sigma}{n})}}
    \geq
    \frac{1}{\sqrt{8\sigma}} .
\end{align*}

Now let,
$
    X := \sum_{i=1}^M X_i
$
which implies that
$
    \mathbb E[X]
    \geq \frac{M}{\sqrt{8\sigma}}.
$
Since $M=\lceil 3\sqrt{\sigma}\,m\rceil$, we have
\[
    m \leq \frac{M}{3\sqrt{\sigma}}
    = \frac{M}{\sqrt{9\sigma}}
    =
    \sqrt{\frac{8}{9}}\cdot \frac{M}{\sqrt{8\sigma}}
    \leq
    \sqrt{\frac{8}{9}}\cdot \mathbb E[X].
\]
Thus, setting
$
    \delta := 1-\sqrt{\frac{8}{9}},
$
we get
$
    m \leq (1-\delta)\cdot \mathbb E[X].
$
By Chernoff's bound,
\begin{align*}
    \Pr[X<m]
    &\leq
    \Pr[X<(1-\delta)\mathbb E[X]] \\
    &\leq
    \exp\left(-\frac{\delta^2 \mathbb E[X]}{2}\right) \\
    &\leq
    \exp\left(-\frac{\delta^2 M}{2\sqrt{8\sigma}}\right) \\
    &\leq
    \exp\left(-\frac{3\delta^2 m}{4\sqrt{2}}\right).
\end{align*}
Since $m\geq 1200n$, the final expression is at most $e^{-n}$, and hence at
most $2^{-n}$. Therefore, with probability at least $1-2^{-n}$, there are
at least $m$ rows among the $\vec a_i$'s whose Hamming weight is exactly
$\sigma$.
\end{proof}

    This gives us the following theorem,
  \begin{theorem}\label{thm: approx_exact_q}
        Let $\sigma,n,m \in \mathbb{N}, M=\lceil 3\sqrt{\sigma}\,m \rceil$,  and $m\geq 1200n$. If there exists an algorithm $\mathcal A$ running in time $T(n)$, that solves $\lpn_{q,\eta,n,m }^{=\sigma}$ with success probability $\varepsilon$, then it can solve
        $\lpn_{q,\eta,n,M}^{\approx \sigma}$ with success probability $\paren{\varepsilon-\frac{1}{2^{n}}}$ in time $T(n)$.
    \end{theorem}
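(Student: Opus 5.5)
The plan is to take the row-filtering reduction $\mathcal R$ described above and compose it with $\mathcal A$. The algorithm $\mathcal A'$ receives an instance $(\vec A,\vec b)$ of $\lpn_{q,\eta,n,M}^{\approx\sigma}$ and computes $\mathcal R(\vec A,\vec b)$. If the output is $\bot$, it outputs an arbitrary vector. Otherwise it runs $\mathcal A$ on $(\vec A',\vec b')$ and returns whatever $\mathcal A$ returns. Scanning the $M$ rows, computing their Hamming weights and extracting the first $m$ rows of weight $\sigma$ takes time polynomial in $M n$. Up to this additive overhead, which is dominated by reading the input, the running time is $T(n)$.

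The main step is to show that, conditioned on $\mathcal R(\vec A,\vec b)\neq\bot$, the pair $(\vec A',\vec b')$ is distributed exactly as an instance of $\lpn_{q,\eta,n,m}^{=\sigma}$ with the same secret $\vec s$.

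\textbf{Row distribution.} Each row $\vec a_i$ has i.i.d.\ coordinates, each nonzero with probability $\sigma/n$, and each nonzero value is uniform on $\F_q\setminus\{0\}$ (computed as in the proof of \cref{lem:bound_sparse_q}). The probability of a vector $\vec a$ is therefore $(\sigma/n)^{w}(1-\sigma/n)^{n-w}(q-1)^{-w}$ with $w=\norm{\vec a}_0$, which depends only on the weight. So conditioned on $\norm{\vec a_i}_0=\sigma$, the row $\vec a_i$ is uniform over $X_{q,n,\sigma}$.

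\textbf{Selection rule.} The rule "take the first $m$ rows of weight $\sigma$" depends only on the indicators $\mathbb 1[\norm{\vec a_i}_0=\sigma]$. Condition on the set $I$ of weight-$\sigma$ indices, and suppose $|I|\ge m$. The rows indexed by $I$ are then independent and uniform on $X_{q,n,\sigma}$. The noises $e_i$ are independent of $\vec A$, so the selected errors are i.i.d.\ $\Ber_q(\eta)$ and independent of the selected rows. Moreover $\vec s$ is uniform and independent of everything else, and $b_{i_j}=\langle \vec a_{i_j},\vec s\rangle+e_{i_j}$. Averaging over all such $I$ gives the exact conditional law.

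\textbf{Success probability.} By the computation preceding \cref{lem:bound_sparse_q}, the success probability of $\mathcal A'$ is at least $(1-\beta)\varepsilon\ge\varepsilon-\beta$, where $\beta=\Pr[\mathcal R=\bot]$. Applying \cref{lem:bound_sparse_q}, which uses the hypotheses $M=\lceil 3\sqrt\sigma m\rceil$ and $m\ge 1200n$, gives $\beta\le 2^{-n}$, which yields the stated bound $\varepsilon-2^{-n}$.

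The step I expect to be the main obstacle is the independence argument in the selection step. The danger is that selecting "the first $m$" rows could bias their distribution. The fix is to condition on the full indicator pattern $I$ instead of the selected rows themselves: since the rule is determined by $I$ alone, conditioning on $I$ leaves the selected rows i.i.d.\ uniform over $X_{q,n,\sigma}$. The remaining steps follow directly from earlier results.
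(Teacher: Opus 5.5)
Your proposal is correct and follows the paper's proof. You filter the first $m$ rows of weight exactly $\sigma$, note that conditioned on $\mathcal R\neq\bot$ the output is distributed exactly as $\lpn_{q,\eta,n,m}^{=\sigma}$ with the same secret (so success is at least $(1-\beta)\varepsilon\ge\varepsilon-\beta$), and bound $\beta\le 2^{-n}$ via \cref{lem:bound_sparse_q}. Your conditioning on the weight-indicator pattern $I$ gives a rigorous justification of the exact conditional law, which the paper only asserts.
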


 Following exactly the same proof as above we obtain the following corollary for the binary field,
    \begin{corollary}\label{thm: approx_exact}
        Let $\sigma,n,m \in \mathbb{N}, M=\lceil 3\sqrt{\sigma}\,m \rceil$,  and $m\geq1200 n$. If there exists an algorithm $\mathcal A$ running in time $T(n)$, that solves $\lpn_{\eta,n,m }^{=\sigma}$ with success probability $\varepsilon$, then it can solve
        $\lpn_{\eta,n,M}^{\approx \sigma}$ with success probability $\paren{\varepsilon-\frac{1}{2^{n}}}$ in time $T(n)$.
    \end{corollary}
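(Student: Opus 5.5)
The plan is to repeat the filtering argument behind \cref{thm: approx_exact_q} and \cref{lem:bound_sparse_q} with $q=2$. The only change is in the per-coordinate distribution, which is now $\Ber(\sigma/n)$ instead of $\Ber_q\paren{\frac{q\sigma}{(q-1)n}}$.

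The reduction is the one already described. Given an instance $(\vec A,\vec b)$ of $\lpn_{\eta,n,M}^{\approx\sigma}$, we scan the rows. If fewer than $m$ rows have Hamming weight exactly $\sigma$, we output $\perp$. Otherwise we take the first $m$ such rows $\vec a_{i_1},\dots,\vec a_{i_m}$ together with the corresponding $b_{i_j}$, run $\mathcal A$ on $(\vec A',\vec b')$, and return its output.

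The first step is to check that, conditioned on not outputting $\perp$, the pair $(\vec A',\vec b')$ is distributed exactly as $\lpn_{\eta,n,m}^{=\sigma}$ with the same secret $\vec s$. A row $\vec a\sim(\Ber(\sigma/n))^n$ has probability $(\sigma/n)^{\|\vec a\|_0}(1-\sigma/n)^{n-\|\vec a\|_0}$, which depends only on its weight. So conditioned on weight exactly $\sigma$, the row is uniform on $\binom{[n]}{\sigma}$. The rows are independent, and the selection event for a row depends only on that row's weight. The noise $e_i$ is independent of $\vec a_i$, so the selected $b_{i_j}=\langle \vec a_{i_j},\vec s\rangle\oplus e_{i_j}$ have i.i.d.\ $\Ber(\eta)$ noise.

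Conditioning on the order statistics "the first $m$ selected indices" does not disturb any of this. To see it, I would condition on the set of indices of weight-$\sigma$ rows. Given that set, those rows are i.i.d.\ uniform weight-$\sigma$ vectors with independent noise. As computed in the text before \cref{lem:bound_sparse_q}, the success probability is then $(1-\beta)\eps\ge\eps-\beta$.

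The second step is to bound $\beta$, the probability that fewer than $m$ rows have weight $\sigma$. In the binary case the weight of each row is exactly $\mathsf{Bin}(n,\sigma/n)$. So the computation of \cref{lem:bound_sparse_q} applies verbatim: \cref{lem:aprox_bion} gives $\Pr[\|\vec a_i\|_0=\sigma]\ge 1/\sqrt{8\sigma}$, and therefore $\E[X]\ge M/\sqrt{8\sigma}\ge \sqrt{9/8}\,m$. The lower Chernoff bound with $\delta=1-\sqrt{8/9}$, combined with $m\ge 1200n$, gives $\beta\le e^{-n}\le 2^{-n}$. The running time is $T(n)$ plus a linear scan.

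The main point needing care is the distributional claim under the "first $m$" selection rule. I would dispatch it with the conditioning on the set of indices described above.

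The only arithmetic check is the constant. We need $\frac{\delta^2 M}{2\sqrt{8\sigma}}\ge n$ with $\delta=1-\sqrt{8/9}\approx 0.0572$, so $\delta^2\approx 0.00327$, and $M\ge 3\sqrt\sigma m$, so $M/\sqrt{8\sigma}\ge 1.06m$. The exponent is then about $0.00173m$, which is at least about $2.08n$ when $m\ge1200n$. That suffices, since $e^{-2n}\le 2^{-n}$.
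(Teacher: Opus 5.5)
Your proposal is correct and follows the paper's proof. It uses the same filtering reduction, the same fact that a $(\Ber(\sigma/n))^n$ row conditioned on weight exactly $\sigma$ is uniform on $\binom{[n]}{\sigma}$, and the same bound $\beta\le e^{-n}\le 2^{-n}$ from \cref{lem:aprox_bion} plus the lower-tail Chernoff bound with $\delta=1-\sqrt{8/9}$; your conditioning on the set of weight-$\sigma$ indices to justify the ``first $m$ rows'' rule, and your check of the constant ($\approx 0.00173\,m\ge 2.08\,n$), are more careful than the paper's text but follow the same route.
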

   


\subsection{Putting Everything Together}
By combining the reductions from \cref{thm: approx_exact} and \cref{thm: main_sparse_lpn}, we have the following theorem:

\begin{theorem}
    Let $m, n, k, \sigma$ be positive integers with $m\geq \max(48\cdot (n^3/\sigma^2),1200n)$ and set $M=\lceil 3\sqrt{\sigma}\,m \rceil$. Let $\eps > 0, \delta > 6(2^{-(2n-1)}), \eta \in (0, 1/8)$, and $k$ satisfies $\paren{\frac{192}{\delta} \cdot \log\!\left(\frac{16}{\epsilon}\right)}<k< 2^{(2n-2)} \cdot \varepsilon$. There is a $\poly(n, k, 1/\delta, 1/\epsilon)$-time reduction that, given an oracle for $\lpn_{\paren{\frac{{1-(1-2\eta)^k}}{2}},kn, m}^{= k\sigma}$ that succeeds with probability $\eps$, solves $\lpn_{\eta,n,M}^{\approx \sigma}$ with success probability $1 - \delta$.
\end{theorem}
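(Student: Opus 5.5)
The plan is to compose the two reductions already established: the filtering reduction of \cref{thm: approx_exact} and the direct-product self-amplification of \cref{thm: main_sparse_lpn}. The filtering step turns the assumed exact-sparse oracle into an approximate-sparse oracle at the \emph{large} dimension $kn$. The amplification step then turns that approximate-sparse oracle into a high-success solver at dimension $n$. I expect the delicate part to be the sample-count and sparsity bookkeeping (third paragraph), not the probability estimates.

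\textbf{Step 1 (filtering at dimension $kn$).} Let $\mathcal O$ solve $\lpn^{=k\sigma}_{\theta_k(\eta),kn,m}$ with success $\varepsilon$, where $\theta_k(\eta)=\frac{1-(1-2\eta)^k}{2}$. Apply \cref{thm: approx_exact} with dimension $kn$, sparsity $k\sigma$ and noise $\theta_k(\eta)$. The filtering reduction only uses that the rows have i.i.d.\ $\Ber(k\sigma/(kn))=\Ber(\sigma/n)$ entries, and that conditioning on weight exactly $k\sigma$ gives the uniform distribution on weight-$k\sigma$ rows, with the noise untouched. It therefore yields a solver $\mathcal O'$ for $\lpn^{\approx k\sigma}_{\theta_k(\eta),kn,M'}$ with success at least $\varepsilon-\beta$. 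Here $M'$ is the number of approximate-sparse samples the filter consumes, and $\beta$ is the probability that fewer than $m$ rows have weight exactly $k\sigma$.

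To bound $\beta$ I would rerun the Chernoff argument of \cref{lem:bound_sparse_q} directly rather than cite its stated hypothesis $m\ge 1200\cdot(\text{dimension})$. That argument gives $\beta\le \exp(-\Omega(m))$, and under $m\ge 1200n$ this is at most $e^{-n}$. The target is $\beta\le\varepsilon/2$, so that $\mathcal O'$ succeeds with probability at least $\varepsilon/2$. This is exactly why the statement has $\log(16/\varepsilon)$ in place of $\log(8/\varepsilon)$: it is the lower bound on $k$ from \cref{thm: main_sparse_lpn} with $\varepsilon$ replaced by $\varepsilon/2$.

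\textbf{Step 2 (amplification).} Feed $\mathcal O'$ into \cref{thm: main_sparse_lpn} with success parameter $\varepsilon/2$. The combined instance produced by \cref{sparse_lpn_lemma} from $k$ copies of $\lpn^{\approx\sigma}_{\eta,n,\cdot}$ has entries $\Ber(\sigma/n)=\Ber(k\sigma/(kn))$ and noise exactly $\Ber(\theta_k(\eta))$ by \cref{ber_2}. So it is distributed exactly as the input of $\mathcal O'$. The decodability hypothesis of \cref{decode_sparse} on the number of samples follows from $m\ge 48n^3/\sigma^2$, since the sample count used is at least $m$. The constraint $k<2^{2n-2}\varepsilon$ absorbs the union bound $k\cdot 2^{-(2n-1)}$ over non-decodable blocks, as in the proof of \cref{thm: main_sparse_lpn_q}. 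The output is a solver for $\lpn^{\approx\sigma}_{\eta,n,\cdot}$ with success $1-\delta$. Its running time is $O(\varepsilon^{-1}\log\delta^{-1})$ calls to $\mathcal O$ plus polynomial overhead.

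\textbf{Main obstacle: matching $M$.} The natural filter at dimension $kn$ needs about $3\sqrt{k\sigma}\,m$ samples per block, since the probability that a row has weight exactly $k\sigma$ is about $1/\sqrt{8k\sigma}$. The statement instead budgets $M=\lceil 3\sqrt\sigma\,m\rceil$. I would first check whether the intended reading lets the filter act on each block's rows using the per-block weight event, which has probability about $1/\sqrt{\sigma}$. If it does not, I would take $M$ large enough to cover $3\sqrt{k\sigma}\,m$, i.e.\ adjust the stated constant.

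Similarly, if $e^{-n}\le\varepsilon/2$ is not implied by the given constraints on $k$ and $\varepsilon$, I would either strengthen $m$ slightly, say $m=\Omega(kn)$, so that $\beta\le 2^{-kn}$. Alternatively, I would fold $\beta$ into the $\delta$ budget through the $\vartheta$ slack of \cref{product_rules}.
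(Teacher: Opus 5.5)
Your Steps 1--2 are the paper's argument. Its entire proof is the sentence ``combine \cref{thm: approx_exact} and \cref{thm: main_sparse_lpn}'', with the filter applied at dimension $kn$, sparsity $k\sigma$ and noise $\theta_k(\eta)$.

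The obstacle you flag is genuine, and the paper does not address it. By \cref{lem:bound_sparse_q} with sparsity $k\sigma$, the composition needs $\lceil 3\sqrt{k\sigma}\,m\rceil$ rows per block. So what it actually proves is the statement with $M$ replaced by $\lceil 3\sqrt{k\sigma}\,m\rceil$. This is a weaker conclusion, since more samples make the target problem easier. Your fallback of enlarging $M$ is the right fix; the stated $M=\lceil 3\sqrt{\sigma}\,m\rceil$ appears to be a slip.

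Your first alternative cannot rescue the stated $M$. The combined rows must be uniform over weight-$k\sigma$ vectors of $\mathbb{F}_2^{kn}$. Conditioning each block's row on weight exactly $\sigma$ instead gives rows with exactly $\sigma$ ones in every block. These form only a $\binom{n}{\sigma}^k/\binom{kn}{k\sigma}$ fraction of that support, so the oracle's success guarantee does not transfer. The legitimate aligned filter keeps a row of $\Ber(\sigma/n)^{kn}$ with probability at most $1/\sqrt{2\pi k\sigma(1-\sigma/n)}$ by \cref{lem:aprox_bion}. So from $\lceil 3\sqrt{\sigma}\,m\rceil$ rows it retains only about $3m/\sqrt{2\pi k(1-\sigma/n)}$ in expectation. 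This is far fewer than $m$ when $\sigma\ll n$, since the hypothesis forces $k>192\log 16$.

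Two smaller points.

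First, your worry about whether $e^{-n}\le\varepsilon/2$ disappears if you keep the actual Chernoff exponent from \cref{lem:bound_sparse_q}. That exponent is $\frac{3(1-\sqrt{8/9})^2}{4\sqrt2}\,m\ge 2.08\,n$ for $m\ge 1200n$, and it does not depend on the dimension or sparsity. So $m\ge 1200n$ suffices even at dimension $kn$, where a literal citation of \cref{thm: approx_exact} would demand $m\ge 1200kn$. Moreover $k<2^{2n-2}\varepsilon$ gives $\varepsilon>4k\cdot 4^{-n}$, hence $\beta<4^{-n}<\varepsilon/(4k)$. Folding $\beta$ into $\vartheta$ would not have worked anyway: $\beta$ lowers the oracle's success probability; it is not a fraction of unverifiable base instances.

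Second, because $\beta$ is negligible, do not halve $\varepsilon$ at the filtering step and then cite \cref{thm: main_sparse_lpn} as a black box at success $\varepsilon/2$. That would require $k<2^{2n-3}\varepsilon$, which is not assumed. Instead, add the losses. The product solver succeeds with probability at least $\varepsilon-\beta-k\,2^{-(2n-1)}>\varepsilon/2-\beta\ge\varepsilon/4$. Then \cref{product_rules} with $\delta'=\delta/3$ needs only $k\ge\frac{96}{\delta}\log\frac{16}{\varepsilon}$, which the hypothesis on $k$ covers.
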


Similarly by combining the reductions from \cref{thm: approx_exact_q} and \cref{thm: main_sparse_lpn_q} we can get the same theorem and conclusions for $\lpn_{q,\eta,n,m}^{=\sigma}$.

\begin{theorem}
    Let $m, n, k, \sigma$ be positive integers with $m\geq \max(\tfrac{48 n^3\log q}{\sigma^2},1200n)$ and set $M=\lceil 3\sqrt{\sigma}\,m \rceil$. Let $\eps > 0, \delta > 6(2^{-(2n-1)}), \eta \in (0, 1/8)$, and $k$ satisfies $\paren{\frac{192}{\delta} \cdot \log\!\left(\frac{16}{\epsilon}\right)}<k< 2^{(2n-2)} \cdot \varepsilon$. There is a $\poly(n, k, 1/\delta, 1/\epsilon)$-time reduction that, given an oracle for $\lpn_{q,\paren{{{1-(1-\eta)^k}}},kn, m}^{= k\sigma}$ that succeeds with probability $\eps$, solves $\lpn_{q,\eta,n,M}^{\approx \sigma}$ with success probability $1 - \delta$.
\end{theorem}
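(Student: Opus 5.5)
The plan is to compose the $q$-ary filtering reduction of \cref{thm: approx_exact_q} with the approximate-sparse amplification of \cref{thm: main_sparse_lpn_q}, in the same way as the binary statement just above.

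\textbf{Step 1: from exact-sparse to approximate-sparse at dimension $kn$.} We are given an oracle $\mathcal A$ that solves $\lpn_{q,\phi_k(\eta),kn,m}^{=k\sigma}$ with success probability $\eps$, where $\phi_k(\eta)=1-(1-\eta)^k$. We apply \cref{thm: approx_exact_q} with dimension $kn$, sparsity $k\sigma$ and noise $\phi_k(\eta)$. The hypothesis of that theorem, $m\ge 1200\,kn$, has to be read at the large dimension, so in writing the details we either strengthen the assumption accordingly or check that the lemma only needs $m$ linear in the dimension.

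This yields a solver for $\lpn_{q,\phi_k(\eta),kn,M'}^{\approx k\sigma}$ with $M'=\lceil 3\sqrt{k\sigma}\,m\rceil$ and success probability $\eps-2^{-kn}\ge\eps/2$. The inequality holds because $k\ge1$ and $\eps$ is not exponentially small, which follows from $k<2^{2n-2}\eps$.

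\textbf{Step 2: amplification.} Next we feed this solver into \cref{thm: main_sparse_lpn_q} with $\eps/2$ in place of $\eps$. Its lower bound on $k$ becomes $\frac{192}{\delta}\log\frac{16}{\eps}$, which is exactly the bound in the statement. Its upper bound becomes $k<2^{2n-2}\eps/2$, which may need the constant adjusted slightly. The theorem also needs $m\ge 48n^3\log q/\sigma^2$, which we assume.

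The output is a solver for $\lpn_{q,\eta,n,M'}^{\approx\sigma}$ with success probability $1-\delta$. Its running time is $\poly(n,k,1/\delta,1/\eps)$ oracle calls, since the filtering step is linear time and \cref{product_rules} makes $O(\log(1/\delta)/\eps)$ calls.

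\textbf{Step 3: matching the sample counts.} The main obstacle is bookkeeping of the number of samples, since \cref{thm: main_sparse_lpn_q} keeps the same number of samples at both scales. The statement asks for a solver of the base problem with $M=\lceil 3\sqrt\sigma\,m\rceil$ samples, while the filtered large instance needs $M'=\lceil 3\sqrt{k\sigma}\,m\rceil$ approximate-sparse samples. I would handle this in two steps:

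\begin{enumerate}
\item Run the whole pipeline with sample count $M'$ at both scales. Concatenation preserves the number of rows, so the large instance has $M'$ rows before filtering.
\item Note that having more samples only helps, because a solver can discard samples. Then either restate $M$ with the factor $\sqrt{k\sigma}$, or observe that the fraction of rows of weight exactly $k\sigma$ in $(\Ber_q)^{kn}$ is at least $1/\sqrt{8k\sigma}$, by the computation in \cref{lem:bound_sparse_q}.
\end{enumerate}

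Finally, the decodability requirement of \cref{decode_sparse_q} at $M'$ samples follows from the one at $m$, by monotonicity in the number of samples.
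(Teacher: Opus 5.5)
Your route, filtering at dimension $kn$ and sparsity $k\sigma$ with \cref{thm: approx_exact_q} and then amplifying with \cref{thm: main_sparse_lpn_q}, is exactly the paper's one-line proof. However, your Step~3 does not close, and the gap is genuine.

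The composition yields a solver for $\lpn^{\approx\sigma}_{q,\eta,n,M'}$ with $M'=\lceil 3\sqrt{k\sigma}\,m\rceil$. The statement asks for $M=\lceil 3\sqrt{\sigma}\,m\rceil<M'$. Your remark that more samples only help points the wrong way. Discarding rows turns an $M$-sample solver into an $M'$-sample solver, not conversely, and your solver needs $M'$ rows that the given instance does not have.

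The bound $1/\sqrt{8k\sigma}$ does not rescue it either. It is the lower bound showing that $M'$ rows suffice. The upper bound in \cref{lem:aprox_bion} gives $\Pr[\text{weight}=k\sigma]\le 1/\sqrt{2\pi k\sigma(1-\sigma/n)}$. So with only $M$ rows, the expected number of usable rows is about $3m/\sqrt{2\pi k}$ (for $\sigma\ll n$). That is far below $m$ for the values of $k$ the hypothesis forces, so the filter returns $\perp$ with overwhelming probability.

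Hence this argument proves the theorem only with $M$ replaced by $\lceil 3\sqrt{k\sigma}\,m\rceil$. That is your first option, and it changes the statement. The paper's ``combine the two theorems'' has the same mismatch without addressing it. Obtaining the stated $M$ would need a genuinely different embedding, not bookkeeping.

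The other two points you flag need no change of hypotheses. At dimension $kn$ and sparsity $k\sigma$, the proof of \cref{lem:bound_sparse_q} gives $\mathbb{E}[X]\ge M'/\sqrt{8k\sigma}\ge 3m/\sqrt{8}$. So its failure bound is $\exp(-c\,m)$ with an absolute constant $c$, and $m\ge 1200n$ already makes it at most $e^{-2n}<4^{-n}<\eps/4$. The last step uses $\eps>4k\cdot 4^{-n}$, which follows from $k<2^{2n-2}\eps$. So $m\ge 1200kn$ is unnecessary.

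Similarly, rather than invoking \cref{thm: main_sparse_lpn_q} as a black box with $\eps/2$, rerun its proof. The product success is then at least $\eps-e^{-2n}-k\,2^{-(2n-1)}>\eps/4$. Then \cref{product_rules} with $\delta/3$ needs only $k\ge\frac{96}{\delta}\log\frac{16}{\eps}$, which the assumed $k>\frac{192}{\delta}\log\frac{16}{\eps}$ implies. So the upper bound $k<2^{2n-2}\eps$ can stay as stated. Decodability at $M'$ samples follows by monotonicity, as you say.
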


\bibliographystyle{alpha}
\bibliography{references}
\appendix
\section{Proof of \texorpdfstring{\cref{lem:aprox_bion}}{Corollary regarding Strilings}} \label{appendix}
\begin{proof}[Proof of \cref{lem:aprox_bion}]

Since
\[
\binom{n}{s}=\frac{n!}{s!(n-s)!}.
\]
For the \emph{lower} bound, we use the \emph{lower} bound for $n!$ and the \emph{upper}  bounds for $s!$ and $(n-s)!$ from \cref{Rob} to obtain,
\[
\binom{n}{s}
\;\ge\;
\frac{
\sqrt{2\pi}\, n^{\,n+\frac12} e^{-n+\frac{1}{12n+1}}
}{
\Bigl(\sqrt{2\pi}\, s^{\,s+\frac12} e^{-s+\frac{1}{12s}}\Bigr)
\Bigl(\sqrt{2\pi}\, (n-s)^{\,n-s+\frac12} e^{-(n-s)+\frac{1}{12(n-s)}}\Bigr)
}.
\]
Canceling the common $e^{-n}$ with $e^{-k}e^{-(n-k)}$ and collecting powers we get,
\begin{align*}
\binom{n}{s}
&\;\ge\;
\frac{1}{\sqrt{2 \pi s(1-\tfrac{s}{n})}}
  \cdot 
  \frac{n^n}{s^s (n-s)^{\,n-s}}
\cdot
\exp\!\Bigl(
\frac{1}{12n+1}-\frac{1}{12s}-\frac{1}{12(n-s)}
\Bigr),\\
&\;\ge\ \frac{1}{\sqrt{2 \pi s(1-\tfrac{s}{n})}}
  \cdot 
  \frac{n^n}{s^s (n-s)^{\,n-s}}
\cdot e^{-1/6} \geq \frac{1}{\sqrt{ 8 s(1-\tfrac{s}{n})}}
  \cdot 
  \frac{n^n}{s^s (n-s)^{\,n-s}}
;
\end{align*}

which yields the stated lower bound. 
For the \emph{upper} bound, we use the \emph{upper} bound for $n!$ and the \emph{lower}  bounds for $s!$ and $(n-s)!$ from \cref{Rob} to obtain,
\[
\binom{n}{s}
\;\le\;
\frac{
\sqrt{2\pi}\, n^{\,n+\frac12} e^{-n+\frac{1}{12n}}
}{
\Bigl(\sqrt{2\pi}\, s^{\,s+\frac12} e^{-s+\frac{1}{12s+1}}\Bigr)
\Bigl(\sqrt{2\pi}\, (n-s)^{\,n-s+\frac12} e^{-(n-s)+\frac{1}{12(n-s)+1}}\Bigr)
},
\]
and again after cancellations and using the fact $\paren{\frac{1}{12n}-\frac{1}{12s+1}-\frac{1}{12(n-s)+1}}<0$ we get,
\begin{align*}
\binom{n}{s}
&\;\le\;
\frac{1}{\sqrt{2 \pi s(1-\tfrac{s}{n})}}
  \cdot 
  \frac{n^n}{s^s (n-s)^{\,n-s}}
\cdot
\exp\!\Bigl(
\frac{1}{12n}-\frac{1}{12s+1}-\frac{1}{12(n-s)+1}
\Bigr),\\
&\;\le\; \frac{1}{\sqrt{2 \pi s(1-\tfrac{s}{n})}}
  \cdot 
  \frac{n^n}{s^s (n-s)^{\,n-s}}
\cdot
\\
\end{align*}
\end{proof}

\end{document}